\def\diag{\hbox{diag}}
\def\diag{\hbox{diag}}
\def\tr{\mathrm{tr}}
\def\vect{\hbox{vec}}
\newcommand{\Frob}[1]{\Vert#1\Vert_{\mathrm{F}}}
\newcommand{\op}[1]{\Vert#1\Vert_{\mathrm{op}}}
\newtheorem*{Proof*}{Proof}
\newtheorem{proposition}{Proposition}
\newtheorem{theorem}{Theorem}
\newtheorem{lemma}[theorem]{Lemma}
\def\G{{\cal G}}
\def\diag{\hbox{diag}}
\def\diag{\hbox{diag}}
\def\vect{\hbox{vec}}
\def\IG{\hbox{Inv-Ga}}
\def\Normal{\hbox{Normal}}
\def\bse{\begin{eqnarray*}}
\def\ese{\end{eqnarray*}}
\def\be{\begin{eqnarray}}
\def\ee{\end{eqnarray}}
\def\bqe{\begin{equation}}
\def\eq{\end{equation}}
\def\trans{^{\rm T}}
\def\b1e{{\mathbf e}}
\newcommand{\beqa}{\begin{eqnarray*}}
\newcommand{\eeqa}{\end{eqnarray*}}
\newcommand{\beqn}{\begin{eqnarray}}
\newcommand{\eeqn}{\end{eqnarray}}
\newcommand{\baa}{\begin{array}}
\newcommand{\eaa}{\end{array}}
\newcommand{\bcc}{\begin{center}}
\newcommand{\ecc}{\end{center}}
\newcommand{\btab}{\begin{tabular}}
\newcommand{\etab}{\end{tabular}}
\newcommand{\lb}{\label}
\newcommand{\iy}{\infty}
\newcommand{\mcC}{\mathbb{C}}
\newcommand{\mcD}{\mathcal{D}}
\newcommand{\mcS}{\mathcal{S}}
\newcommand{\Var}{{\rm Var}}
\newcommand{\ep}{\epsilon}
\newcommand{\bSigma}{\boldsymbol{\Sigma}}
\newcommand{\bLambda}{\boldsymbol{\Lambda}}
\newcommand{\blambda}{\boldsymbol{\lambda}}
\newcommand{\bDelta}{\boldsymbol{\Delta}}
\newcommand{\bxi}{\boldsymbol{\xi}}
\newcommand{\bGamma}{\boldsymbol{\Gamma}}
\newcommand{\bkappa}{\boldsymbol{\kappa}}
\newcommand{\bUpsilon}{\boldsymbol{\Upsilon}}
\newcommand{\bmu}{\boldsymbol{\mu}}
\newcommand{\bzeta}{\boldsymbol{\zeta}}
\newcommand{\bOmega}{\boldsymbol{\Omega}}
\newcommand{\bK}{\boldsymbol{K}}
\newcommand{\bA}{\boldsymbol{A}}
\newcommand{\bB}{\boldsymbol{B}}
\newcommand{\bC}{\boldsymbol{C}}
\newcommand{\bD}{\boldsymbol{D}}
\newcommand{\bV}{\boldsymbol{V}}
\newcommand{\bG}{\boldsymbol{G}}
\newcommand{\bH}{\boldsymbol{H}}
\newcommand{\bM}{\boldsymbol{M}}
\newcommand{\bL}{\boldsymbol{L}}
\newcommand{\bI}{\boldsymbol{I}}
\newcommand{\bS}{\boldsymbol{S}}
\newcommand{\bE}{\boldsymbol{E}}
\newcommand{\bF}{\boldsymbol{F}}
\newcommand{\bP}{\boldsymbol{P}}
\newcommand{\bQ}{\boldsymbol{Q}}
\newcommand{\bR}{\boldsymbol{R}}
\newcommand{\bW}{\boldsymbol{W}}
\newcommand{\bU}{\boldsymbol{U}}
\newcommand{\bX}{\boldsymbol{X}}
\newcommand{\bY}{\boldsymbol{Y}}
\newcommand{\bZ}{\boldsymbol{Z}}
\newcommand{\bzero}{\boldsymbol{0}}
\newcommand{\CC}{{\mathbb C}}
\newcommand{\EE}{{\mathbb E}}
\newcommand{\RR}{{\mathbb R}}
\numberwithin{equation}{section}
\theoremstyle{plain}
\newcommand{\Appendix}
{%\appendix
\def\thesection{Appendix~\Alph{section}}
\def\thesubsection{A.\arabic{subsection}}
}
\title{Relational Graph in Vector Autoregression: A Case Study on the Effect of the Great Recession on Connectivity of Economic Indicators}
\author{Arkaprava Roy\footnote{Department of Biostatistics, University of Florida, {arkaprava.roy@ufl.edu}}, Anindya Roy\footnote{Department of Mathematics and Statistics, University of Maryland Baltimore County, {anindya@umbc.edu}} and Subhashis Ghosal\footnote{Department of Statistics, North Carolina State University, {sghosal@ncsu.edu}}}
\begin{document}
\maketitle

\begin{abstract}

Under a high-dimensional vector autoregressive (VAR) model, we propose a way of efficiently estimating both the stationary graph structure between the nodal time series and their temporal dynamics. The framework is then used to make inferences on the change in interdependencies between several economic indicators due to the impact of the Great Recession, the financial crisis that lasted from 2007 through 2009. There are several key advantages of the proposed framework; (1) it develops a reparametrized VAR likelihood that can be used in general high-dimensional VAR problems, (2) it strictly maintains causality of the estimated process, making inference on stationary features more meaningful and (3) it is computationally efficient due to the reduced rank structure of the parameterization. We apply the methodology to the seasonally adjusted quarterly economic indicators available in the FRED-QD database of the Federal Reserve. The analysis essentially confirms much of the prevailing knowledge about the impact of the Great Recession on different economic indicators. At the same time, it provides deeper insight into the nature and extent of the impact on the interplay of the different indicators. We also contribute to the theory of Bayesian VAR by showing the consistency of the posterior under sparse priors for the parameters of the reduced rank formulation of the VAR process. 
 \\
    {\bf Keywords:} Graphical model; reduced rank VAR; Schur stability; sparse prior; stationary graph; 
\end{abstract}

\section{Introduction}
Large dynamic systems like the US economy go through 
periods of upturns and downturns. However, extreme events such as the Great Recession and the COVID-19 epidemic cause upheavals beyond the reach of normal cycles. The Great Recession disrupted traditional and well-understood relationships among U.S. economic indicators, leading to unconventional dynamics in and among economic, labor, and financial markets during and after the event. Despite several government interventions, the sheer magnitude of the recession made recovery extremely hard. It was posited that many of the market indicators went through permanent structural changes as the economy emerged from the recession. 
The recession not only had a disparate impact on different market sectors but also restructured their inter-dependencies, altering how different sectors of the broader US economy interacted. In the aftermath, with accruing data, researchers were naturally interested in ascertaining the nature of the changes, not only in the individual indicators but also in their interconnections. We investigate the hypothesis of pre- and post-interconnections being the same by studying the partial correlation graph among the economic indicators for different market sectors.

In periods between successive extreme events, the economic system usually follows a more predictable set of dynamics that can be modeled as part of a stationary process (after suitable transformations). To analyze an event's impact, one can use independent models to represent the stationary dynamics before and after the event and compare the parameters of the two models within a Bayesian framework. A change point model is often used to determine structural changes in the time series dynamics; \cite{bai2023multiple}. However, for a significant and catastrophic event like the Great Recession or the COVID-19 pandemic, one typically requires multiple change points to adequately capture the effect of all the shocks to the system. The epochs for the multiple change points are generally unknown. During the Great Recession, the regulatory agencies took several actions whose dates and times are known. However, several additional shocks to the system contributed to the downturn. Accurate testing of hypotheses on structural changes could be challenging without knowing the number and location of change points. Moreover, the period during the event will have very different dynamics from the stable dynamics to which the driving indicators are expected to return after the event. 
The advantage of a Bayesian paradigm is that we can effectively draw inferences on the impact by studying the posterior of the differences in model parameters from the before and after periods, thereby identifying and quantifying the changes induced by the disruptive event. This approach allows one to quantify the shift in the economic system and glean insight into how the event has changed the underlying dynamics of the system.

 We use the FRED-QD data \citep{mccracken2020fred} for our investigation. The data comprise 248 different economic time series observed at a quarterly frequency. The FRED-QD time series has often been modeled using vector autoregression or its variants \cite{mccracken2020fred, chan2023large,  samadi2024reduced}. We use the vector autoregression to model the data and investigate the change in the interconnections among the different component series within the VAR framework. Specifically, we study the change in the stationary precision matrix due to the great recession period (December 2007–-June 2009) in a graphical model set up. The graphical model is ideal for studying the interconnection among the economic indicators and any potential changes that may have occurred post-recession. While a change point analysis in vector autoregression can test hypotheses of structural changes, for events spanning a long time window including multiple time points, such as the Great Recession, the specification of a point null hypothesis may be tenuous. Moreover, assumptions such as stability and stationarity are important properties to analyze under a VAR framework \citep{wang2022high, zheng2024interpretable}.
 Thus, we focus on investigating the difference in model parameters between the pre- and the post-recession periods, where the assumption of stationarity (after transformation) may be more meaningful. Given the duration of the crisis, the periods before and after the recession can be reasonably assumed to be independent for VAR modeling purposes. We assume that the underlying dynamics of the economy have recovered to a stable state in the period following the recession, possibly to a different stable state compared to before the financial crisis. We compare independent causal VAR models from the two periods in a Bayesian framework and make inference on the changes in the stationary graphs given by the corresponding stationary precision matrices. The joint analysis of the before and after states is somewhat reminiscent of usual difference-in-difference analysis \cite{angrist2009mostly} done to evaluate the impact of an extreme shock to a dynamical system.

Graphical models are popular models that encode scientific linkages between variables of interest through a conditional independence structure and provide a parsimonious representation of the joint probability distribution of the variables, mainly when the number of variables is large. Thus, learning the graph structure underlying the joint probability distribution of an extensive collection of variables is an important problem, and it has a long history. When the joint distribution is Gaussian, learning graphical structure can be done by estimating the precision matrix leading to the popular Gaussian Graphical Model (GGM); even in the non-Gaussian case one can think of encoding the graph using the partial correlation structure and learn the Partial Correlation Graphical Model (PCGM) by estimating the precision matrix. However, estimating the precision matrix efficiently from the sample can be challenging when the sample has temporal dependence. If estimating the graphical structure is the main objective, the temporal dependence can be treated as a nuisance feature and ignored in graph estimation. When the temporal features are also important to learn, estimating the graph structure and the temporal dependence simultaneously can be considerably more complex.

There are several formulations of graphical models for time series. A time series graph could be one with the nodes representing the entire coordinate processes. In particular, if the coordinate processes are jointly stationary, this leads to a  `stationary graphical' structure.  Conditional independence in such a graph can be expressed equivalently in terms of the absence of partial spectral coherence between two nodal series; see \cite{Dahlhaus2000}. Estimation of such stationary graphs has been investigated by \cite{jung2015graphical},  \cite{basu2015network}, \cite{ma2016joint}, \cite{fiecas2019spectral} and \cite{Basu2023}.  

A weaker form of conditional coding for stationary multivariate time series is a `contemporaneous stationary graphical' structure, where the graphical structure is encoded in the marginal precision matrix. In a contemporaneous stationary graph, the nodes are the coordinate variables of the vector time series. 
If the time series is Gaussian, this leads to contemporaneous conditional independence among the coordinates of the vector time series. This is the GGM structure, which has the appealing property that the conditional independence of a pair of variables given the others is equivalent to the zero value of the corresponding off-diagonal entry of the precision matrix. Hence, graph learning can be achieved via the estimation of precision matrices. A rich literature on estimating the graphical structure can be found in classical books like \cite{Lauritzen1996, koller2009probabilistic}; see also \cite{Maathuis2019} and references therein for more recent results.  When the distribution is specified only up to the second moment, common in the study of second-order stationary time series, one could learn the contemporaneous partial correlation graphical structure by estimating the stationary precision matrix. In this paper, we use the contemporaneous precision matrix of a stationary time series to estimate the graph along the line of \cite{qiu2016joint,p:zha-17}. We use a vector autoregression (VAR) model to model the temporal dynamics.

To study the graphical relations and changes within the macroeconomic series, this paper combines two popular models: the PCGM for graph estimation and the VAR model for estimating the temporal dynamics in vector time series. 
In the context of our specific investigation, we require that the processes over the study periods are stable, as encoded by the causality of the VAR process. Also, given the high dimensionality of the problem, dimension reduction in the form of reduced rank VAR with sparse precision matrix is desired. Modeling the partial correlation graph and the temporal dependence with the VAR structure simultaneously is challenging, particularly under constraints such as reduced rank and causality on the VAR model and sparsity on the graphical model. This is achieved in this paper via a novel parameterization of the VAR process. The main contribution of the paper is a methodology that allows for meeting the following two challenging objectives simultaneously:
\begin{enumerate}
    \item [(i)] Estimating the contemporaneous stationary graph structure under a sparsity constraint.
    \item  [(ii)] Estimating VAR processes with a reduced rank structure under causality constraints.
\end{enumerate}
A novelty of the proposed approach is that while developing the methodology for performing the above two tasks, we can
\begin{enumerate}
    \item [(a)] develop a recursive computation scheme for computing the reduced-rank VAR likelihood through low-rank updates; 
    \item [(b)] establish posterior concentration under priors based on the new parameterization.
\end{enumerate}

\section{FRED-QD data and Exploratory Analysis}
\label{sec: EDA}
We study the change in the economic output due to shocks from the Great Recession period (December 2007-June 2009) using the FRED-QD data \citep{mccracken2020fred}. This dataset contains a total of 248 quarterly frequency series that are further divided into 14 disjoint groups, namely 1) NIPA (23 variables); 2) Industrial Production (16 variables);
3) Employment and Unemployment (50 variables); 4) Housing (13 variables); 5) Inventories, Orders, and Sales (9 variables); 6) Prices (48 variables); 7) Earnings
and Productivity (14 variables); 8) Interest Rates (19 variables); 9) Money and Credit (14 variables); 10) Household Balance Sheets (9 variables); 11) Exchange
Rates (4 variables); 12) Other (2 variables); 13) Stock Markets (3 variables); and 14) Non-Household Balance Sheets (13 variables). The numbers in parentheses indicate the variable counts after excluding those with missingness. The group `other' only contains two variables and is thus excluded from the analysis. The Unemployment and the Prices groups have close to 50 series. Since the number of time points is limited and the groups can be naturally split into further economic subgroups, we decided to split the Unemployment and the Price groups into two parts each. The parts of the Unemployment group comprise all series related to Employment volume (22 variables) (Employment) and those related to unemployment and workforce participation (Unemployment and Participation). In contrast, the parts of the Price group comprise series related to Personal Consumption and Expenditure, with 20 variables (Prices(PCE)), and the prices series related to Consumer Price Indices, with the remaining 28 series (Prices(CPI)). Thus, in our analysis, we have 15 different groups. 
While estimating the post-recession parameter, we extract the data for 43 time points from July 2009 to January 2020, avoiding the start of the COVID-19 pandemic. We extract the data for 43 time points preceding December 2007 for the pre-recession parameter estimation to have comparable lengths. As a pre-processing step, we use transformations for each 248 time series as outlined in \cite{mccracken2020fred}.

As an exploratory analysis, we look at the sample estimates of the stationary precision matrices for the time series within each group and each period. Specifically, we perform an informal yet illustrative comparison of each economic group's precision matrices from before and after the recession.  Figure~\ref{fig:precision1} shows the empirical precision estimates for the groups NIPA, Employment, Housing, Prices (CPI), and Money and Credits from the pre- and post-recession periods. The images for the remaining groups are given in Section 3 of the supplement.  From the estimates, it is clear that the great recession's impact was substantial across most of the economic sectors. It is expected to see pervasive changes in the interdependence among the economic indicators since the Great Recession is known to have affected every aspect of the economy \cite{Fred}. This is apparent from the estimated precision matrices. Some sectors are impacted more than others, and the interdependencies among the indicators show more resilience in some sectors, such as exchange rates. The exchange rates showed some degree of resistance; \cite{fenn2009dynamic} demonstrates that certain exchange rate interdependencies remained stable, while others evolved significantly during the credit crisis, highlighting the resilience and adaptability of specific currency relationships. These facts are borne out somewhat in a visual inspection of the empirical estimates of the precision matrices from periods before and after the recession. 
However, a more rigorous analysis is needed to substantiate or refute such hypotheses. 

\begin{figure}[htbp]
\centering
\subfigure{\includegraphics[width = 1\textwidth, trim=1cm 1cm 0cm 0cm, clip=true]{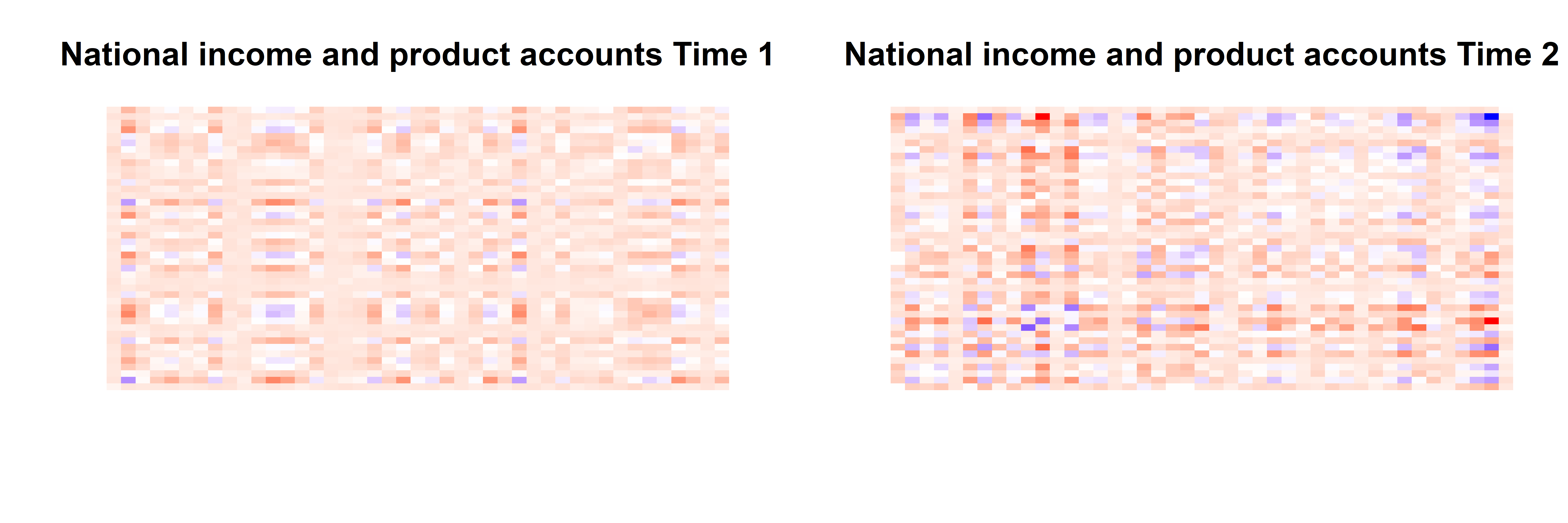}}
\\
\vskip-25pt
\subfigure{\includegraphics[width = 1\textwidth, trim=1cm 1cm 0cm 0cm, clip=true]{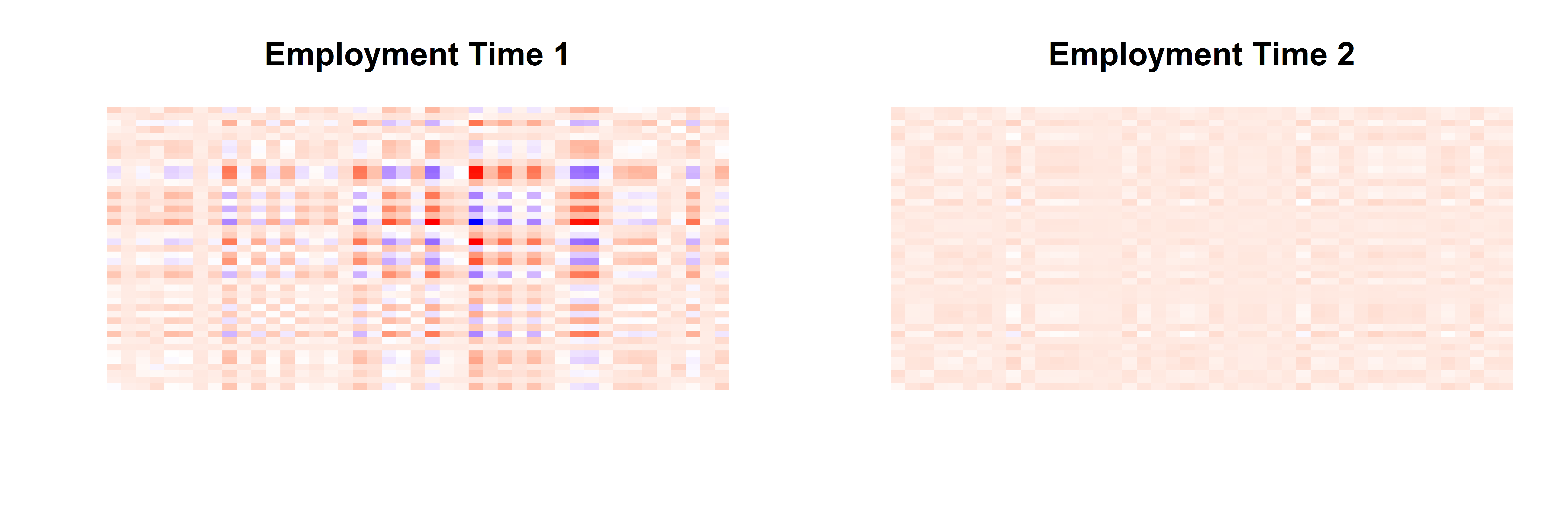}}
\\
\vskip-25pt
\subfigure{\includegraphics[width = 1\textwidth, trim=1cm 1cm 0cm 0cm, clip=true]{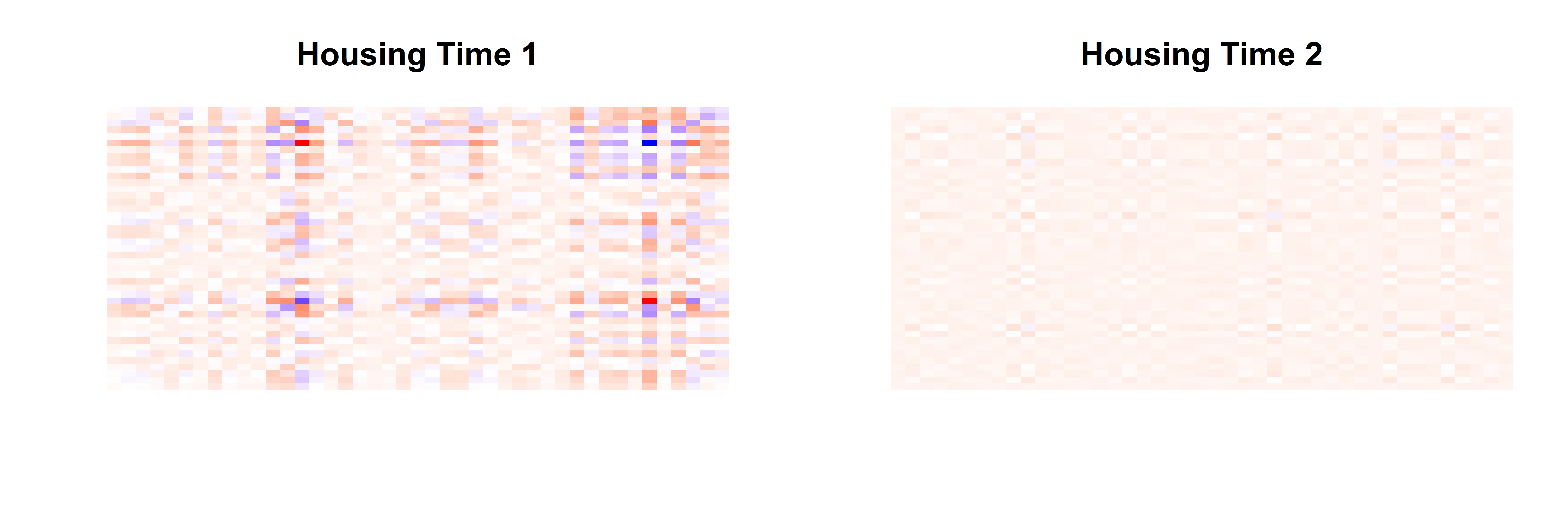}}
\\
\vskip-25pt
\subfigure{\includegraphics[width = 1\textwidth, trim=1cm 1cm 0cm 0cm, clip=true]{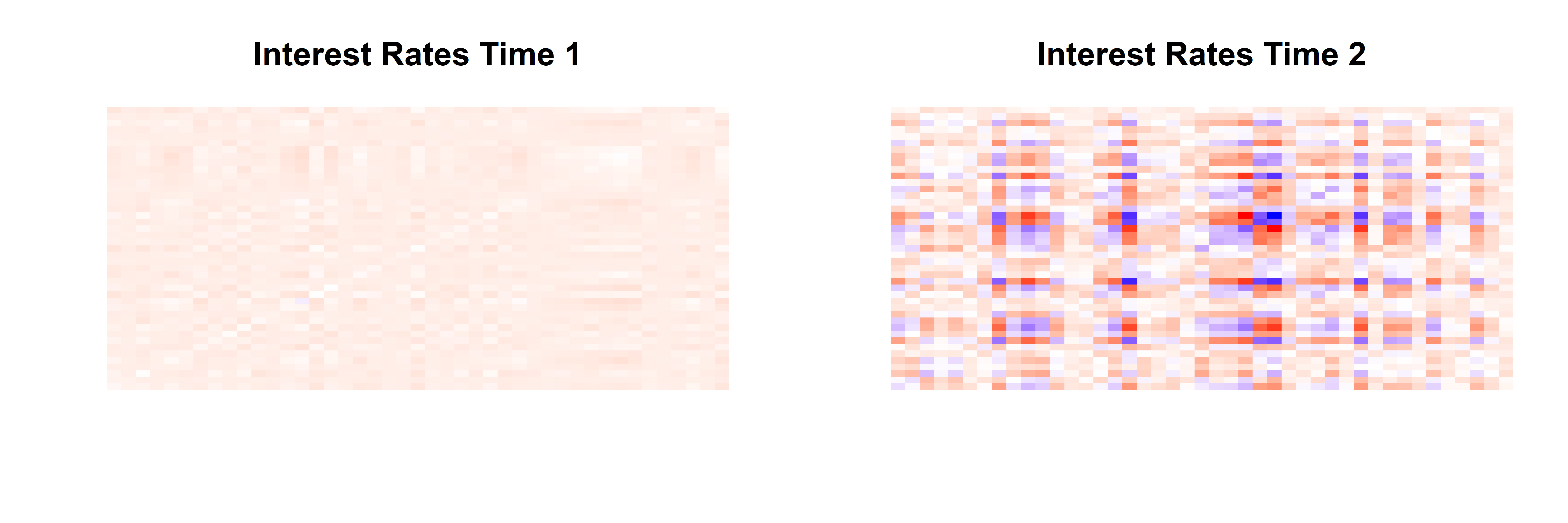}}
\\
\vskip-25pt
\subfigure{\includegraphics[width = 1\textwidth, trim=1cm 1cm 0cm 0cm, clip=true]{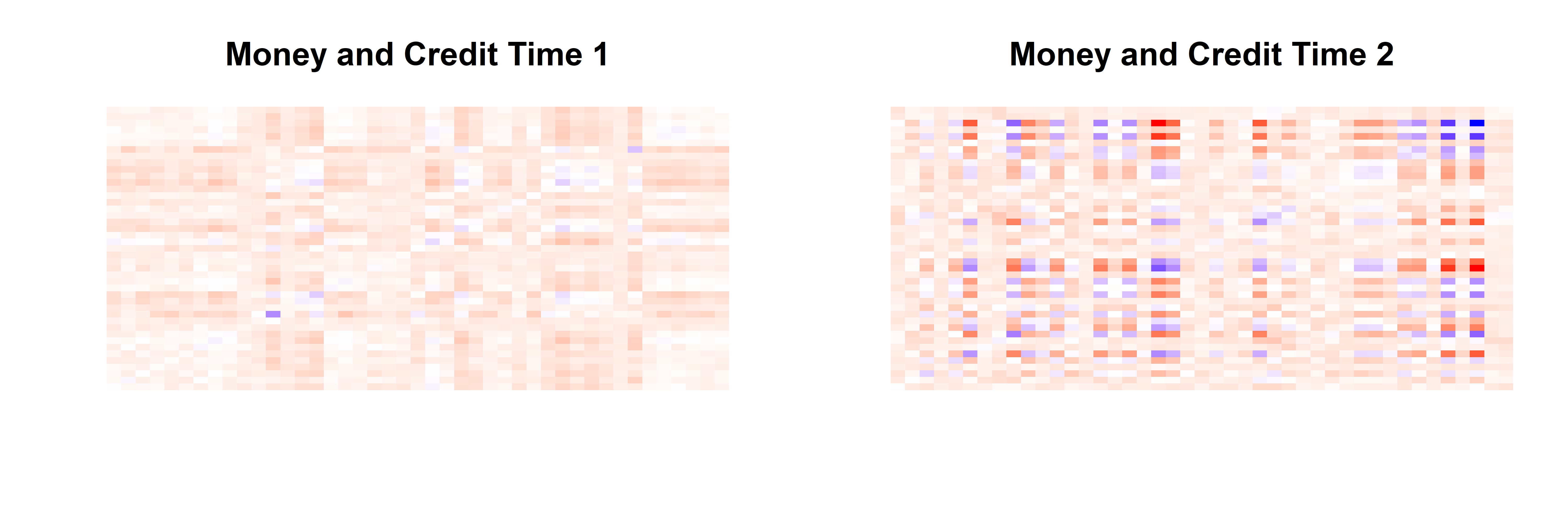}}
\caption{Estimated sample precisions for some example groups}
\label{fig:precision1}
\end{figure}

We use existing modeling tools to perform a model-based analysis to estimate changes in these group-specific time series before and after the recession period. There is evidence that vector autoregressive models, after suitable transformations, adequately describe stationary dynamics in most macroeconomic series \cite{banbura2010large}.  Given that some of the groups have more than twenty series, and that the period of the series is limited, VAR modeling needs to be augmented with some dimension-reduction technique before the entire group can be modeled jointly. To this end, we apply the Sparse-VAR(1) model to the before and after data fragments using the R package {\tt sparsevar} \citep{sparsevarR} with the default elastic net penalty.  To illustrate how the model fits from the pre- and post-recession periods differ, we look at the change in the estimated marginal precision, computed based on the estimated VAR parameters. To this end, for each FRED-QD group $g$, we compute $\hat{\bOmega}_{g,b}$, the precision matrix from before the recession, and $\hat{\bOmega}_{g,a} $, the precision matrix from after the recession period. We then compute the precision difference matrix ${\hat{\bOmega}}^d_g = {\hat{\bOmega}}_{g,a} - {\hat{\bOmega}}_{g,b}$. We then define a normalized edge difference value by dividing the off-diagonal entries of the precision difference matrix by the square root of the associated diagonal entries of the sum of the precision matrices, ${\hat{\bOmega}}^s_g = {\hat{\bOmega}}_{g,b} + {\hat{\bOmega}}_{g,a}$. 
Specifically, we define the scale-free edge-difference parameter as 
\be
\theta_{g}(i,j) = {\hat{\Omega}}^d_{g,i,j}/[{\hat{\Omega}}^s_{g,i,i} {\hat{\Omega}}^s_{g,j,j}]^{1/2}
\lb{eq:edge_diff}
\ee
where for a matrix $\bA$, $A_{i,j}$ denotes the $(i,j)$th entry of $\bA$. 
Subsequently, we report the percentage of edge-difference parameters $\theta_{g,i,j}$ that were bigger than a threshold $\tau$ in absolute value. For the present analysis, we choose $\tau  = 0.1$. The aggregate percentage of significant edge-difference values for each group is reported in Table~\ref{tab:EDA}. Although the reported change scores depend on the choices of $\tau$, the relative rankings of the groups from high to low changes remained stable even for other choices.

\begin{table}[ht]
\centering
\caption{Change from pre to post-recession for different economic groups based on VAR(1) estimates}
\begin{tabular}{lr}
  \hline
 & Change-score \\ 
  \hline
National income and product accounts & 0.05 \\ 
  Industrial Production & 0.22 \\ 
  Employment & 0.06 \\ 
   Unemployment and Participation & 0.10 \\ 
  Housing & 0.21 \\ 
  Inventories, Orders, and Sales & 0.33 \\ 
  Prices (PCE)  & 0.06 \\ 
   Prices (CPI) & 0.07 \\ 
  Earnings
and Productivity & 0.04 \\ 
  Interest Rates & 0.13 \\ 
  Money and Credit & 0.10 \\ 
  Household Balance Sheets & 0.06 \\ 
  Exchange
Rates & 0.17 \\ 
  Stock Markets & 0.33 \\ 
  Non-Household Balance Sheets & 0.23 \\ 
   \hline
\end{tabular}
\label{tab:EDA}
\end{table}

Some groups show appreciable differences from pre- to post-recession, and these findings align with the visual inspection of the empirical precision estimates. However, contrary to the visual findings, several groups do not show significant differences between the pre- and post-recession SparseVar output. These are false negatives, as there is a body of economic research pointing to possibly structural changes in the interdependencies from pre- to post-recession in these component series. The SparseVar findings contradict what is known from economic theory and show the limitations of approaches for estimating changes in interdependencies among the variables when sample dependence is present.  The relationship among indicators in Housing is known to have been severely disrupted by the recession (mainly since the initial dynamics of the recessive economy are attributed to the collapse in the housing market), taking several years in the post-recession period to recover, but possibly at a different level. While the sparse VAR does detect some changes in the interdependencies in the housing market, the degree of change is severely underestimated. This highlights the need for a more flexible method for graph estimation in the presence of VAR-type dependence. Furthermore, the sparse VAR estimation procedure {\tt sparsevar} does not necessarily produce stationary estimates even when the working hypothesis is that the series, after the transformations recommended in \cite{mccracken2020fred}, are stable in the respective periods. This is another deficiency that is overcome by the proposed model. 

\section{Partial Correlation Graph Under Autoregression}

The partial correlation graph for a set of variables $\bX=(X_1,\ldots, X_d)$ can be identified by the precision matrix of $\bX$, that is, the inverse of the dispersion matrix of $\bX$. Two components $X_{j}$ and $X_{k}$ are conditionally uncorrelated given the remaining components if and only if the $(j,k)$th entry of the precision matrix of $\bX$ is zero. Then the relations can be expressed as a graph on $\{1,\ldots,p\}$ where $j$ and $k$ are connected if and only if $X_{j}$ and $X_{k}$ are conditionally correlated given the remaining components. Equivalently, an edge connects $j$ and $k$ if and only if the partial correlation between $X_{j}$ and $X_{k}$ is non-zero. 

In many contexts, the set of variables of interest evolves and is temporally dependent. If the process $\{\bX_t: t=0,1,2,\ldots\}$ is stationary, that is, the joint distributions remain invariant under a time-shift, the relational graph of $\bX_t$ remains time-invariant. 
A vector autoregressive (VAR) process provides a simple, interpretable mechanism for temporal dependence by representing the process as a fixed linear combination of itself at a few immediate time points plus an independent random error. It is widely used in time series modeling. In this paper, we propose a Bayesian method for learning the relational graph of a stationary VAR process.

Let $\bX_1, \ldots, \bX_T$ be a sample of size $T$ from a Vector Autoregressive process of order $p$, VAR$(p)$ in short, given by 
\be
\bX_t = \bmu + \bA_1 \bX_{t-1} + \cdots + \bA_p \bX_{t-p} + \bZ_t
\lb{eq:var}
\ee 
where $\bmu \in \RR^d$, $\bA_1, \ldots, \bA_p$ are $d\times d$ real matrices and $\bZ_t \sim \mathrm{WN}(\bzero, \bSigma)$ is a $d$ dimensional white noise process with covariance matrix $\bSigma$, that is, $\bZ_t$ are independent $\mathrm{N}_p(\bzero, \bSigma)$. We consider $p$ to be given, but in practice, it may not be known and may have to be assessed by some selection methods. 

Causality is a property that plays an important role in multivariate time series models, particularly in terms of forecasting. For a causal time series, the prediction formula includes current and past innovations, and hence, a causal time series allows a stable forecast of the future in terms of present and past data. However, the condition of causality imposes complex constraints on the parameters of the process, often making it extremely difficult to impose causality during the estimation process. A practical approach is to parameterize the constrained parameter space of causal processes in terms of unconstrained parameters and write the likelihood and the prior distribution in terms of the unconstrained parameters.

For a VAR$(p)$ time series defined by \eqref{eq:var}, causality is determined by the roots of the determinantal equation 
\begin{align} 
\det(\bA(z)) = 0 \mbox{ where } \bA(z) = \bI - \bA_1 z - \cdots - \bA_p z^p, \; z \in \CC, 
\end{align}
the matrix polynomial associated with the VAR equation. For the VAR model to be causal, all roots of the determinantal equation must lie outside the unit disc $\mcD = \{z \in \CC: |z| \leq 1\}$. When the roots of the determinantal equation lie outside the unit disc, the associated monic matrix polynomial ${\tilde{\bA}}(z) = z^p - \bA_1 z^{p-1} - \cdots - \bA_p$ is called {\it Schur stable.}
\cite{roy2019constrained} provided a parameterization of the entire class of Schur stable polynomials and thereby parameterized the space of causal VAR models. 

For convenience, we briefly describe the \cite{roy2019constrained} framework. \cite{roy2019constrained} noted that the VAR$(p)$ model \eqref{eq:var} is causal if and only if the block Toeplitz covariance matrix $\bUpsilon_{p}$ is positive definite, where 
\be 
\bUpsilon_{j} = \begin{bmatrix} \bGamma(0) & \bGamma(1) & \cdots & \bGamma(j) \\
\bGamma(1)^{\mathrm{T}} & \bGamma(0) & \cdots & \bGamma(j-1) \\
\vdots & \vdots & \ddots & \vdots \\
\bGamma(j)^{\mathrm{T}} & \bGamma(j-1)^{\mathrm{T}} & \cdots & \bGamma(0) \end{bmatrix}
\lb{eq:blockToeplitz}
\ee  
is the covariance matrix of $(j+1)$ consecutive observations $(\bX_t, \bX_{t-1}, \ldots, \bX_{t-j})^{\mathrm{T}}$ and  
\be 
\bGamma(h) = \EE[\bX_t - \EE(\bX_t))(\bX_{t-h}  - \EE(\bX_{t-h}))^{\mathrm{T}}]
\lb{eq:gamma}
\ee  
is the lag-$h$ autocorrelation matrix of the process. As shown in \cite{roy2019constrained}, this condition is equivalent to 
\be  
\bGamma(0) = \bC_0  \geq \bC_1 \geq \cdots \geq \bC_p = \bSigma ,
\lb{eq:CausalCond}
\ee 
where  $\bC_j = \mathrm{Var}(\bX_{j+1} |\bX_{j-1}, \ldots, \bX_1)$ are the the conditional dispersion matrices; here and below, we use the Lowner ordering: for two matrices $\bA, \bB$, $\bA \geq \bB$ means that $\bA - \bB$ is nonnegative definite. The condition  \eqref{eq:CausalCond} plays a central role in the formulation of our parameterization. Since the VAR parameters $\bA_1, \ldots, \bA_p, \bSigma$ can be expressed as a one-to-one map of the sequence $\bC_0 - \bC_1, \ldots, \bC_{p-1} - \bC_p, \bC_p$, the parameterization of the causal VAR process is achieved by parameterizing the nonnegative matrices of successive differences $\bC_{j-1} - \bC_j, j = 1, \ldots, p$, and the positive definite matrix $\bC_p$ in terms of unconstrained parameters. Several options for parameterizing nonnegative matrices in terms of unconstrained parameters are available in the literature. 

Our main objectives in this paper are to estimate the stationary precision matrix of the process, $[\Var(X_t)]^{-1} = \Omega = \bGamma^{-1}(0)$, under sparsity restrictions.  In the \cite{roy2019constrained} parameterization, the stationary variance $\bGamma(0)$ and hence $\bOmega$ are functions of the basic free parameters and hence not suitable for estimation of the graphical structure with desired sparsity properties. We suggest a novel modification of the previous parameterization that achieves the goal of parameterizing the graphical structure and the VAR correlation structure directly and separately, thereby facilitating the estimation of both components under the desired restrictions, even in higher dimensions.

\section{Reduced Rank Parameterization and Priors}

The main idea in the proposed parameterization is the separation of the parameters pertaining to the graphical structure, $\bOmega,$ and the parameters used to describe the temporal correlation present in the sample. The following result is essential in developing the parameterization. It follows easily from \eqref{eq:CausalCond}, but we state it formally due to its central nature in the new parameterization. 

\begin{proposition}
Let $\bX_t$ satisfying \eqref{eq:var} be a stationary vector autoregressive time series with stationary variance $\Var(\bX_t)=\bOmega^{-1}$, a positive definite matrix and error covariance matrix $\bSigma$. Then $\bX_t$ is causal if and only if  
\begin{align} 
\Omega \leq \bC_1^{-1} \leq \cdots \leq \bC_p^{-1} = \bSigma^{-1}, 
\end{align}
\lb{eq:NewParameterization}
where for any $j \geq 1,$ $\bC_j = \Var(\bX_{j+1} | \bX_{j}, \ldots, \bX_1)$.  
\lb{prop:NewParameterization} 
\end{proposition}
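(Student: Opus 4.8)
The plan is to obtain the stated chain of inequalities as a direct consequence of the causality characterization \eqref{eq:CausalCond} established in \cite{roy2019constrained}. That result asserts that $\bX_t$ is causal if and only if the conditional dispersion matrices satisfy $\bC_0 \geq \bC_1 \geq \cdots \geq \bC_p = \bSigma$, where $\bC_0 = \bGamma(0) = \Var(\bX_t) = \bOmega^{-1}$. Since the proposition merely re-expresses the same objects in inverted form, essentially all the structural work is done by \eqref{eq:CausalCond}; the remaining task is to pass to inverses while preserving the equivalence.

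First I would record that every $\bC_j$ in the chain is positive definite. This holds because the smallest element $\bC_p = \bSigma$ is positive definite by hypothesis, and the chain is monotone in the L\"owner order, so $\bC_j \geq \bC_p > 0$ for each $j$; equivalently, $\bC_0 = \bOmega^{-1} > 0$ anchors the top of the chain. Consequently each $\bC_j$ is invertible with a well-defined positive definite inverse, which guarantees that all the matrices $\bC_j^{-1}$ appearing in the statement actually exist.

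The key step is the order-reversing (antitone) property of matrix inversion on the cone of positive definite matrices: for positive definite $\bA, \bB$, one has $\bA \geq \bB$ if and only if $\bB^{-1} \geq \bA^{-1}$. Applying this to each adjacent pair $\bC_{j-1} \geq \bC_j$ yields $\bC_{j-1}^{-1} \leq \bC_j^{-1}$, and chaining these inequalities gives
\begin{align}
\bOmega = \bC_0^{-1} \leq \bC_1^{-1} \leq \cdots \leq \bC_p^{-1} = \bSigma^{-1}.
\end{align}
Because inversion is a bijection of the positive definite cone onto itself that reverses the ordering, the implication runs in both directions: the inverted chain holds if and only if the original chain \eqref{eq:CausalCond} holds, which in turn is equivalent to causality. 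This delivers the ``if and only if'' of the proposition.

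I do not anticipate a genuine obstacle, since the result is a corollary of \eqref{eq:CausalCond}. The only points requiring a moment of care are (i) confirming that the conditioning set in the definition $\bC_j = \Var(\bX_{j+1} \mid \bX_{j}, \ldots, \bX_1)$ used here matches the one underlying \eqref{eq:CausalCond}, so that the two sequences of conditional dispersion matrices coincide, and (ii) the positive-definiteness check above, which legitimizes every inverse. Once these are in place, the antitonicity of inversion finishes the argument.
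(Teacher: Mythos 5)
Your proof is correct and matches the paper's own (implicit) argument: the paper simply notes that Proposition~\ref{prop:NewParameterization} ``follows easily from \eqref{eq:CausalCond},'' i.e.\ from the L\"owner chain $\bC_0 \geq \bC_1 \geq \cdots \geq \bC_p = \bSigma$ of \cite{roy2019constrained}, exactly as you do via the antitonicity of inversion on the positive definite cone. Your added care about positive definiteness of each $\bC_j$ and the two-way direction of the equivalence is sound and fills in the details the paper leaves unstated.
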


\noindent {\bf Constraint-free parameterization:}
Based on Proposition~\ref{prop:NewParameterization}, for a causal VAR process the successive differences $\bC_j^{-1} - \bC_{j-1}^{-1}$ are nonnegative definite for $j = 1, \ldots, p$,  where $\bOmega = \bC_0^{-1}$. Hence, these successive differences and the precision matrix $\bOmega$ can be parameterized using an unrestricted parameterization that maps non-negative definite matrices to free parameters.

\subsection{Efficient Computation of the Likelihood}
\label{sec:likelihood_comp}
The computation of the likelihood is difficult for a multivariate time series. For a Gaussian VAR process, the likelihood can be computed relatively fast using the Durbin-Levison (DL) or innovations algorithm \citep{brockwell2009time}. However, under different parameterizations, the computational burden can increase significantly depending on the complexity of the parameterization. Moreover, the DL-type algorithm can still be challenging if the process dimension is high.

In models with many parameters,  a common approach is to seek low-dimensional sub-models that could adequately describe the data and answer basic inferential questions of interest.

Using the proposed parameterization of $\bOmega$ and the differences $\bC_{j}^{-1} - \bC_{j-1}^{-1}$, we provide a low-rank formulation of a causal VAR process that leads to an efficient algorithm for computing the likelihood based on a Gaussian VAR($p$) sample. The algorithm achieves computational efficiency by avoiding the inversion of large-dimensional matrices. 
The sparse precision matrix $\bOmega = \bC_0^{-1}$ is modeled as a separate parameter, allowing direct inference about the graphical structure  under sparsity constraints . The likelihood computation requires the inversion of only $r_j \times r_j $ matrices instead of $d\times d$ matrices. The last fact substantially decreases the computational complexity. In particular, when $r_j = r = 1$,  i.e. when the conditional precision updates, $\bC_{j}^{-1} - \bC_{j-1}^{-1}$,  are all rank-one, all components of likelihood computation can be done without matrix inversion or factorization. 

Before we proceed, we define some notations that are used throughout the article. 
The stationary variance matrix $\bUpsilon_j$ of $(\bX_t, \bX_{t-1} , \ldots, \bX_{t-j})^{\mathrm{T}},$ as defined in \eqref{eq:blockToeplitz}, will be written in the following nested structures
\be   
\bUpsilon_j = \begin{bmatrix} \bGamma(0) & \bxi_{j}^{\mathrm{T}} \\
\bxi_{j} & \bUpsilon_{j-1} \end{bmatrix} = \begin{bmatrix} \bUpsilon_{j-1} & \bkappa_{j} \\
\bkappa_{j}^{\mathrm{T}} &  \bGamma(0) \end{bmatrix}
\ee   
where 
\be 
\bxi_j^{\mathrm{T}} = (\bGamma(1), \,\bGamma(2),\ldots,\bGamma(j)),\quad
\bkappa_j^{\mathrm{T}} = (\bGamma(j)^{\mathrm{T}}, \ldots,\bGamma(2)^{\mathrm{T}},\,\bGamma(1)^{\mathrm{T}}).
\lb{eq:XiKappa}
\ee 
Denote the Schur-complements of $\bUpsilon_{j-1}$ in the two representations as 
\be 
\bC_j = \bGamma(0) - \bxi_j^{\mathrm{T}} \bUpsilon_{j-1}^{-1} \bxi_j,\quad
\bD_j  = \bGamma(0) - \bkappa_j^{\mathrm{T}} \bUpsilon_{j-1}^{-1} \bkappa_j.
\lb{eq:SchurComplement}
\ee
Also, let $\phi_d(\cdot|\bmu,\bSigma)$ and $\Phi_d(\cdot|\bmu,\bSigma)$ denote the probability density and cumulative distribution function of the $d$-dimensional normal with mean $\bmu$ and covariance $\bSigma$. 
The basic computation will be to successively compute the likelihood contribution of the conditional densities $f(\bX_j|\bX_{j-1}, \ldots, \bX_1),\, j = 2, \ldots, T$. Let 
 $\bY_j = (\bX_j^{\mathrm{T}}, \ldots, \bX_{\max(1, j - p)}^{\mathrm{T}})^{\mathrm{T}}$. 
Under the assumption that the errors are Gaussian, i.e., $\bZ_j \sim \mathrm{N}_d(\bzero, \bSigma)$, and that they are independent, and writing $f(\bX_1| \bY_0) = f(\bX_1)$, $\mu_1 = \bzero$, $\bSigma_1 = \bC_0$  we have, 
\be  
f(\bX_j|\bY_{j-1}) = \phi_d(\bX_j|\bmu_j, \bSigma_j),
\ee  
where the conditional mean and variance are given by 
$\bmu_j = \bxi_{j-1}^{\mathrm{T}}\bUpsilon_{j-2}^{-1}\bY_{j-1}$, $\bSigma_j = \bC_{j-1}$ for any $ 2 \leq j \leq p,$ and $\bmu_j = \bxi_p^{\mathrm{T}}\bUpsilon_{p-1}^{-1}\bY_j$, $\bSigma_j = \bC_p$ for $p+1 \leq j \leq T$. 
Thus, the full Gaussian likelihood 
\begin{align}
 L = f(\bX_1)\prod_{j=2}^{\mathrm{T}} f(\bX_j | \bY_{j-1}) 
 \end{align}
can be obtained by recursively deriving the conditional means and variances from the constraint-free parameters describing the sparse precision matrix $\bOmega$ and the reduced-rank conditional variance differences, $\bC_j^{-1} - \bC_{j-1}^{-1}, j = 1, \ldots, p$. 
For parameterization of the conditional precision updates, we use a low-rank parameterization. Specifically, let  
\begin{align}
\bC_j^{-1} - \bC_{j-1}^{-1} = \bL_j\bL_j^{\mathrm{T}}, \mbox{ where }\bL_j \mbox{ are }d\times r_j \mbox{ matrices with }r_j \ll d. 
\end{align}
The rank factors $\bL_j$ are not directly solvable from the precision updates $\bC_j^{-1} - \bC_{j-1}^{-1}$. To complete the parameterization, we need to define a bijection. The bijection can be defined by  augmenting $\bL_j$ with $d\times r_j$ semi-orthogonal matrices $\bQ_j$ and define $\bL_j\bQ_j^T$ as a unique square-root of $\bC_j^{-1} - \bC_{j-1}^{-1}$, such as the unique symmetric square-root. This would lead to an identifiable parameterization. However, given that the $d\times r_j$ entries in $\bQ_j$ are constrained by the orthogonality requirement, 
we will use a slightly over-identified system of parameters to describe the reduced rank formulation of $\bC_j^{-1} - \bC_{j-1}^{-1}$, $j = 1, \ldots, p$. The over-identification facilitates computation enormously without creating any challenges in inference for the parameters of interest. 

Specifically, for each $j = 1, \ldots, p,$ along with $\bL_j$ we will use $d\times r_j$ parameters arranged in a $d\times r_j$ matrix, $\bK_j$, to describe the reduced-rank updates $\bC_j^{-1} - \bC_{j-1}^{-1}$. The exact definition of the matrices $\bK_j$ along with the steps for computing the causal VAR likelihood based on the basic constraint-free parameters $\bOmega$, $\bL_j$, and $\bK_j$, $j = 1, \ldots,p$ are given below. We assume the ranks $r_1, \ldots, r_p$ are specified and fixed. Also, unless otherwise specified, we will assume the matrix square roots for pd matrices to be the unique symmetric square-root. Then, given the parameters $\bOmega, \bL_1, \ldots, \bL_p$ and the initialization $\bC_0^{-1} = \bOmega,$ the following steps describe the remaining parameters $\bK_1, \ldots, \bK_p$ recursively  along with recursive computation of the components $f(\bX_j | \bY_{j-1})$ of the VAR likelihood.\\ 

\noindent{\bf Recursive Computation of the Reduced Rank VAR Likelihood}

For $j=1, \ldots, p,$ 
\begin{enumerate}
    \item Since $\bC^{-1}_j - \bC_{j-1}^{-1} = \bL_j \bL_j^T, $  we have $\bC_j^{-1} = \bC_{j-1}^{-1} + \bL_j \bL_j^{\mathrm{T}}.$
 
 \item
 Using the Sherman-Woodbury-Morrison (SWM) formula for partitioned matrices, 
$  
    \bC_j = \bC_{j-1} - \bU_j \bU_j^{\mathrm{T}},
$    
where 
\be \bU_j = \bC_{j-1}\bL_j(\bI + \bL^{\mathrm{T}}_j \bC_{j-1}\bL_j)^{-1/2}
\lb{eq:Uj}
\ee
Note that,  
\beqa  
\bU_j^{\mathrm{T}}\bC_{j-1}^{-1}\bU_j  &=& (\bI + \bL^{\mathrm{T}}_j \bC_{j-1}\bL_j)^{-1/2} \bU_j^{\mathrm{T}}\bC_{j-1}\bC_{j-1}^{-1}\bC_{j-1} (\bI + \bL^{\mathrm{T}}_j \bC_{j-1}\bL_j)^{-1/2}\\
&=& (\bI + \bL^{\mathrm{T}}_j \bC_{j-1}\bL_j)^{-1/2} \bL^{\mathrm{T}}_j \bC_{j-1}\bL_j (\bI + \bL^{\mathrm{T}}_j \bC_{j-1}\bL_j)^{-1/2}.
\eeqa  
Since for a positive definite matrix $\bA$, $\|(\bI + \bA)^{-1/2}\bA(\bI + \bA)^{-1/2}\| = \frac{\|\bA\|}{1 + \|\bA\|} < 1,$ 
$\bU_j$ satisfies the restriction $\|\bU_j^{\mathrm{T}}\bC_{j-1}^{-1}\bU_j \| < 1.$  
\item
Define 
\be  
\bW_j = \bGamma(j)^{\mathrm{T}} - \bxi_{j-1}\bUpsilon_{j-2}^{-1}\bkappa_{j-1},
\lb{eq:Wj}
\ee  
with $\bW_1 = \bGamma(1)^T.$ From \cite{roy2019constrained}, $
\bU_j\bU_j^{\mathrm{T}} = \bW_j\bD_{j-1}^{-1}\bW_j^{\mathrm{T}}.
$
and hence for any $\bV_j$ such that $\bV_j^{\mathrm{T}}\bD_{j-1}^{-1}\bV_j = \bI$ we have 
\begin{align}
\bW_j = \bU_j\bV_j^{\mathrm{T}}.
\end{align}
While $\bU_j$ are determined by $\bL_j$, the $\bV_j$ matrices are determined by the other parameters $\bK_j$. We construct $\bV_j$ from the basic parameters $\bK_j$ as, 
\be 
\bV_j = \bK_j(\bK_j^{\mathrm{T}}\bD_{j-1}^{-1}\bK_j)^{-1/2}.%\mathrm{GS}(\bK_j, \bD_{j-1}^{-1})
\lb{eq:Vj}
\ee
%$\bV_j=$
Note that $\bV_j^{\mathrm{T}} \bD_{j-1}^{-1} \bV_j = \bI$. 
\item 
Thus, entries of  the covariance matrices can be updated as  
\[
    \bGamma(j)^{\mathrm{T}} = \bU_j\bV_j^{\mathrm{T}} + \bxi_{j-1}^{\mathrm{T}}\bUpsilon_{j-2}^{-1} \bkappa_{j-1},
    \quad \bxi_j^{\mathrm{T}} = (\bxi_{j-1}^{\mathrm{T}}, \;\;\bGamma(j)), \qquad 
    \bkappa_j^{\mathrm{T}} = (\bGamma(j)^{\mathrm{T}}, \;\;\bkappa_{j-1}^{\mathrm{T}}). 
\]
\item
Applying SWM successively, we have
\beqa
    \bD_j &=& \bD_{j-1} - \bR_j\bR_j^{\mathrm{T}},\\
    \bD_j^{-1} &=& \bD_{j-1}^{-1} +  \bP_j\bP_j^{\mathrm{T}},
\eeqa
where $\bR_j = \bV_j(\bU_j^{\mathrm{T}}\bC_{j-1}^{-1}\bU_j)^{1/2}$ and   $ \bP_j = \bD_{j-1}^{-1}\bR_j(\bI - \bR_j^{\mathrm{T}}\bD_{j-1}^{-1}\bR_j)^{-1/2}$. For the last update to hold one needs $(\bI - \bR_j^{\mathrm{T}}\bD_{j-1}^{-1}\bR_j)$ to be positive definite. This follows from the fact that:
\[
\bR_j^{\mathrm{T}}\bD_{j-1}^{-1}\bR_j = (\bU_j^{\mathrm{T}}\bC_{j-1}^{-1}\bU_j)^{1/2}\bV_j^{\mathrm{T}}\bD_{j-1}^{-1} \bV_j (\bU_j^{\mathrm{T}}\bC_{j-1}^{-1}\bU_j)^{1/2}
= \bU_j^{\mathrm{T}}\bC_{j-1}^{-1}\bU_j,
\]
where the final equation holds because $\bV_j^{\mathrm{T}}\bD_{j-1}^{-1}\bV_j = \bI.$
Recalling  $\|\bU_j^{\mathrm{T}}\bC_{j-1}^{-1}\bU_j \| < 1,$
 the result follows. Thus, 
$
\bD_j^{-1}=\bD_{j-1}^{-1} + \bD_{j-1}\bV_j(\bI-\bU_j^{\mathrm{T}}\bC_{j-1}^{-1}\bU_j)^{-1}\bV_j^{\mathrm{T}}\bD_{j-1}.
$

\item
Finally, the $j$th conditional density in the likelihood is updated as 
\[
f(\bX_j \;| \;\bY_{j-1}) = \phi_d(\bX_j|\bxi_{j-1}^{\mathrm{T}}\bUpsilon_{j-2}^{-1}\bY_{j-1}, \bC_{j-1}).
\]
The determinant term can be updated recursively as
$
\det(\bC_j^{-1})  = \det(\bC_{j-2}^{-1})\det(\bI + \bL_{j-1}^{\mathrm{T}}\bC_{j-2}\bL_{j-1}).
$
\end{enumerate}

The subsequent updates $j = (p+1), \dots, T$, required to compute the rest of the factors of the likelihood,  can be obtained simply noting $\bmu_j = \bxi_p \bUpsilon_{p-1}^{-1}\bY_{j-1,p}$ and $\bC_{j-1}^{-1} = \bC_p^{-1}$ where $\bY_{j-1,p} = (\bX_{j-1}^T, \ldots, \bX_{j-p}^T)^T$

The over-parameterization of the reduced rank matrices is intentional; it reduces computational complexity. Consistent posterior inference on the set of identifiable parameters of interest is still possible, and the details are given in subsequent sections. For a general $d\times d$ matrix with rank $r$, the number of free parameters is $d^2 - (d-r)^2 = 2dr - r^2$ where $(d-r)^2$ is the codimension of the rank manifold of $d\times d$ rank $r$ matrices. In the proposed parameterization, we are parameterizing the rank $r_j$ updates $\bC_{j}^{-1} - \bC_{j-1}^{-1}$ using $2d r_j$ parameters in the matrix pair $(\bL_j, \bK_j).$ Thus, we have $r_j^2$ extra parameters that are being introduced for computational convenience. Typically, $r_j$ will be small, and hence the number of extra parameters will be small as well. 

\subsection{Rank-one Updates}

The case when the increments $\bC_{j}^{-1} - \bC_{j-1}^{-1}$ are rank one matrices, i.e. $r_j = 1$ for all $j$,  is of special interest. Then, the parameters $\bU_j, \bV_j$ are fully identifiable if one fixes the sign of the first entry in $\bU_j$. Moreover, in this case, the likelihood can be computed without inverting any matrices or computing square roots.

Specifically, the quantities from the original algorithm can be simplified and recursively computed. The key quantities to be computed at each iteration are the matrices $\bC_j, \bC_j^{-1}, \bD_j, \bD_j^{-1}, \bU_j, \bV_j.$ At the $j$th stage of recursion, Let $l_j$ and $k_j$ be  scalar quantities defined as $l_j = \bL^{\mathrm{T}}_j \bC_{j-1}\bL_j$ and $k_j = \bK_j^{\mathrm{T}} \bD_{j-1}^{-1} \bK_j$. Then, given the vectors $L_j, K_j$ 
\begin{eqnarray*}
\bC_j^{-1} &=& \bC_{j-1}^{-1} +  \bL_j \bL_j^{\mathrm{T}},\\
\bU_j &=& \frac{1}{\sqrt{(1 + z_j)}}\bC_{j-1}\bL_j, \\
\bC_j &=& \bC_{j-1} - \bU_j \bU_j^{\mathrm{T}},\\
   \bV_j &=& \frac{1}{k_j}\bK_j, \\
    \bD_j &=& \bD_{j-1} - \frac{l_j}{(1 + l_j)}\bV_j\bV_j^{\mathrm{T}}, \\
    \bD_j^{-1} &=& \bD_{j-1}^{-1} + \frac{1}{1 + l_j}\bD_{j-1}\bV_j\bV_j^{\mathrm{T}}\bD_{j-1}.
\end{eqnarray*}
When the dimension is large, the rank-one update parameterization can effectively capture a reasonable dependence structure while providing extremely fast updates in the likelihood computation. In the simulation section, we demonstrate the effectiveness of the rank-one update method via numerical illustrations. 

\subsection{Prior Specification}
\label{sec: prior}

If the model is assumed to be causal, the prior should charge only autoregression coefficients complying with causality restrictions while putting prior probability directly on the stationary covariance matrix. Presently, the literature lacks a prior that charges only causal processes with the prior on the precision $\bOmega$ as an independent component of the prior. Using our parametrizations, we achieve that. Recall that in our setup, $p$ is considered given; hence, no prior distribution is assigned to $p$. Also, the ranks $r_1, \ldots, r_p$ are specified. While all of the procedures described in this paper go through where the $p$ ranks are potentially different, we develop the methodology of low-rank updates by fixing $r_1 = \cdots = r_p = r$. 

We consider the modified Cholesky decomposition of $\bOmega=(\bI_p-\bE)\bF(\bI_p-\bE)^{\mathrm{T}}$,  where $\bE$ is strictly lower triangular,  $\bF=\diag(\bm{f})$ is diagonal with $\bm{f}$ the vector of entries.  
For sparse estimation of $\bSigma^{-1}$, we put sparsity inducing prior on $\bE$.
We first define a hard thresholding operator $H_{\lambda}(\bH)=(\!(h_{i,j}\mathbbm{1}\{|h_{i,j}|>\lambda\} )\!)$. Then we set, $\bE=H_{\lambda}(\bE_1)$, where $\bE_1$ is a strictly lower triangular matrix. For off-diagonal entries $\bE_1$, we let $e_{ij,1}\sim\mathrm{N}(0,\sigma^2_{e})$ and $\sigma^2_{e}\sim\IG(c_1,c_1)$ for $i<j$. The components $f_1,\ldots,f_p$ of $\bm{f}$ are independently distributed according to the inverse Gaussian distributions with density function 
$\pi_{d}(t) \propto t^{-3/2} e^{-(t-\xi)^2/(2t)}$, $t > 0$, for some $\xi>0$; see 
\citep{chhikara1988inverse}. This prior has an exponential-like tail near both zero and infinity. We put a weakly informative mean-zero normal prior with a large variance on $\xi$. While employing sparsity using hard thresholding, the modified Cholesky form is easier to work with. Lastly, we put a Uniform prior on $\lambda$.  
\begin{itemize}
    \item Conditional precision updates $\bL_j$ of dimension $d\times r$: We first build the matrix $\bLambda=[\blambda_1;\blambda_2;\ldots\blambda_p]$, where $\blambda_j=\vect(\bL_j)$ placing them one after the other and $p$ is a pre-specified maximum order of the estimated VAR model. Subsequently, we impose a cumulative shrinkage prior on $\bLambda$ to ensure that the higher-order columns are shrunk to zero. Furthermore, for each individual $\bL_j$ too, we impose another cumulative shrinkage prior to shrinking higher-order columns in $\bL_j$. Our prior follows the cumulative shrinkage architecture from \cite{bhattacharya2011sparse}, but with two layers of cumulative shrinkage on $\Lambda$ to shrink its higher-order columns as well as the entries corresponding to $\bL_j$'s with higher $j$: %It also allows for easy posterior computation:
$$\lambda_{\ell k}|\phi_{\ell k},\tau_{k}\psi_{k,\lceil{\ell}/{r}\rceil}\sim \mathrm{N}(0,\phi_{\ell k}^{-1}\tau_{k}^{-1}\psi_{k,\lceil\frac{\ell}{r}\rceil}^{-1}),$$
	$$\phi_{lk}\sim \mathrm{Ga}(\nu_1,\nu_1),\quad \tau_{k}=\prod_{i=1}^k\delta_{i},\quad \psi_{k,m} = \prod_{i=1}^{m}\delta^{(k)}_{i}$$
	$$\delta_{1}\sim \mathrm{Ga}(\kappa_{1}, 1),\quad\delta_{i}\sim \mathrm{Ga}(\kappa_{2}, 1)\textrm{ for }i\geq 2.$$
 $$\delta^{(k)}_{1}\sim \mathrm{Ga}(\kappa_{1}, 1),\quad\delta^{(k)}_{i}\sim \mathrm{Ga}(\kappa_{2}, 1)\textrm{ for }i\geq 2,$$
 where $\lceil x\rceil$ stands for the ceiling function that maps to the smallest integer greater than or equal to $x$, and Ga stands for the gamma distribution.
The parameters $\phi_{lk}$ control local shrinkage of the elements in $\Lambda$, whereas $\tau_k$ controls column shrinkage of the $k$-$th$ column. 
Similarly, $\psi_{k,\lceil{\ell}/{r}\rceil}^{-1})$ helps to shrink higher-order columns in $\bL_k$. Following the well-known shrinkage properties of multiplicative gamma prior, we let $\kappa_1=2.1$ and $\kappa_2=3.1$, which work well in all of our simulation experiments. However, in the case of rank-one updates, the $\psi_{k,m}$'s are omitted. Computationally, it seems reasonable to keep $\psi_{k,m}=1$ as long as $r$ is pre-specified to a small number.
    \item $\bK_j$ of dimension $d\times r$: We consider non-informative flat prior for the entries in $\bK_j$. The rank of $\bK_j$ is simultaneously specified with that of $\bL_j$. Hence, the cumulative shrinkage prior for $\bK_j$ is not needed.
\end{itemize}

\subsection{Posterior Computation} 
\label{sec:compuprec}

We use Markov chain Monte Carlo algorithms for posterior inference. The individual sampling strategies are described below. Due to the non-smooth and non-linear mapping between the parameters and the likelihood, it is convenient to
use the Metropolis-within-Gibbs samplers for different parameters. 

Adaptive Metropolis-Hastings (M-H) moves update the lower-triangular entries in the latent $\bE_1$ and the entries in $\bm{f}$. Due to the positivity constraint on $\bm{f}$, we update this parameter in the log scale with the necessary Jacobian adjustment. Specifically, we generate the updates from a multivariate normal, where the associated covariance matrix is computed based on the generated posterior samples. Our algorithm is similar to \cite{haario2001adaptive} with some modifications, as discussed below. The initial part of the chain relies on random-walk updates, as no information is available to compute the covariance. However, after the 3500th iteration, we start computing the covariance matrices based on the last $S$ accepted samples. The choice of 3500 is based on our extensive simulation experimentation. Instead of updating these matrices at each iteration, we update them once at the end of each 100th iteration using the last $S$ accepted samples. The value of $S$ is increased gradually. Furthermore, the constant variance, multiplied by the covariance matrix as in \cite{haario2001adaptive}, is tuned to maintain a pre-specified level of acceptance. The thresholding parameter $\lambda$ is also updated on a log scale with a Jacobian adjustment. \\
\begin{enumerate}
   \item Adaptive Metropolis-Hastings update for each column in $\bLambda$ and full conditional Gibbs updates for the other hyperparameters.
    \item Adaptive M-H update for $\bK_j$.
\end{enumerate}

To speed up the computation, we initialize $\bOmega$ to the graphical lasso \cite{friedman2008sparse} output using  {\tt glasso} R package based on the marginal distribution of multivariate time series at every cross-section, ignoring the dependence for a warm start. From this $\bOmega$, the modified Cholesky parameters $\bE$ and $\bF$ are initialized. We start the chain setting $p=\min(p_m, T/2)$, where $p_m$ is a pre-specified lower bound, and initialize the entries in $\bL_j$'s and $\bK_j$'s from $\Normal(0,1/j)$. At the 1000th iteration, we discard the $j$'s if the entries in $\bL_j$ have very little contribution. In the case of the M-H algorithm, the acceptance rate is maintained between 25\% and 50\% to ensure adequate mixing of posterior samples.

Recently, \cite{heaps2023enforcing} applied the parametrization from \cite{roy2019constrained} with flat priors on the new set of parameters and developed HMC-based MCMC computation using {\tt rstan}. Our priors, however, involve several layers of sparsity structures and, hence, are inherently more complex. We primarily rely on M-H sampling, as direct computation of the required gradients for HMC is difficult. We leave the exploration of more efficient implementation using {\tt rstan} to implement gradient-based samplers as part of possible future investigations.

\section{Posterior Contraction}
 
 In this section, we study the asymptotic properties of the posterior of the posterior distribution under some additional boundedness conditions on the support of the prior for the collection of all parameters $\bzeta:=(\bOmega,\bL_1,\ldots,\bL_p,\bK_1,\ldots,\bK_p)$. The corresponding true value is denoted by $\bzeta_0=(\bOmega_0,\bL_{1,0},\ldots,\bL_{p,0},\bK_{1,0},\ldots,\bK_{p,0})$ with the related true autoregression coefficients $\bA_{1,0},\ldots,\bA_{p,0}$. As the parameters $\bL_1,\ldots,\bL_p,$ $\bK_1,\ldots,\bK_p$ are not identifiable, the above true values may not be unique. Below, we assume that the true model parameters represent the assumed form with some choice of true values satisfying the required conditions. The posterior contraction rate is finally obtained for the identifiable parameters $\bOmega$ and $\bA_1,\ldots,\bA_p$.

\begin{description}
\item [(A1)] The common rank $r$ of $\bL_{1},\ldots,\bL_p,\bK_1,\ldots,\bK_p$ is given. 
\item [(A2)] The prior densities of all entries of $\bE, \bm{f},\bL_{1},\ldots,\bL_p,\bK_1,\ldots,\bK_p$ are positive at the true values $\bE_0, \bm{f}_0,\bL_{1,0},\ldots,\bL_{p,0},\bK_{1,0},\ldots,\bK_{p,0}$ of $\bE, \bm{f},\bL_{1},\ldots,\bL_p,\bK_1,\ldots,\bK_p$ respectively. 
\item [(A3)] The entries of $\bE, \bm{f},\bL_{1},\ldots,\bL_p,\bK_1,\ldots,\bK_p$ are independent, and their densities are positive, continuous. Further, the prior densities of the entries of $\bm{f},\bL_{1},\ldots,\bL_p$ have a common power-exponential tail (that, bounded by $C_0\exp[-c_0|x|^\gamma]$ for some constants $C_0,c_0,\gamma>0$). 
\item [(A4)] The entries of $\bm{f}$ are independent, bounded below by a fixed positive number, and have a common power-exponential upper tail.
\item [(A5)] The entries of $\bm{f}_0,\bE_0,\bL_{1,0},\ldots,\bL_{p,0},\bK_{1,0},\ldots,\bK_{p,0},\bA_{1,0},\ldots,\bA_{p,0}$ and the eigenvalues of $\bOmega_0$ and $\bSigma_0$ lie within a fixed interval, and those of $\bm{f}_0$ lie in the interior of the support of the prior for $\bm{f}$. 
\end{description}

Condition (A1) can be disposed of, and the rate in Theorem~\ref{thm:basic rate} will remain valid, but writing the proof will become more cumbersome. 
We note that the assumed bounds on $\bm{f}$ in Assumption (A4) ensure that the eigenvalues of $\bOmega$ are bounded below by a fixed positive number. This also ensures that the eigenvalues of $\bSigma^{-1}$ are also bounded below by a fixed positive number; that is, the eigenvalues of $\bSigma$ are bounded above, given the representation $\bSigma^{-1}=\bOmega+\sum_{j=1}^p \bL_j \bL_j^{\mathrm{T}}$ and all terms inside the sum are nonnegative definite. The condition that the entries of $\bm{f}$ are bounded below is not essential; it can be removed at the expense of weakening the Frobenius distance or by assuming a skinny tail of the prior density at zero and slightly weakening the rate, depending on the tail of the prior. We forgo the slight generalization in favor of clarity and simplicity. The condition can be satisfied with a minor change in the methodology by replacing the inverse Gaussian prior for $\bm{f}$ with a lower-truncated version, provided that the truncation does not exclude the true value. The last assertion will be valid unless the true precision matrix is close to singularity.

Let $\bY_T=\vect(\bX_1, \bX_2,\ldots,\bX_T)$. Recall that the joint distribution of $\bY_T$ is $dT$-variate normal with mean the zero vector and dispersion matrix $\bUpsilon_T$. 
Thus the likelihood is given by $Q_{\bzeta}=(\sqrt{2\pi\det(\bUpsilon_T)})^{-1/2} \exp[-\bY_T^{\mathrm{T}}\bUpsilon_T^{-1}\bY_T/2]$.  

\begin{theorem} 
\label{thm:basic rate}
Under Conditions (A1)--(A5), the posterior contraction rate for $\bOmega$ at $\bOmega_0$ and for the autoregression parameters $\bA_1,\ldots,\bA_p$ at $\bA_{1,0},\ldots,\bA_{p,0}$ with respect to the Frobenius distance is $d \sqrt{(\log T)/T}$. 
\end{theorem}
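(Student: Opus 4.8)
The plan is to derive the rate from the general theory of posterior contraction for dependent observations, verifying a prior-mass condition, a sieve/entropy condition, and the existence of exponentially powerful tests for the Gaussian time-series likelihood $Q_{\bzeta}$, and then transferring contraction in a statistical semimetric to the Frobenius distance on the identifiable targets $\bOmega$ and $\bA_1,\ldots,\bA_p$. Write $\epsilon_T=d\sqrt{(\log T)/T}$, and let $K_T(\bzeta_0,\bzeta)=\KL(Q_{\bzeta_0},Q_{\bzeta})$ and $V_T(\bzeta_0,\bzeta)$ be the Kullback--Leibler divergence and its variation between the two $dT$-variate centered normal laws $\mathrm{N}(\bzero,\bUpsilon_{T,0})$ and $\mathrm{N}(\bzero,\bUpsilon_T)$. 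Because both are centered Gaussians, $K_T$ and $V_T$ have closed forms in the eigenvalues of $\bUpsilon_{T,0}^{-1/2}\bUpsilon_T\bUpsilon_{T,0}^{-1/2}$, and it is this spectral structure, together with the conditional factorization of Section~\ref{sec:likelihood_comp}, that I would exploit throughout. The natural testing semimetric $d_T$ is the root-average-squared-Hellinger distance across the $T$ conditional factors $f(\bX_j\mid\bY_{j-1})$.

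The analytic core is a local equivalence lemma. Under the boundedness in (A5), causality forces the eigenvalues of both $\bUpsilon_T$ and $\bUpsilon_{T,0}$ into a fixed interval uniformly in $T$; this is precisely where Proposition~\ref{prop:NewParameterization} and the ordering $\bOmega\le\bC_1^{-1}\le\cdots\le\bSigma^{-1}$ enter, keeping every condition number bounded. On a fixed neighborhood of $\bzeta_0$ I would then establish the two-sided bound
\begin{align*}
c\,T\,\rho(\bzeta,\bzeta_0)^2 \;\le\; K_T(\bzeta_0,\bzeta),\; V_T(\bzeta_0,\bzeta),\; d_T^2(\bzeta,\bzeta_0) \;\le\; C\,T\,\rho(\bzeta,\bzeta_0)^2,
\end{align*}
where $\rho^2(\bzeta,\bzeta_0)=\Frob{\bOmega-\bOmega_0}^2+\sum_{j=1}^p\Frob{\bA_j-\bA_{j,0}}^2$ is the target discrepancy. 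The upper bound follows by expanding the Gaussian KL and using that the recursions of Section~\ref{sec:likelihood_comp} express the $\bGamma(h)$, the $\bC_j$, and hence $\bUpsilon_T$, as locally Lipschitz functions of $\bzeta$ with derivatives controlled by the fixed eigenvalue bounds; the lower bound follows because, under causality, the same map is a uniform immersion on the bounded set, so the statistical distance cannot collapse unless the identifiable parameters coincide.

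Granting the lemma, the remaining inputs are bookkeeping. For the prior-mass condition I would combine (A2)--(A3) (positive, continuous coordinate densities at the truth) with the upper bound above: a Frobenius box of radius $\asymp\epsilon_T$ about $\bzeta_0$ sits inside the ball $\{K_T\le T\epsilon_T^2,\, V_T\le T\epsilon_T^2\}$, and since the number of free coordinates is $O(d^2)$ (dominated by $\bOmega$, each $\bL_j,\bK_j$ contributing only $O(dr)$), the prior charges this box with log-mass $\gtrsim -d^2\log(1/\epsilon_T)\asymp -d^2\log T=-T\epsilon_T^2$, as required. For the sieve I would take boxes $\mcF_T=\{\|\bzeta\|\le M_T\}$ with $M_T$ growing slowly; the power-exponential tails in (A3)--(A4) give $\Pi(\mcF_T^c)\le e^{-c'T\epsilon_T^2}$, while the local equivalence turns the $\epsilon_T$-covering number of $\mcF_T$ in $d_T$ into that of a Euclidean box in $O(d^2)$ dimensions, so $\log N(\epsilon_T,\mcF_T,d_T)\lesssim d^2\log(M_T/\epsilon_T)\lesssim d^2\log T=T\epsilon_T^2$. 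This entropy bound yields the needed exponentially powerful tests for $d_T$ by the standard Hellinger-testing construction for Gaussian families with uniformly bounded eigenvalues. Feeding the three inputs into the master contraction theorem gives contraction at rate $\epsilon_T$ in $d_T$, and the lower bound of the equivalence lemma transfers this to Frobenius contraction of $\bOmega$ and $\bA_1,\ldots,\bA_p$; the non-identifiability of the $\bL_j,\bK_j$ is immaterial because these targets are identifiable functions of $\bzeta$.

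The hard part will be the local equivalence lemma, and within it the lower bound with correct dimension dependence: one must show that the $dT\times dT$ block-Toeplitz map $\bzeta\mapsto\bUpsilon_T$ is a \emph{uniform} immersion on the bounded parameter set, so that a small average statistical distance forces small $\rho$, with constants independent of $T$ and tracking $d$ correctly. Controlling the block-Toeplitz condition number uniformly in $T$ (guaranteed by causality through Proposition~\ref{prop:NewParameterization}) and propagating it through the recursions linking $\bzeta$, the $\bC_j$, and the $\bGamma(h)$ is the technical crux; the prior-mass, sieve, and testing steps are then comparatively routine.
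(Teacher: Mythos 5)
Your overall skeleton (prior mass, sieve, tests, then transfer to the identifiable parameters) is the same as the paper's, but the lemma you rest everything on --- the two-sided bound $c\,T\rho^2 \le K_T, V_T, d_T^2 \le C\,T\rho^2$ with constants uniform in $T$ --- is precisely what the paper's proof is engineered to avoid, and it cannot be established by the routine arguments you defer to. First, there is an internal inconsistency: your $d_T$ is a root-average-squared-Hellinger quantity, hence bounded by a constant, so $d_T^2$ cannot dominate $T\rho^2$ once $T\rho^2\to\infty$, which already happens at $\rho\asymp\epsilon_T$ since $T\epsilon_T^2=d^2\log T$; moreover the conditional factors $f(\bX_j\mid\bY_{j-1})$ have random, past-dependent means, so a "Hellinger distance between conditional factors" is not a deterministic semimetric without further integration. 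Second, and more fundamentally, the upper bound $\Frob{\bUpsilon_T-\bUpsilon_{T,0}}^2\lesssim T\rho^2$ with a $T$-free constant fails uniformly over the sieve: causality keeps the companion spectral radius below $1$ but not uniformly away from $1$, and the accumulated autocovariance perturbations then grow polynomially in $T$. This is visible in the paper's own estimates: Lemma~\ref{lem:around_the_truth} only yields $\Frob{\bUpsilon_{1,T}-\bUpsilon_{0,T}}^2\lesssim \epsilon^2 T^3$, and the proof of Lemma~\ref{lem:bigsigma} shows the relevant factor $(1-a)^{-2}\{1+(T-2)a^{T-1}-(T-1)a^{T-2}\}$ tends to $\binom{T-1}{2}$ as $a\to 1$, so no $T$-free "uniform immersion" constant exists on the causal set. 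Your claim is therefore unavailable where you need it (uniformly over the sieve, for both the covering bound and the type II error control), and even in a fixed neighborhood of the truth it would require first extracting a spectral-gap consequence of (A5), which you do not address.

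The paper's proof shows how to dispense with the sharp equivalence, and this is the idea your proposal is missing: tolerate polynomial-in-$T$ Lipschitz constants and absorb them by shrinking radii, since these constants enter only inside logarithms. For prior mass, the paper places an $\ell_\infty$ box of radius $\eta\asymp d^{-1}T^{-3/2}\epsilon_T$ (not $\epsilon_T$) inside the Kullback--Leibler ball using the crude $T^3$-type bound, and $-\log\Pi$ of that box is still $\lesssim d^2\log(1/\eta)\asymp d^2\log T=T\bar\epsilon_T^2$. For the tests, since no off-the-shelf Hellinger testing theory exists for dependent observations, the paper builds likelihood-ratio tests directly on the joint $dT$-variate Gaussian: errors at a far point $\bzeta_1$ are bounded via the R\'enyi divergence, which is comparable to $\Frob{\bUpsilon_{T,1}-\bUpsilon_{T,0}}^2$ because the relevant eigenvalues stay bounded by Lemma~\ref{norm comparison}; errors at nearby points $\bzeta_2$ are handled with the expected squared likelihood ratio (Part (iv) of Lemma~\ref{normal divergences}) together with a covering of the sieve at a polynomially small radius $\zeta_T$ (Lemma~\ref{lem:for_sieve}), the polynomial covering cost again being harmless inside the entropy logarithm. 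Your final transfer step is fine in spirit --- it matches Lemmas~\ref{lem:boundtransfer1} and \ref{lem:boundtransfer2}, which convert contraction of $\bUpsilon_T$ in Frobenius norm at rate $\sqrt{T}\epsilon_T$ into contraction of $\bOmega$ and $\bA_1,\ldots,\bA_p$ at rate $\epsilon_T$ --- but the two pillars feeding into it need to be rebuilt along the lines above.
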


If the true precision matrix $\bOmega_0$ has an appropriate lower-dimensional structure and the rank $r$ of the low-rank increment terms $\bL_1,\ldots,\bL_p,\bK_1,\ldots,\bK_p$ in Condition (A1) is bounded by a fixed constant, the contraction rate can be improved by introducing a sparsity-inducing mechanism in the prior for $\bE$, for instance, as in Subsection~\ref{sec: prior}. Let $\bSigma_0$ have the modified Cholesky decomposition $(\bI-\bE_0)\bF_0 (\bI-\bE_0)^{\mathrm{T}}$. 

\begin{theorem} 
\label{thm:sparse rate}
If Conditions (A1)--(A5) hold, $r$ is a fixed constant, and the number of non-zero entries of $\bE_0$ is $s$, then the posterior contraction rate for $\bOmega$ at $\bOmega_0$ and for the autoregression parameters $\bA_1,\ldots,\bA_p$ at $\bA_{1,0},\ldots,\bA_{p,0}$ with respect to the Frobenius distance is $\sqrt{(d+s)(\log T)/T}$. 
\end{theorem}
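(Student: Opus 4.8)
The plan is to verify the three conditions of the general posterior contraction theorem for dependent Gaussian observations, in the spirit of the proof of Theorem~\ref{thm:basic rate}, but now tracking how the sparsity of $\bE_0$ reduces the effective dimension from $d^2$ to $d+s$. Since the single observation $\bY_T$ is a mean-zero $dT$-variate Gaussian with dispersion $\bUpsilon_T=\bUpsilon_T(\bzeta)$, I would work with the normalized statistical distance induced by the average Kullback--Leibler divergence $T^{-1}\KL(Q_{\bzeta_0}\|Q_{\bzeta})$, which for stationary causal processes is comparable to the integrated spectral-density discrepancy $\int_{-\pi}^{\pi}\{\tr(f_{\bzeta}^{-1}f_{\bzeta_0})-d+\log\det(f_{\bzeta}f_{\bzeta_0}^{-1})\}\,d\omega$, where $f_{\bzeta}$ is the spectral density. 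Under (A4)--(A5) the eigenvalues of $\bUpsilon_T$ are bounded away from $0$ and $\infty$ uniformly in $T$, so this distance is locally equivalent to the squared Frobenius distance between spectral densities, and hence---via the smooth, invertible map from $(\bOmega,\bA_1,\ldots,\bA_p)$ to $f_{\bzeta}$---to $\Frob{\bOmega-\bOmega_0}^2+\sum_{j}\Frob{\bA_j-\bA_{j,0}}^2$. Establishing contraction at rate $\epsilon_T=\sqrt{(d+s)(\log T)/T}$ in the statistical distance therefore yields the stated Frobenius rate for the identifiable parameters.

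Next I would verify the prior-mass (small-ball) condition $\Pi(\{\bzeta:T^{-1}\KL(Q_{\bzeta_0}\|Q_{\bzeta})\le\epsilon_T^2,\ T^{-1}\Var_{\bzeta_0}\log(Q_{\bzeta_0}/Q_{\bzeta})\le\epsilon_T^2\})\ge\exp(-c_1 T\epsilon_T^2)$ with $T\epsilon_T^2=(d+s)\log T$. Using the Lipschitz dependence of $\bUpsilon_T$, equivalently of $f_{\bzeta}$, on the free parameters---whose derivatives are uniformly bounded on the compact parameter set fixed by (A5)---it suffices to place a parameter-space ball of radius $\asymp\epsilon_T$ around $\bzeta_0$ inside this KL-ball. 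The new ingredient relative to Theorem~\ref{thm:basic rate} is the hard-thresholded prior $\bE=H_{\lambda}(\bE_1)$: I would restrict to the event on which $\lambda$ lies just below $\min_{(i,j)\in\supp(\bE_0)}|e_{ij,0}|$ and above the target resolution, so that the surviving support of $\bE$ coincides with $\supp(\bE_0)$ while the $d(d-1)/2-s$ null coordinates are thresholded to exactly zero. On this event the small-ball probability factorizes into a product over the $s$ active entries of $\bE_0$, the $d$ entries of $\bm{f}_0$, and the $O(prd)=O(d)$ entries of the $\bL_{j,0},\bK_{j,0}$; by (A2)--(A3) each factor is $\gtrsim\epsilon_T$, giving a product bounded below by $\exp\{-c(d+s)\log(1/\epsilon_T)\}=\exp\{-c'(d+s)\log T\}$, as required.

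For the entropy and remaining-mass conditions I would build the sieve $\mathcal{F}_T$ consisting of all $\bzeta$ for which $\bE$ has at most $\bar s=Ms$ nonzero entries and the entries of $\bE,\bm{f},\bL_j,\bK_j$, together with the eigenvalues of $\bOmega$ and $\bSigma$, lie within the fixed bounds of (A5). The metric entropy of $\mathcal{F}_T$ splits into a combinatorial term for choosing the support, $\log\binom{d(d-1)/2}{\bar s}\le C\,s\log d\le C\,s\log T$, and a continuous covering term for the $\bar s+d+O(d)$ bounded coordinates at scale $\epsilon_T$, namely $C(d+s)\log(1/\epsilon_T)\le C'(d+s)\log T$; their sum is $\lesssim T\epsilon_T^2$. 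For the remaining mass, the power-exponential tails in (A3)--(A4) bound the probability that any coordinate leaves the (A5) box by $\exp\{-c\,T\epsilon_T^2\}$, while the event that $\bE$ has more than $\bar s$ nonzero entries is handled by a Chernoff bound on $\sum_{i<j}\mathbbm{1}\{|e_{ij,1}|>\lambda\}$: integrating the per-coordinate inclusion probability $2\Phi(-\lambda/\sigma_e)$ against the priors on $(\lambda,\sigma_e^2)$ and taking $M$ large enough gives $\Pi(\mathcal{F}_T^c)\le\exp\{-c_3(d+s)\log T\}$. Feeding the three bounds into the general theorem yields the claimed rate.

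The main obstacle I anticipate is the joint handling of the hard-thresholding prior in the small-ball and remaining-mass steps. Because $H_\lambda$ is non-smooth and couples all coordinates through the shared $\lambda$, one must argue on a carefully chosen event for $(\lambda,\sigma_e^2)$ on which the thresholded support is \emph{exactly} $\supp(\bE_0)$ and the active entries are resolved to precision $\epsilon_T$, while simultaneously showing that the same prior rarely produces more than $O(s)$ nonzeros; obtaining both from a single uniform prior on $\lambda$ (after integrating out $\sigma_e^2$) requires a delicate choice of the thresholding window. A secondary difficulty is propagating parameter-space closeness through the recursive, reduced-rank map $\bzeta\mapsto\bUpsilon_T$ of Section~\ref{sec:likelihood_comp} with constants uniform in $T$; this is where the uniform eigenvalue bounds of (A4)--(A5), together with the causal ordering $\bOmega\le\bC_1^{-1}\le\cdots\le\bC_p^{-1}=\bSigma^{-1}$ from Proposition~\ref{prop:NewParameterization}, are essential to keep the spectral density bounded and the distance equivalences valid.
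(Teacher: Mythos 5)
Your overall skeleton (refined prior small-ball at scale $(d+s)\log T$, a sieve with a sparsity cap, entropy plus remaining-mass bounds, then transfer to the Frobenius distance on $(\bOmega,\bA_1,\ldots,\bA_p)$) is the same as the paper's, which likewise only reworks the prior-concentration step of Theorem~\ref{thm:basic rate} (importing the hard-thresholding small-ball estimate from Theorem~4.2 of \cite{shi2021bayesian}) and leaves the tests, the sieve-complement bound and the transfer lemmas unchanged. However, your remaining-mass step would fail as stated, for two concrete reasons. (i) Your sieve confines all coordinates to the \emph{fixed} boxes of (A5); under priors with power-exponential tails, the probability that a single coordinate leaves a fixed box is a positive constant, so $\Pi(\mathcal{F}_T^c)$ cannot be of order $e^{-cT\epsilon_T^2}$. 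The paper's sieve instead uses boxes growing like $(T\bar\epsilon_T^2)^{1/\gamma}$, precisely so that each coordinate's tail probability is $e^{-c_0C^\gamma T\bar\epsilon_T^2}$ and survives the union over $O(d^2)$ coordinates. (ii) The cap $\|\bE\|_0\le Ms$ cannot have exponentially small complement probability under this prior: with prior probability bounded away from zero one has $\lambda\le\sigma_e$, and conditionally on that event each of the $\binom{d}{2}$ entries of $\bE_1$ exceeds the threshold with probability at least $2\Phi(-1)$, so $\|\bE\|_0$ stochastically dominates a $\mathrm{Bin}\bigl(\binom{d}{2},2\Phi(-1)\bigr)$ variable and exceeds $Ms$ with probability tending to one whenever $Ms$ is small compared with $d^2$. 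Your Chernoff bound is valid only conditionally on $(\lambda,\sigma_e^2)$ and does not survive marginalization over the prior; this is exactly why the paper's sieve imposes the much weaker constraint $\|\bE\|_0\le C(T\bar\epsilon_T^2)^{1/\gamma}/\log d$ rather than an $O(s)$ cap.

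Two further points. First, you replace the paper's quantitative machinery by the assertion that $T^{-1}\KL(Q_{\bzeta_0}\,\|\,Q_{\bzeta})$ is comparable to the integrated spectral discrepancy and hence locally equivalent to $\Frob{\bOmega-\bOmega_0}^2+\sum_j\Frob{\bA_j-\bA_{j,0}}^2$. That Whittle-type equivalence is not citation-ready here: $d$ grows with $T$, the constants must be uniform in $(d,T)$, and the metric must be one in which likelihood-ratio tests with exponentially small errors exist. The paper works directly on the dispersion matrices: R\'enyi-divergence bounds for alternatives separated in $\Frob{\bUpsilon_{T,1}-\bUpsilon_{T,0}}$, the recursive perturbation bounds of Lemmas~\ref{lem:around_the_truth} and \ref{lem:for_sieve} controlling the map $\bzeta\mapsto\bUpsilon_T$ with explicit polynomial-in-$T$ constants, and the transfer Lemmas~\ref{lem:boundtransfer1}--\ref{lem:boundtransfer2}; none of this is subsumed by your equivalence claim without proof. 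Second, in the small-ball step your factorization over the $s$ active entries of $\bE_0$, the entries of $\bm{f}$, and the $\bL_j,\bK_j$ omits the factor $\Pr\bigl(\max_{(i,j)\notin\supp(\bE_0)}|e_{ij,1}|\le\lambda\bigr)$, i.e., the price of thresholding the $\binom{d}{2}-s$ null coordinates exactly to zero; making this factor at least $e^{-c(d+s)\log T}$ forces $\lambda/\sigma_e\gtrsim\sqrt{\log d}$ on the chosen event and is the delicate part that the paper imports from \cite{shi2021bayesian}. You flag this as an anticipated obstacle, but it remains unresolved in your argument, so the claimed $(d+s)\log T$ prior-mass bound is not established.
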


The proofs for Theorem~\ref{thm:basic rate} and Theorem~\ref{thm:sparse rate} are given in the supplementary material section. 

\section{Differential Connectivity due to the Great Recession Using FRED-QD data}

The main objective for developing a flexible stationary graph estimation method for a VAR process is to evaluate how the connectivity between the economic time series in the FRED-QD database changed during the financial crisis of the Great Recession in the context of a stationary graph constructed based on the joint VAR model for the series. The simulation experiment in the next section provides credence to the claim that for graph estimation, there is benefit in explicitly modeling the temporal dependence structure of the sample. As mentioned earlier, a stationary vector autoregressive model is a reasonable model for the FRED-QD time series after performing the transformations suggested in \cite{mccracken2020fred}. Thus, we estimate a VAR model for the FRED-QD data and the stationary graph obtained from the associated precision matrix. 

As argued earlier, fitting two different stationary models to the data from periods before and after the recession will allow us to compare the presumably stationary and potentially different temporal dynamics of the two periods while circumventing the challenging task of accounting for all the changes and associated epochs during the nearly two and half years long period of the Great Recession. Performing a comparison of the stationary graphs between the before and after period in terms of a combined model is relatively straightforward within a Bayesian framework, where we can study the posterior of functions of parameters from the before and after VAR models to evaluate changes in the stationary graph.  We specifically study differential connectivity between the FRED-QD series due to the Great Recession in terms of the scale-free edge difference parameters, $\theta_g(i,j),$ described in \eqref{eq:edge_diff}. We thus fit our proposed model for each group $g$ and each data fragment from before and after the recession. We obtain the posterior samples of stationary precision matrices $\bOmega_{b,g}$ and $\bOmega_{a,g}$, before and after the period of great recession, respectively, and construct posterior samples of the scale-free edge difference parameters given in \eqref{eq:edge_diff}. As discussed in Section~\ref{sec:compuprec}, we set the maximum VAR order as $\min(p_m,T/2)$, where $p_m=10$ and all the $r_j$'s are initialized at 3. Then, at the 1000th iteration, we drop the columns in $\bL_j$'s with overall contribution less than 0.1 in the sum of squares. In that process, some of the higher-order $\bL_j$'s are dropped, reducing the overall order of the fitted VAR. The computational steps remained stable for higher choices of $p_m$ and $r_j$'s.  All the other tuning details described in Section~\ref{sec:compuprec} are left unchanged.

To designate the edge-difference parameter as significantly different from zero for any particular pair of economic indicators, and hence determine whether there has been a substantial impact of the great recession on that particular connection,  we compute an empirical 95\% credible interval for the parameter and check if zero is included in the interval. The particular edge in the stationary graph is inferred to have undergone significant changes during the Great Recession if the credible interval does not contain zero. 
Let $\{\bOmega^{(s)}_{b,g}\}_{s=1}^S$ be $S$ posterior samples of $\bOmega_{b,g}$ and similarly $\{\bOmega^{(s)}_{a,g}\}_{s=1}^S$ are posterior samples of $\bOmega_{a,g}$. For this exercise,  we obtain $S = 5000$ posterior samples for the before and after precision matrices after burning in the first 5000 samples, thereby obtaining 5000 posterior samples for each $\theta_g(i,j)$ parameter. Based on these samples, we compute the 95\% credible intervals for $\theta_g(i,j)$ as the empirical 2.5 and 97.5 quantiles of the sample of 5000 values. Since there are several hundred pairwise connections, we only report aggregate measures of changes for each of the economic groups in the FRED-QD data. For each group-specific graph, Table~\ref{tab:different} reports the percentage of edge-difference parameters with zeroes not included in their credible intervals. Larger values of this number imply a greater disruption of the interdependencies among the series within that group due to the financial crisis.
Table~\ref{tab:different} values are relatively higher than the ones reported in Table~\ref{tab:EDA}, showing that, unlike the analysis based on sparse VAR modeling,  the proposed method, based on Bayesian analysis of the difference in the two stationary graphs from before and after periods,  detects substantial changes in connectivity within most of the groups. Some of the groups with larger differences in Table~\ref{tab:EDA}, such as Stock Markets and Non-Household Balance Sheets, retain higher differences in their stationary precisions in Table~\ref{tab:different}. However, the inventories, Orders, and Sales series, which showed significantly large differences under he sparse VAR analysis, show only a moderate amount of changes under the proposed detection scheme. Several groups show large differences in their stationary precisions when compared to the estimates in Table~\ref{tab:EDA}, e.g., NIPA, Employment, Housing, Earnings and Productivity, Interest Rates, and Money and Credit. The results are consistent with the literature showing substantial changes in the relationship between series from these groups. 
The groups indicating smaller amount of changes due to the recession under the proposed Bayesian scheme are Industrial Production and Exchange rates. While overall industrial production is significantly affected by the economic downturn, and some relationships are permanently changed, there has been evidence of economic resistance in some parts of the sector that left the interdependencies more comparable between the before and the after recession periods. This is consistent with the literature showing some degree fo resilience and quick recovery in the industrial production series; \citep{kitsos2023industrial}. 
A similar picture emerges in the exchange rate group, with only four series. The relatively lower value for the proportion of changes in the exchange rate series may be due to the stability of the safe-haven currencies like the Japanese Yen or the Swiss Franc, or due to the controlled exchange rate dynamics of the Chinese Yuan achieved through stringent financial controls. 

To further scrutinize the changes estimated in the graphs for each group, and thereby evaluate the impact of the Great Recession on each of the subgroups, we look at the individual change graph, i.e., a graph of the edges deemed significantly perturbed by the Bayesian method. 
Figure~\ref{fig: changegraph} shows edges $(i,j)$ within each network where the credible intervals of the scale-free edge-difference parameters do not include zero. These edges represent the interdependencies that changed during the great recession. 

In each group, several edges are determined to have changed during the Great Recession. This is consistent with the existing knowledge. However, there are series (nodes) for which only a few edges show changes between the before and after recession periods.  

In the NIPA group, the Government Consumption, Expenditure, and Gross Investment series (node 13) has only two edges (with export and import series) that show significant change due to the recession. 
The government consumption and spending worked as a stabilizer during the financial crisis, cushioning the effect of the economic downturn \cite{ramey2019ten}. Governments increased spending to stimulate the economy and ensure faster recovery of some of the other economic indicators. The series showed resilience to the shock, and the interventions by the government helped maintain or recover the relationship with other indicators. However, the interdependency between government consumption, expenditure, and investment, and import/export shifted significantly as the interventions that controlled the domestic economy could not provide control internationally. The collapse of global economic activity led to a sharp decline in trade. Emphasis on boosting  domestic demand and government intervention  promoted a notable decrease in aggregate expenditure, particularly in trade-intensive durable goods, and had a significant effect on imports and exports \cite{bems2013great} 

The housing market bubble fueled by subprime mortgage rates before 2007 and its subsequent collapse is generally attributed to as one of the triggers for the Great Recession; \cite{gorton2009subprime}. However, not all housing sectors were impacted equally. During the Great Recession, single-family housing starts were severely affected. Multifamily housing experienced lower levels of rent decline and shorter recovery  period until rents attained their pre-recession peaks \cite{wang2019housing}
Although multifamily housing starts declined during the Great Recession, the sector's performance in terms of rental rates and recovery time showcased its relative resilience compared to other real estate sectors. This is bourne out in the relatively fewer interdependencies between the multifamily housing and starts and other housing series that were marked as showing significant changes after the recession. 

On the other hand, series such as consumer-related PCE, particularly those related to the health and education sector employment, showed resilience during the recession \cite{Barello}, but their interdependencies with other  series seem to have been disrupted significantly by the financial crisis. This shows that while some sectors were somewhat recession proof, their interrelation with other sectors were not.

% Here we present the results specifically for the cases where Table~\ref{tab:EDA} shows a relatively very small change in comparison to the sample precision matrix estimates in Figure~\ref{fig:precision1}. In fact, our estimated change scores in Table~\ref{tab:different} do show differences, and the estimated change-networks are in Figure~\ref{fig: changegraph}.

\begin{table}[htbp!]
\centering
\caption{Proportions of entries in the stationary precisions where the 95\% credible interval for the associated edge-difference parameter does not include zero.}
\begin{tabular}{lc}
  \hline
 & Proportion of edges that changed \\ 
  \hline
National income and product accounts & 0.72 \\ 
  Industrial Production & 0.13 \\ 
   Employment  & 0.83 \\ 
   Unemployment and Participation & 0.33 \\ 
  Housing & 0.41 \\ 
  Inventories, Orders, and Sales & 0.31 \\ 
   Prices (PCE) & 0.64 \\
   Prices (CPI) & 0.25\\
  Earnings
and Productivity & 0.55 \\ 
  Interest Rates & 0.43 \\ 
  Money and Credit & 0.52 \\ 
  Household Balance Sheets & 0.62 \\ 
  Exchange
Rates & 0.17 \\ 
  Stock Markets & 0.57 \\ 
  Non-Household Balance Sheets & 0.78 \\ 
   \hline
\end{tabular}
\label{tab:different}
\end{table}

%Housing, employment-unemployment, industrial production, interest rate,

\begin{figure}[htbp!]
\centering
\subfigure{\includegraphics[width = 0.45\textwidth, trim=1cm 2cm 0cm 0cm, clip=true]{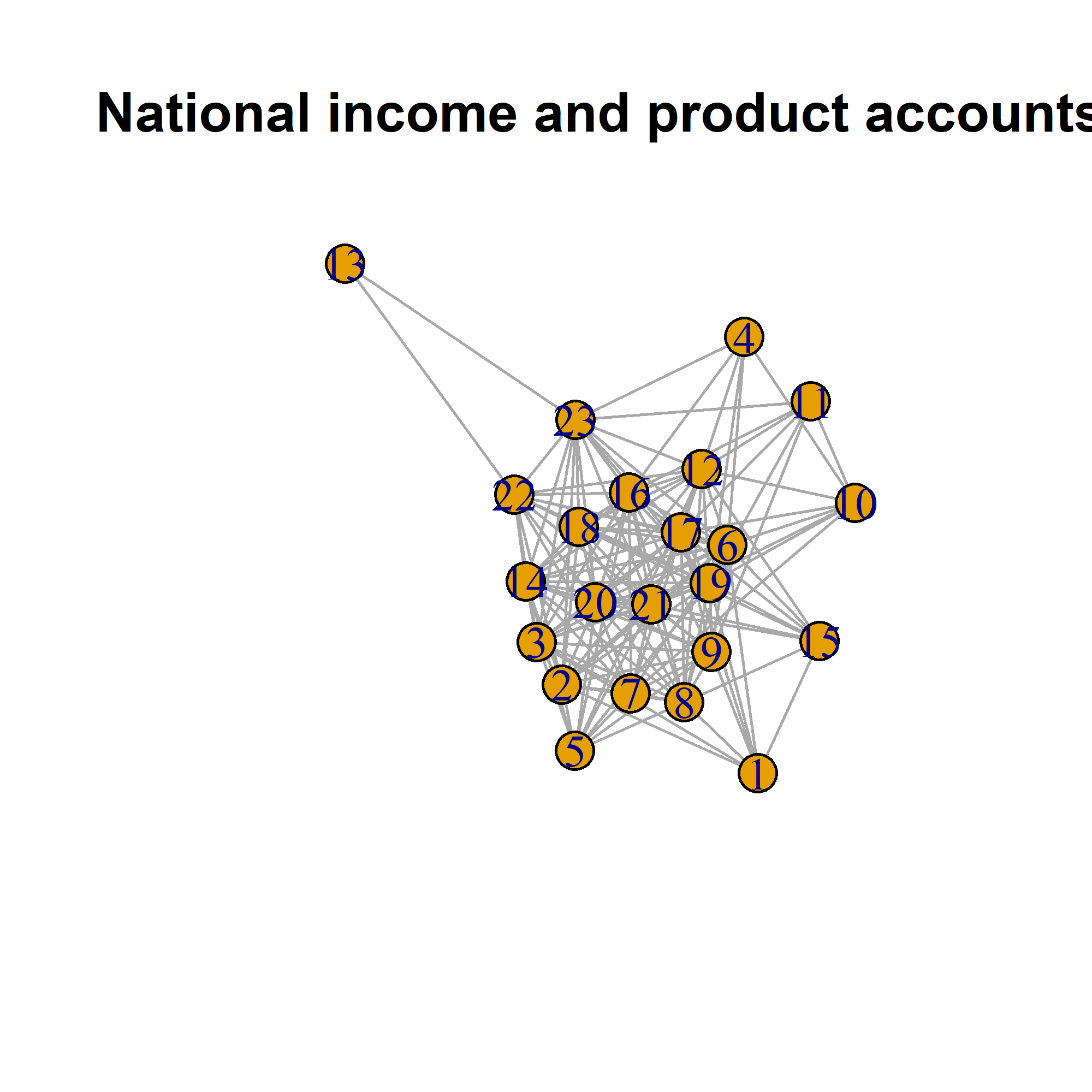}}
\subfigure{\includegraphics[width = 0.45\textwidth, trim=1cm 2cm 0cm 0cm, clip=true]{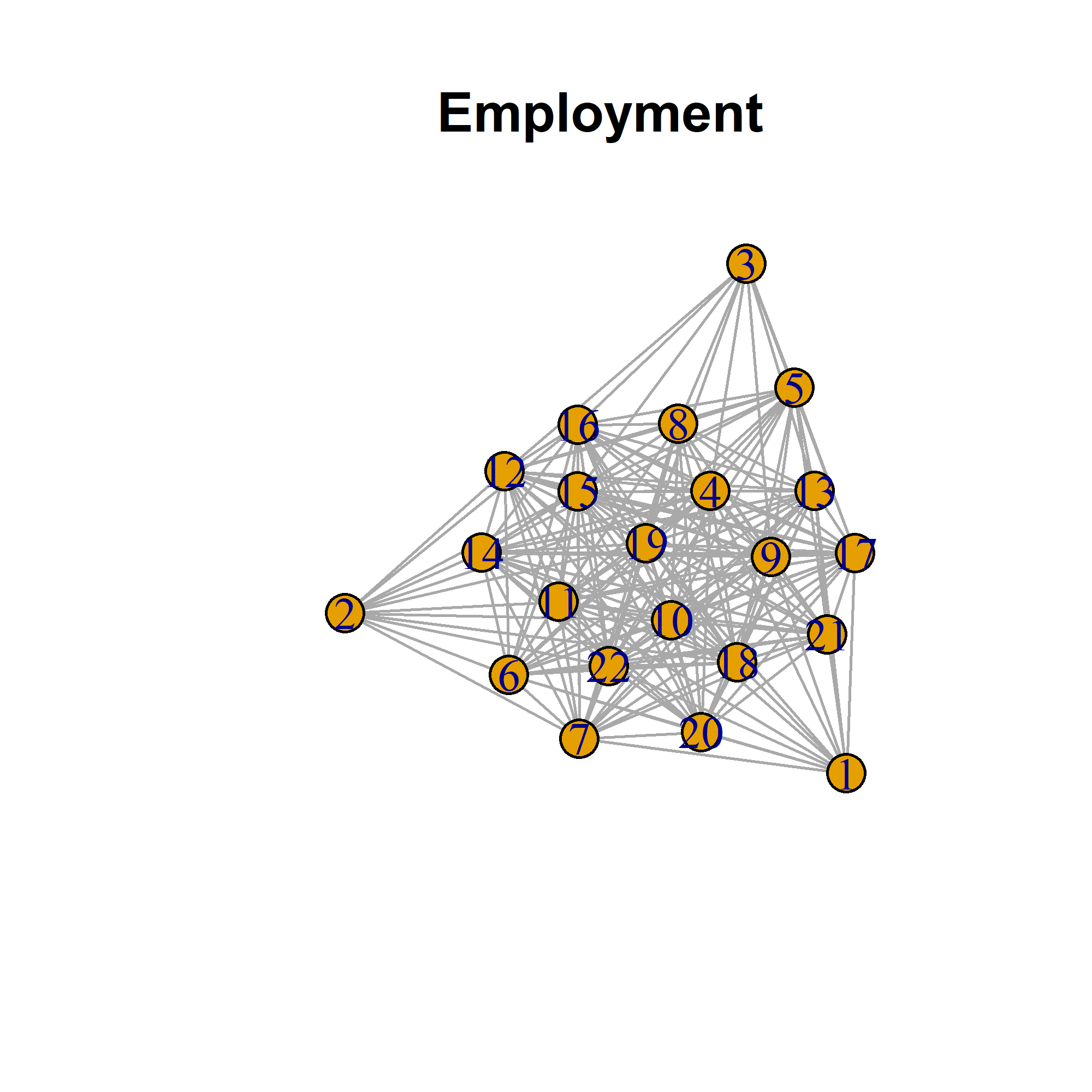}}
\subfigure{\includegraphics[width = 0.45\textwidth, trim=1cm 2cm 0cm 0cm, clip=true]{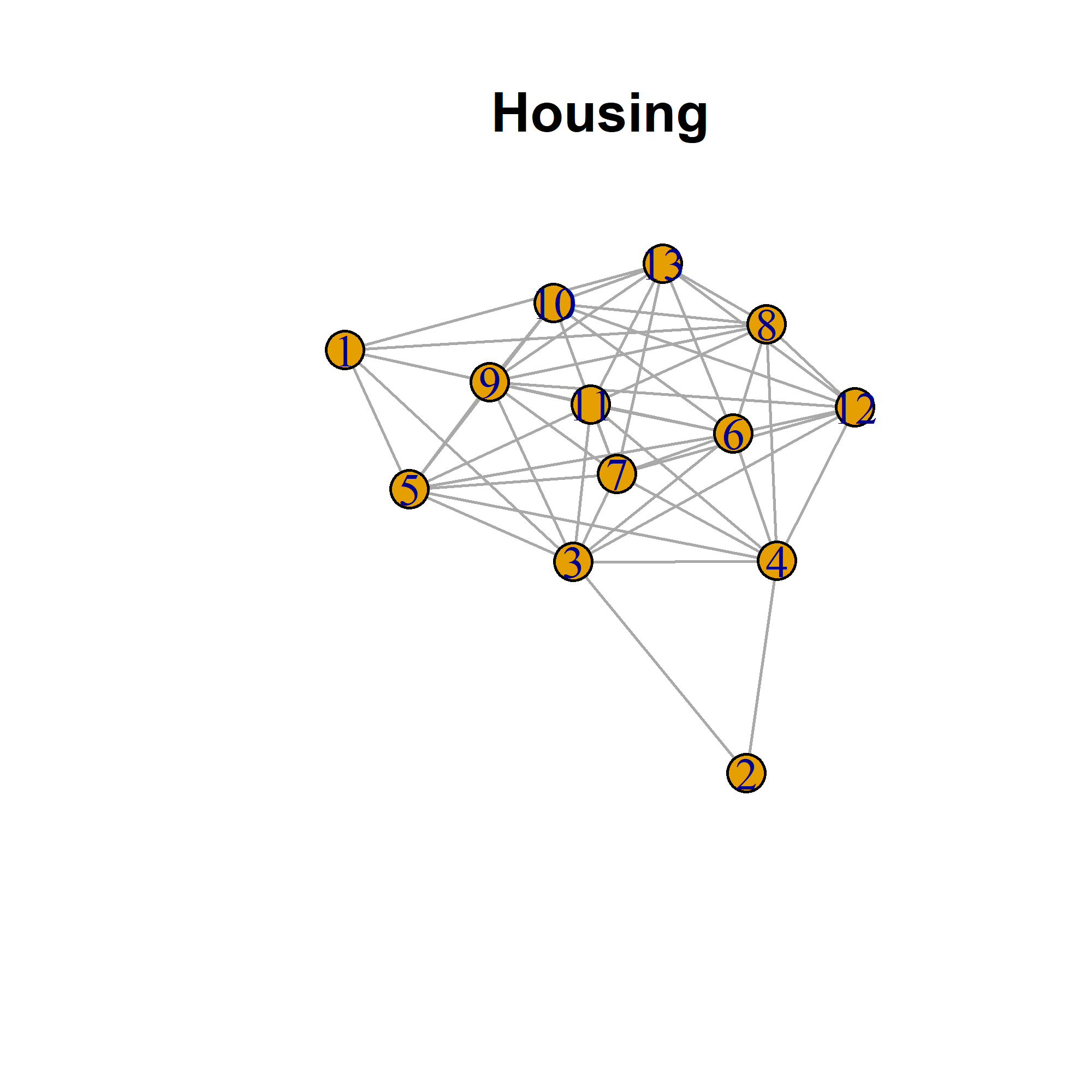}}
\subfigure{\includegraphics[width = 0.45\textwidth, trim=1cm 2cm 0cm 0cm, clip=true]{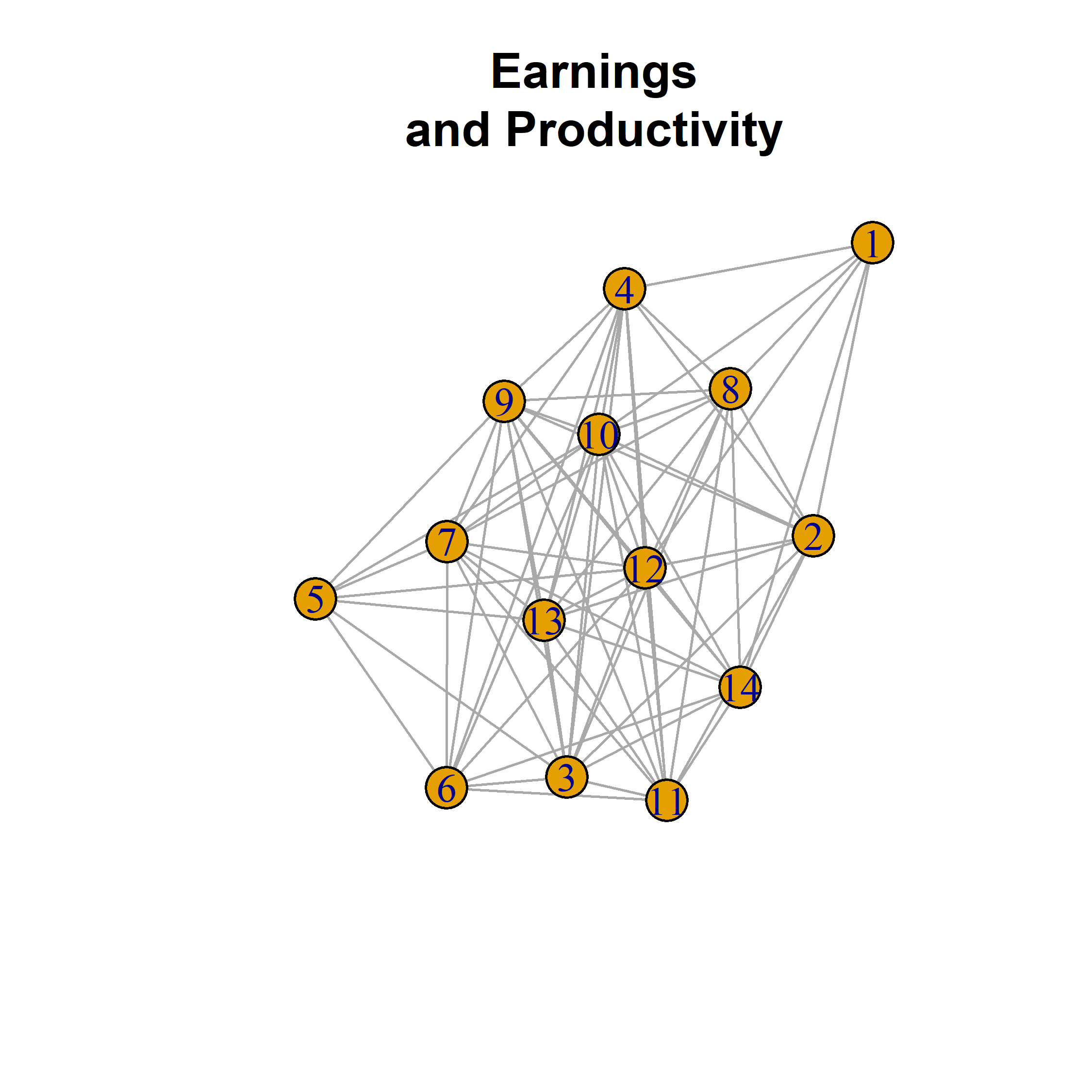}}
\subfigure{\includegraphics[width = 0.45\textwidth, trim=1cm 2cm 0cm 0cm, clip=true]{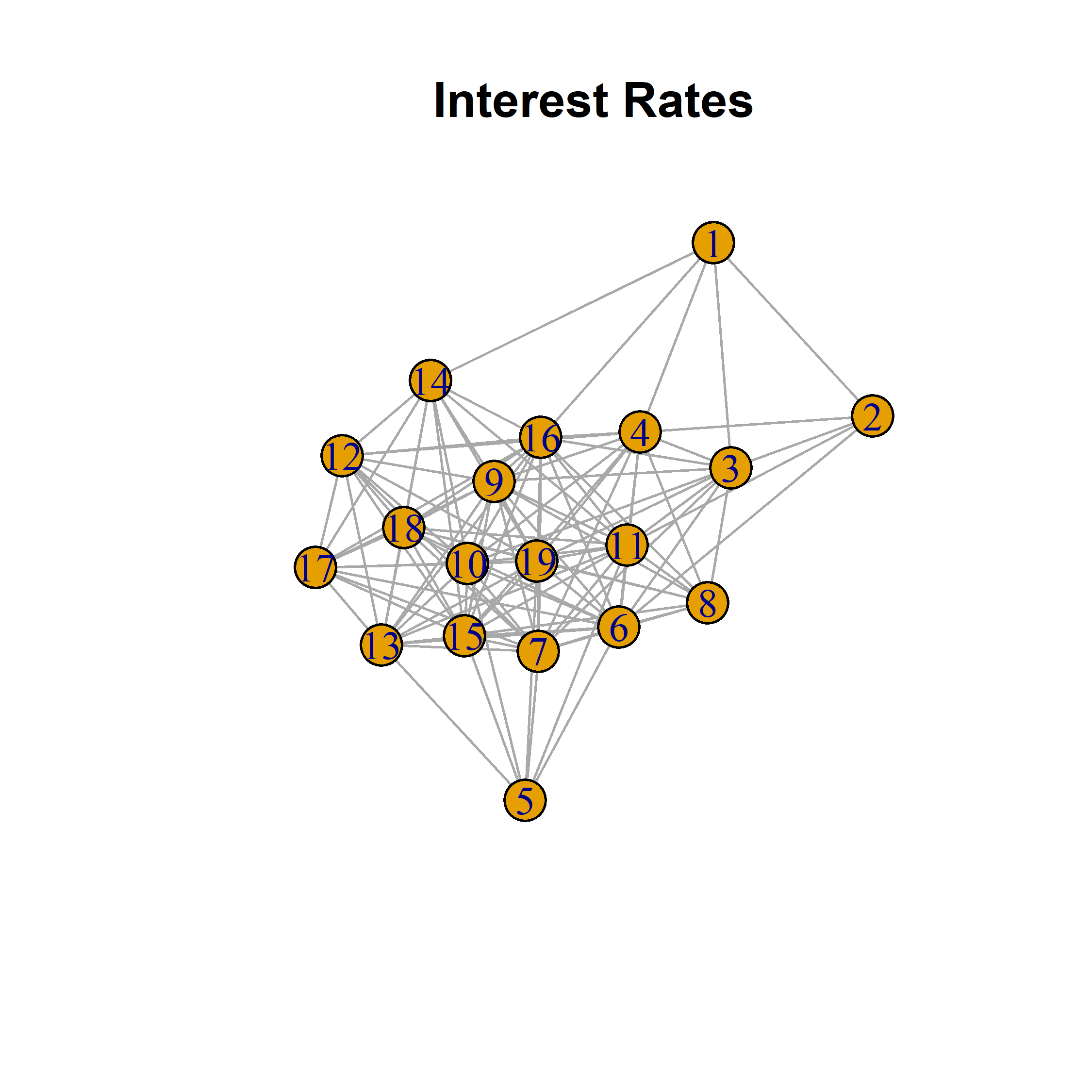}}
\subfigure{\includegraphics[width = 0.45\textwidth, trim=1cm 2cm 0cm 0cm, clip=true]{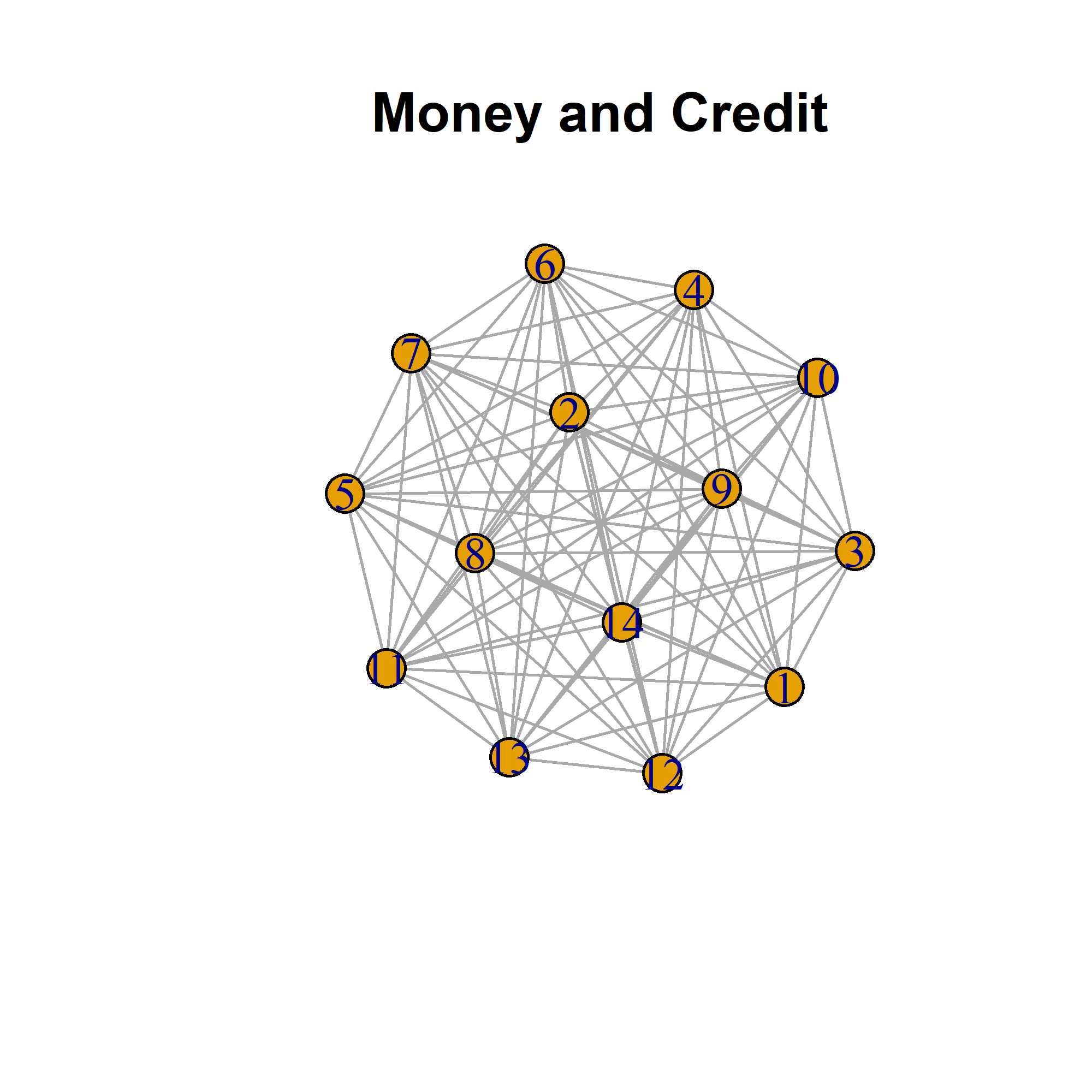}}
\caption{Estimated change networks. The node numbers correspond to the variable names as described in \cite{mccracken2020fred}.}
\label{fig: changegraph}
\end{figure}

\section{Simulation}
\label{sec:simu}

To analyze the performance of the proposed method when the ground truth is known, we perform further simulation experiments. We primarily study the impact of the sample size and the sparsity level on estimation accuracy for the stationary precision matrix for a first order VAR process. We compare the accuracy of estimating the stationary precision matrix and the associated partial correlation graph for the proposed method with two other popular graphical model estimation methods, the Gaussian Graphical Model (GGM) and the Gaussian Copula Graphical Model (GCGM).

The data are generated from a VAR(1) model with a fixed marginal precision matrix, $\bOmega_1$. 
Specifically, the data $\bX_1, \ldots, \bX_T$ are generated from the model 
\be
\bX_t = \bA_1 \bX_{t-1} + \bZ_t
\label{eq:model_simu}
\ee
with the stationary precision matrix chosen as $\bOmega  = \bOmega_1$ and the parameters associated with the update of the conditional precision, $(\bL_1, \bK_1)$, are chosen as random $30\times 30$ matrices with entries generated independently from $\mathrm{N}(0, 2.5^2).$ The VAR coefficient $\bA_1$ and the innovation variance $\bSigma_1$ are solved from the specified parameters, $\bOmega_1, \bL_1, \bK_1,$ following the steps defined in section~\ref{sec:likelihood_comp}. The initial observation, $\bX_1$, is generated from the stationary distribution, and subsequent observations are generated via the iteration in \eqref{eq:model_simu}. Two different sample sizes are used, $T = 40$ and $T = 60.$   For the sparse precision matrix, we use two different levels of sparsity, $15\%$ and $25\%$. The sparse precision $\bOmega_1$ is generated using the following method.
\begin{enumerate}
    \item[(1)] Adjacency matrix: three small-world networks are generated, each containing 10 disjoint sets of nodes with two different choices, 5 and 10,  for {\tt nei} variable in {\tt sample\_smallworld} of {\tt igraph} \citep{igraphR}. Then,  nodes are randomly connected across the small worlds with probability $q$. The parameters in the small world distribution and $q$ are adjusted to attain the desired sparsity levels in the precision matrix. 
    \item[(2)] Precision matrix: Using the above adjacency matrix, the precision matrix $\bOmega$ is generated from a G-Wishart with scale equal to 6, and off-diagonal entries smaller than 1 in magnitude are set to zero.
\end{enumerate}
The simulated model is a full-rank 30-dimensional first-order VAR. Thus, the proposed method of fitting low-rank matrices for the conditional precision updates is merely an approximation and is fitting a possibly misspecified model. Posterior computation for the proposed method uses the steps in section~\ref{sec:compuprec}, where a higher-order VAR with low-rank updates is used to fit the data. The upper bound on the order of the progress is chosen to be $p_m = 10.$ Thus, we fit a misspecified model and evaluate the robustness characteristics of the method along with estimation accuracy.
Table~\ref{tab:predsimu11} shows the median MSE for estimating the entries of the 30$\times$30 precision matrix for model \eqref{eq:model_simu}. The proposed method has higher estimation accuracy than the GCGM and GGM methods that ignore the temporal dependence. Explicit modeling of the dependence, even under incorrect rank specification, is beneficial and can provide substantial gains in MSE. The comparative performance is better for the proposed method when the sparsity level is $25\%.$ The Gaussian Graphical Model seems sensitive to the specification of the simulation model. The MSE for the GGM increases with increasing sample size, a phenomenon presumably an artifact of ignoring the dependence in the sample. When the simulated parameters of the conditional precision update, $\bL_1, \bK_1,$ are generated independently from $\mathrm{N}(0, 1)$ (not reported here), the MSE for the GGM decreases monotonically with the sample size. This is because  $\mathrm{N}(0, 1)$ being more concentrated around zero, makes the simulated model closely mimic an independent model. The MSEs for all the methods go up with a decreased sparsity level. A higher number of nonzero entries in the precision matrix makes the problem harder for sparse estimation methods and leads to higher estimation errors. The results for the proposed method presented in this section are based on rank-1 updates. The results for rank-3 updates (not shown here)  show only marginal improvement over rank-1 updates for precision matrix estimation. The rank-3 updates are computationally more expensive. 

We also investigate how well the partial correlation graph is estimated under different methods when the samples are dependent and arise from a VAR process. 
% The true graph is given by the nonzero entries in the true precision $\bOmega$. The estimated graph is obtained by thresholding the estimated precision at a given level $\tau.$ For the present investigation we use $\tau = 0.15$, i.e. two different time series nodes are declared connected if the partial correlation between the two components exceeds 0.15 in the estimated precision matrix.
Figure~\ref{fig::simumagARMA} shows the ROC curves for the three estimation methods for different settings; two different sample sizes $T = 40, 60$ and two different sparsity levels $15\%, 25\%.$ In every case, the proposed method performs better than GCGM and GGM, with the GCGM performing better than the GGM. The AUC values for the proposed method are bigger than those for the GCGM and GGM.

\begin{table}[ht]
\centering
\caption{Median estimation MSE in estimating the precision matrix of dimension $30\times 30$ when the data is generated from VAR(1).}
\begin{tabular}{|c|ccc|ccc|}
  \hline
    Time points &\multicolumn{3}{|c|}{15\% Non-zero} & \multicolumn{3}{|c|}{25\% Non-zero} \\
    \hline
  \hline
 & Causal VAR & GCGM & GGM & Causal VAR & GCGM & GGM \\ 
  \hline
40 & 7.01 & 6.30 & 21.87 & 8.95 & 14.92 & 17.83 \\ 
  60 & 5.37 & 6.46 & 34.33 & 7.18 & 10.68 & 34.10 \\ 
   \hline
\end{tabular}
\label{tab:predsimu11}
\end{table}

\begin{figure}[htbp]
\centering
\subfigure{\includegraphics[width = 0.42\textwidth]{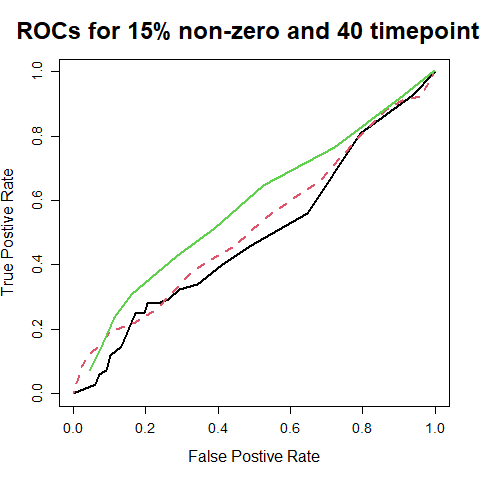}}
\subfigure{\includegraphics[width = 0.42\textwidth]{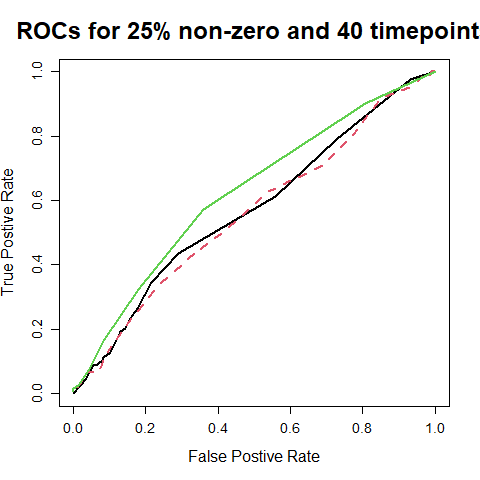}}
\subfigure{\includegraphics[width = 0.42\textwidth]{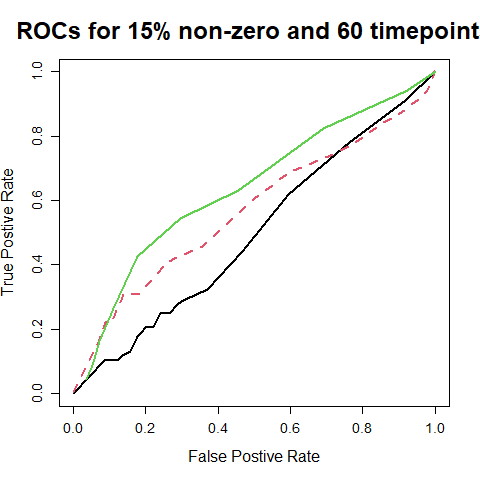}}
\subfigure{\includegraphics[width = 0.42\textwidth]{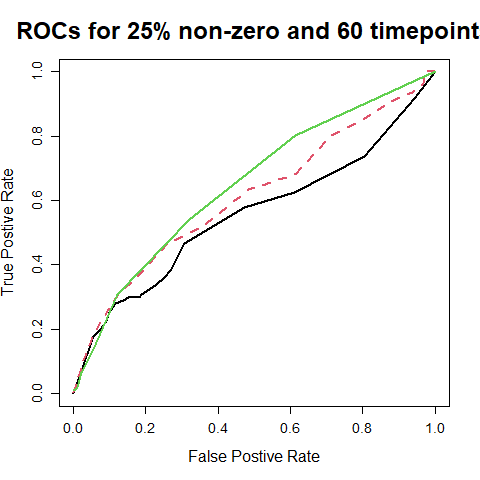}}
\caption{ROC comparison for different cases: Black = GCGM, Red = GGM, Green = Causal VAR when the true data is generated using VAR(1) model.}
\label{fig::simumagARMA}
\end{figure}

\section{Discussion}

In this article, we propose a new Bayesian method for estimating the stationary precision matrix of a high-dimensional VAR process under stability constraints and subsequently estimating the contemporaneous stationary graph for the components of the VAR process. The method is then used to evaluate possible changes during the financial crisis of the Great Recession of 2007--2009 in terms of the interdependencies among various seasonally adjusted quarterly time series collected in the FRED-QD database. 
The method has several natural advantages that will be useful in analyzing similar time series databases. The methodology introduces a parameterization that allows for the fast computation of the stationary likelihood of a reduced-rank high-dimensional Gaussian VAR process. The new parameterization also gives a natural way to directly model the sparse stationary precision matrix of a high-dimensional VAR process, which is the quantity needed to construct contemporaneous stationary graphs for the VAR time series. Another distinct advantage of the proposed method over popular competitors is that the proposed methodology uses causality of the VAR process as a hard constraint so that any estimated VAR will be restricted to the causal VAR space. This property is helpful in applications, such as stable dynamical system identification, where stationarity of the process is paramount. 
Most estimation methods for high-dimensional VAR processes fail to impose causality on the solution, making estimating the stationary precision matrix less meaningful.  In our preliminary analysis of the FRED-QD data using sparse VAR, the estimated VAR polynomials are often unstable; the estimated VAR process is non-causal. However, estimates based on the proposed parameterization are always stable by construction.

We also establish posterior consistency for our Bayesian estimation scheme when priors are defined through the proposed parameterization. The consistency of $\bOmega$ posterior indicates accurate inference on change parameters $\theta_g(i,j)$  to determine the effect of the Great Recession on the individual edges. In the future, we plan to investigate graph consistency for the partial correlation graph obtained by thresholding the entries of the stationary precision matrix of a high-dimensional VAR.

The focus of the present article is on the estimation of the stationary precision matrix of a high-dimensional stationary VAR. Hence, the direct parameterization of the precision, independent from other parameters involved in the temporal dynamics, is critical. However, a similar scheme can provide a direct parameterization of the VAR autocovariance matrices and, hence, the VAR coefficients under reduced rank and causality constraints for a high-dimensional VAR. This is part of a future investigation. Another possibility is to estimate directional relationships among the variables in each group of the FRED-QD data through directed acyclic graph (DAG)-based characterization.

Finally, the methodology developed here was motivated by the desire to estimate and compare the stationary graphs between the economic indicators before and after the Great Recession. But the methods are broadly applicable for doing a similar before-and-after analysis of VAR time series graphs in the context of other extreme events, such as the COVID-19 pandemic, provided the event is a prolonged and complex event that would be hard to analyze using change point analysis. In addition, the methods apply to any problem where the main interest is the estimation of the partial correlation graph of a VAR process.

%{\color{red}(Future direction in terms of the application)}

\section*{Funding}
%\begin{funding}
The authors would like to thank the National Science Foundation collaborative research grants DMS-2210280 (Subhashis Ghosal) / 2210281 (Anindya Roy) / 2210282 (Arkaprava Roy).

\Appendix

\renewcommand{\theequation}{A.\arabic{equation}}
\renewcommand{\thetable}{A.\arabic{table}}

\setcounter{equation}{0}
\setcounter{table}{0}

\section*{Appendix: Identification of Parameters}

In the  recursive algorithm described above, the parameters $\bOmega, (\bL_1,\bK_1), \ldots, (\bL_p, \bK_p)$  are mapped to the modified parameters $\bOmega,$ $ (\bU_1, \bV_1), \ldots, (\bU_p, \bV_p)$. The mapping is not one-to-one since each $\bV_j$ has the restriction $\bV_j^{\mathrm{T}}\bD_{j-1}^{-1} \bV_j = \bI$ while the basic parameters $\bK_j$ do not have any restrictions. The intermediate parameters in the mapping, $\bOmega, (\bU_1, \bV_1), \ldots, (\bU_p, \bV_p)$ are  identifiable only up to rotations. Consider the equivalence classes of pair of $d\times r_j$ matrices $(\bU, \bV)$ defined through the relation $(\bU, \bV) \equiv (\bU\bQ, \bV\bQ)$ for any $r_j\times r_j$ orthogonal matrix $\bQ$. Let for each $r_j$, $\mcC_{r_j}$ be the set of such equivalence classes. Also let $\mcS^{++}_{d,s}$ be a subset of the positive definite cone $\mcS^{++}_{d}$ with specified sparsity level $s$. The following proposition shows that there is a bijection from the set $\mcS^{++}_{d,s} \times\mcC_{r_1} \times \cdots \times \mcC_{r_p}$ to the causal VAR$(p)$ parameter space defined by the VAR coefficients $\bA_1, \ldots, \bA_p$ and innovation variance $\bSigma$,  where polynomials with coefficients $\bA_1, \ldots, \bA_p$ satisfy low-rank restrictions and the stationary precision satisfies $s$-sparsity. The definition of sparsity is made clearer in the prior specification section. For simplicity, we use a mean-zero VAR($p$) process. Before formally stating the identification result, we state and prove a short lemma used to show identification.

\begin{lemma}
Given an $n \times n$ matrix $\bA$ of rank $1 \leq r \leq n$ for some positive integer $n$, and a positive definite $n\times n$ matrix $\bB$, there exists $n \times r$ matrices $\bC$ and $\bR$ such that $\bA = \bC\bR^{\mathrm{T}}$, $\mathrm{rank}(\bC) = \mathrm{rank}(\bR) = r$ and $\bR^{\mathrm{T}}\bB\bR = \bI$. 
\lb{lem:rankfac}
\end{lemma}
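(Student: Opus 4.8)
The plan is to start from any full-rank factorization of $\bA$ and then exploit the gauge freedom inherent in such a factorization to normalize the $\bB$-inner product of the right factor. Since $\bA$ has rank $r$, standard linear algebra (for instance, a reduced singular value decomposition or any rank-revealing factorization) produces $n\times r$ matrices $\bC_0$ and $\bR_0$, each of rank $r$, with $\bA = \bC_0\bR_0^{\mathrm{T}}$. This factorization is not unique: for any invertible $r\times r$ matrix $\bM$, setting $\bC = \bC_0\bM^{-\mathrm{T}}$ and $\bR = \bR_0\bM$ gives $\bC\bR^{\mathrm{T}} = \bC_0\bM^{-\mathrm{T}}\bM^{\mathrm{T}}\bR_0^{\mathrm{T}} = \bC_0\bR_0^{\mathrm{T}} = \bA$, and both new factors retain rank $r$ because $\bM$ is invertible. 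The task therefore reduces to choosing $\bM$ so that the normalization $\bR^{\mathrm{T}}\bB\bR = \bI$ is enforced.

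Next I would form the $r\times r$ matrix $\bG = \bR_0^{\mathrm{T}}\bB\bR_0$ and observe that it is positive definite. Indeed, for any nonzero $\bv \in \RR^r$, the vector $\bR_0\bv$ is nonzero because $\bR_0$ has full column rank, so $\bv^{\mathrm{T}}\bG\bv = (\bR_0\bv)^{\mathrm{T}}\bB(\bR_0\bv) > 0$ by positive definiteness of $\bB$. Being symmetric positive definite, $\bG$ admits an invertible symmetric square root, and I would take $\bM = \bG^{-1/2}$. Then
\[
\bR^{\mathrm{T}}\bB\bR = \bM^{\mathrm{T}}\bR_0^{\mathrm{T}}\bB\bR_0\bM = \bG^{-1/2}\bG\bG^{-1/2} = \bI,
\]
as required, while $\bR = \bR_0\bG^{-1/2}$ and $\bC = \bC_0\bG^{1/2}$ both have rank $r$ and satisfy $\bC\bR^{\mathrm{T}} = \bA$, completing the construction.

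This argument is essentially routine, so there is no substantial obstacle; the only point needing care is confirming that $\bG$ is genuinely positive definite, and hence invertible with a well-defined symmetric square root. This hinges on $\bR_0$ having \emph{full} column rank $r$, not merely on its possessing $r$ columns, which is exactly what the rank-$r$ factorization guarantees. Once that is in place, the symmetric square root furnishes the explicit $\bM$ and the two displayed identities give the verification immediately.
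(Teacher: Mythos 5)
Your proof is correct, and it takes a recognizably different route from the paper's. Both arguments start from an arbitrary rank factorization and then fix the gauge freedom to enforce the normalization $\bR^{\mathrm{T}}\bB\bR = \bI$, but the mechanisms differ. The paper works with the symmetric square root of the full $n\times n$ matrix $\bB$: writing $\bA = \bU\bV$, it takes the QR decomposition $\bB^{1/2}\bV^{\mathrm{T}} = \bQ\bZ$ with $\bQ$ semi-orthogonal, and sets $\bC = \bU\bZ^{\mathrm{T}}$, $\bR = \bB^{-1/2}\bQ$, so that $\bR^{\mathrm{T}}\bB\bR = \bQ^{\mathrm{T}}\bQ = \bI$. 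You instead whiten through the $r\times r$ Gram matrix $\bG = \bR_0^{\mathrm{T}}\bB\bR_0$, transforming by $\bM = \bG^{-1/2}$. Your construction only requires verifying positive definiteness of, and taking the square root of, an $r\times r$ matrix, and it never touches $\bB^{1/2}$ or $\bB^{-1/2}$; this is both more elementary and computationally lighter when $r \ll n$, which is exactly the regime the paper cares about. The paper's version, on the other hand, hands you $\bR$ in the form $\bB^{-1/2}\bQ$ with $\bQ$ semi-orthogonal, which makes the $\bB$-orthonormality of $\bR$ visually immediate and connects to the semi-orthogonal parameterizations used elsewhere in the paper. Your one point needing care --- that $\bG$ is positive definite because $\bR_0$ has full column rank --- is correctly identified and correctly handled, so there is no gap.
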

\begin{proof}
 Given a rank factorization $\bA = \bU\bV$  and the symmetric square root $\bB^{1/2}$, let $\bS = \bQ\bZ$ be the Q-R decomposition of $\bS = \bB^{1/2}\bV^{\mathrm{T}}$ where $\bQ$ is $n\times r$ semi-orthogonal and $\bZ$ is $r\times r$ nonsingular matrix. Then 
 $$\bA = \bU\bV = \bU\bV\bB^{1/2}\bB^{-1/2} = \bU\bV\bB^{1/2}\bB^{-1/2} = \bU\bZ^{\mathrm{T}}\bQ^{\mathrm{T}}\bB^{-1/2} = \bC\bR,$$ 
 where $\bC = \bU\bZ^{\mathrm{T}}$ and $\bR = \bB^{-1/2}\bQ$. For this choice, $\bR^{\mathrm{T}} \bB \bR = \bI.$
\end{proof}

\begin{proposition}
Let $\bOmega >0$ be a $d\times d$ positive definite matrix, which is $s$-sparse. Let $(\bU_1, \bV_1), \ldots$, $(\bU_p, \bV_p)$ be given full column rank matrix pairs, of order  $d\times r_1, \dots, d\times r_p$, respectively. Then 
there is a unique causal $\mathrm{VAR}(p)$ process with stationary variance matrix $\bGamma(0) = \bOmega^{-1}$ and autcovariances $\bGamma(1), \ldots, \bGamma(p)$ (and hence the coefficients $\bA_1, \ldots, \bA_p$ as in the Yule-Walker solution) uniquely determined recursively from  $\bW_j = \bU_j\bV_j^{\mathrm{T}}$.  The associated increments in the conditional precision matrices $\bC_j^{-1} - \bC_{j-1}^{-1}$ will be of rank $r_j.$

Conversely, let $\bX_t$ be a zero mean causal $\mathrm{VAR}(p)$ process such that $\bOmega = \bGamma(0)^{-1}$  is $s$-sparse and the increments in the conditional precision $\bC_j^{-1} - \bC_{j-1}^{-1}$ are of rank $r_j.$ Let $\bW_j$ be defined recursively as in \eqref{eq:Wj}. Then there are $d\times r_j$ matrices $(\bU_j, \bV_j)$, determined up to rotation, such that $\bW_j = \bU_j \bV_j^{\mathrm{T}}$ and  $\bV_j^{\mathrm{T}} \bD_{j-1}^{-1} \bV_j = \bI$ for each $j$; 
\label{prop:identi}
\end{proposition}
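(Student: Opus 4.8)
The plan is to prove the two directions separately, both resting on the recursive relations of Section~\ref{sec:likelihood_comp} together with the ordering characterization of causality in Proposition~\ref{prop:NewParameterization}; the converse is where Lemma~\ref{lem:rankfac} does the real work. The organizing observation throughout is that the product $\bW_j = \bU_j \bV_j^{\mathrm{T}}$ is invariant under the equivalence $(\bU_j, \bV_j) \equiv (\bU_j \bQ, \bV_j \bQ)$ for orthogonal $\bQ$, so every statement phrased in terms of $\bW_j$ descends to the equivalence classes $\mcC_{r_j}$ and the map is well defined on $\mcS^{++}_{d,s} \times \mcC_{r_1} \times \cdots \times \mcC_{r_p}$.

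For the forward direction I would initialize the recursion with $\bGamma(0) = \bOmega^{-1}$, $\bUpsilon_0 = \bC_0 = \bD_0 = \bGamma(0)$, and at each stage $j = 1, \ldots, p$ form $\bW_j = \bU_j \bV_j^{\mathrm{T}}$ and read off $\bGamma(j)^{\mathrm{T}} = \bW_j + \bxi_{j-1}^{\mathrm{T}} \bUpsilon_{j-2}^{-1} \bkappa_{j-1}$, then update $\bxi_j, \bkappa_j, \bUpsilon_j, \bC_j, \bD_j$ exactly as in the likelihood recursion. The autocovariances $\bGamma(0), \ldots, \bGamma(p)$ so produced determine $\bUpsilon_p$ and hence, through the Yule-Walker solution, a unique set of coefficients $\bA_1, \ldots, \bA_p$ and innovation variance $\bSigma$; this yields uniqueness of the VAR$(p)$ process. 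Causality I would obtain from Proposition~\ref{prop:NewParameterization}: the Schur-complement update $\bC_j = \bC_{j-1} - \bU_j \bU_j^{\mathrm{T}}$ stays positive definite because $\|\bU_j^{\mathrm{T}} \bC_{j-1}^{-1} \bU_j\| < 1$, and the Sherman-Woodbury-Morrison identity then forces $\bC_j^{-1} - \bC_{j-1}^{-1} \geq \bzero$, so that $\bOmega \leq \bC_1^{-1} \leq \cdots \leq \bC_p^{-1}$ and the process is causal. Finally, since $\bU_j \bU_j^{\mathrm{T}} = \bW_j \bD_{j-1}^{-1} \bW_j^{\mathrm{T}}$ and $\bU_j, \bV_j$ have full column rank $r_j$, the increment $\bC_j^{-1} - \bC_{j-1}^{-1}$ has rank exactly $r_j$.

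For the converse I would start from a zero-mean causal VAR$(p)$ with $\bOmega = \bGamma(0)^{-1}$ that is $s$-sparse, recover $\bGamma(1), \ldots, \bGamma(p)$ and the Schur complements $\bD_{j-1}$ from \eqref{eq:SchurComplement}, and define $\bW_j$ recursively as in \eqref{eq:Wj}. Because $\bU_j \bU_j^{\mathrm{T}} = \bW_j \bD_{j-1}^{-1} \bW_j^{\mathrm{T}}$ and the increment $\bC_j^{-1} - \bC_{j-1}^{-1}$ is assumed to have rank $r_j$, the matrix $\bW_j$ has rank $r_j$. Applying Lemma~\ref{lem:rankfac} with $\bA = \bW_j$ and the positive definite $\bB = \bD_{j-1}^{-1}$ produces full-column-rank $\bU_j, \bV_j$ with $\bW_j = \bU_j \bV_j^{\mathrm{T}}$ and $\bV_j^{\mathrm{T}} \bD_{j-1}^{-1} \bV_j = \bI$, which is precisely the required factorization. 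For the ``up to rotation'' claim, suppose $(\tilde{\bU}_j, \tilde{\bV}_j)$ is a second such factorization; then $\tilde{\bU}_j \tilde{\bU}_j^{\mathrm{T}} = \bW_j \bD_{j-1}^{-1} \bW_j^{\mathrm{T}} = \bU_j \bU_j^{\mathrm{T}}$, so $\tilde{\bU}_j = \bU_j \bQ$ for an $r_j \times r_j$ orthogonal $\bQ$, and left-invertibility of $\bU_j$ applied to $\bU_j \bV_j^{\mathrm{T}} = \bU_j \bQ \tilde{\bV}_j^{\mathrm{T}}$ forces $\tilde{\bV}_j = \bV_j \bQ$, placing the two factorizations in the same class of $\mcC_{r_j}$.

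The main obstacle I anticipate is not any single computation but the bookkeeping that makes the forward construction genuinely well posed: one must verify that the recursion preserves positive definiteness of $\bUpsilon_j$ at every stage, equivalently that $\|\bU_j^{\mathrm{T}} \bC_{j-1}^{-1} \bU_j\| < 1$ is maintained, since it is exactly this condition that keeps the Schur complements invertible and simultaneously delivers the monotone precision ordering needed for causality. The remaining delicate point is to confirm that the forward and converse maps are mutual inverses on the equivalence classes, i.e. that recovering $\bW_j$ from the constructed process returns the class one started from; this reduces to the rotation-invariance of $\bW_j = \bU_j \bV_j^{\mathrm{T}}$ together with the uniqueness argument above. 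The $s$-sparsity enters only as the restriction of the domain to $\mcS^{++}_{d,s}$ and plays no role beyond bookkeeping.
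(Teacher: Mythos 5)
Your proposal follows essentially the same route as the paper's own proof: the forward direction constructs $\bGamma(0),\ldots,\bGamma(p)$ recursively from $\bW_j=\bU_j\bV_j^{\mathrm{T}}$, gets the coefficients from the Yule--Walker relations, and deduces causality from positive definiteness of the block Toeplitz matrix, while the converse applies Lemma~\ref{lem:rankfac} with $\bB=\bD_{j-1}^{-1}$ to the rank-$r_j$ matrices $\bW_j$ exactly as the paper does. The only notable differences are cosmetic improvements on your side: you spell out the uniqueness-up-to-rotation argument (via $\tilde{\bU}_j\tilde{\bU}_j^{\mathrm{T}}=\bU_j\bU_j^{\mathrm{T}}$ and left-invertibility of $\bU_j$) that the paper merely asserts, and you explicitly flag the well-posedness requirement $\|\bU_j^{\mathrm{T}}\bC_{j-1}^{-1}\bU_j\|<1$ in the forward recursion, an implicit assumption that the paper's proof also relies on without comment.
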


\begin{proof}
  Given the parameters $\bOmega, (\bU_1,\bV_1), \ldots, (\bU_p, \bV_p)$, the autocovariances are obtained as $
  \bGamma(0)  = \bOmega^{-1}$ and $\bGamma(j)^T = \bU_j\bV_j^{\mathrm{T}} + \bxi_{j-1}\bGamma^{-1}_{j-2}\bkappa_{j-1}$ with $\bxi_j$ and $\bkappa_j$ determined recursively. The block Toeplitz matrix $\bUpsilon_p$ will be positive definite since the associated $\bC_{j-1} - \bC_j = \bU_j\bU_j^{\mathrm{T}}$
  are all nonnegative definite. Hence, the associated VAR polynomial whose coefficients are obtained as $(\bA_1, \ldots, \bA_p) = \bxi_p^{\mathrm{T}}\bUpsilon_{p-1}^{-1}$ will be Schur stable. Also, $\bSigma = (\bOmega + \sum_{j=1}^p \bC_j^{-1}\bU_j(\bI + \bU_j^T\bC_j^{-1}\bU_j)^{-1}\bU_j^{\mathrm{T}}\bC_j^{-1})$ will be positive definite, thereby making the associated VAR process a causal process.

  For the converse, note that once $\bW_j$ are obtained, by Lemma~\ref{lem:rankfac} they can be factorized as $\bW_j = \bU_j \bV_j^{\mathrm{T}}$ such that $\bV_j$ satisfy $\bV_j^{\mathrm{T}} \bD_{j-1} \bV_j = \bI.$ 
  However, the pairs $(\bU_j, \bV_j)$ are unique only up to rotation with orthogonal matrices of order $r_j \times r_j.$
  The increments $\bC_{j-1} - \bC_j$ are equal to $\bW_j \bD_{j-1}^{-1}\bW_j^{\mathrm{T}}$ and hence are nonnegative definite of rank $r_j$. Thus,  the increments 
$\bC_{j}^{-1} - \bC_{j-1}^{-1}$ are also nonnegative definite of rank $r_j$. 
\end{proof}

	%\bibliographystyle{aoas_version/imsart-nameyear}
	%\bibliography{main}

\section*{Supplementary materials}

\setcounter{section}{0}

\makeatletter
\renewcommand \thesection{S\@arabic\c@section}
\renewcommand\thetable{S\@arabic\c@table}
\renewcommand \thefigure{S\@arabic\c@figure}
\makeatother

\section{Proof of the Main Theorems}
\label{sec:proof main}

We follow arguments in the general posterior contraction rate result from Theorem~8.19 of \cite{GhosalBook} on the joint density of the entire multivariate time series by directly constructing a likelihood ratio test satisfying a condition like (8.17) of \cite{GhosalBook}. We also use the simplified prior concentration condition (8.22) and global entropy condition (8.23) of \cite{GhosalBook}. It is more convenient to give direct arguments than to fit within the notations of the theorem. 

\subsection{Prior concentration: pre-rate}
\label{prior concentration}

Recall that the true parameter is denoted by $\bzeta_0$, and the corresponding dispersion matrix is $\bUpsilon_{T,0}$. We shall obtain the pre-rate $\bar\epsilon_T$ satisfying the relation 
\begin{align}
\label{KL proprty}
-\log \Pi (K(Q_{\bzeta_0}, Q_{\bzeta})\le T\bar\epsilon_T^2,  V(Q_{\bzeta_0}, Q_{\bzeta})\le T\bar\epsilon_T^2)\lesssim T\bar\epsilon_T^2, 
\end{align}
where $K$ and $V$, respectively, stand for the Kullback-Leibler divergence and Kullback-Leibler variation. To proceed, we show that the above event contains $\{\bzeta: \|\bzeta-\bzeta_0\|_\infty\le \eta_T\}$ for some $\eta_T$ and estimate the probability of the latter, when $\eta_T$ is small. 

 We note that by Relation (iii) of Lemma~\ref{norm comparison}, $\|\bUpsilon_{T,0}\|^2_{\mathrm{op}}\le \|\bSigma_0\|^2_{\mathrm{op}}$, while by Relation (i), $\|\bUpsilon_{T,0}^{-1}\|^2_{\mathrm{op}}\le \|\bSigma_0^{-1}\|^2_{\mathrm{op}}$, which are bounded by a constant by Condition (A5). 
 Next observe that by Relation (iii) applied to $\bUpsilon_T$, $\op{\bUpsilon_T}\le \op{\bSigma}\le \op{\bOmega^{-1}}=\|1/\bm{f}\|_\infty$, where $1/\bm{f}$ refers to the vector of the reciprocals of the entries of $\bm{f}$. Since the entries of $\bm{f}_0$ lie between two fixed positive numbers, so do the entries of $\bm{f}$ and $1/\bm{f}$ when $\|\bm{f}-\bm{f}_0\|_\infty$ is small. Then it is immediate that (see, e.g., Lemma~4 of \cite{roy2024bayesian}) that $\Frob{\bOmega-\bOmega_0}\lesssim d\eta$. Applying Lemma~\ref{lem:around_the_truth}, we obtain the relation $\Frob{\bUpsilon_{T,0}-\bUpsilon_T}\lesssim drT^2 \eta$, which is small for $\eta$ sufficiently small. Thus, $\Frob{\bUpsilon_{T,0}^{-1/2}(\bUpsilon_{T,0}-\bUpsilon_T) \bUpsilon_{T,0}^{-1/2}}$ is also small. Since the operator norm is weaker than the Frobenius norm, this implies all eigenvalues of $\bUpsilon_{T,0}^{-1/2}(\bUpsilon_{T,0}-\bUpsilon_T) \bUpsilon_{T,0}^{-1/2}$ are close to zero. 
Now, expanding the estimates in Parts (ii) and (iii) of Lemma~\ref{normal divergences} in a Taylor series, the Kullback-Leibler divergence and variation between $Q_{\bzeta_0}$ and $Q_{\bzeta}$ are bounded by a multiple of 
$\Frob{\bUpsilon_{T,0}^{-1/2}(\bUpsilon_{T,0}-\bUpsilon_T) \bUpsilon_{T,0}^{-1/2}}^2\le d^2r^2 T^4 \eta^2= T \bar\epsilon_T^2$ if $\eta$ is chosen to be $d^{-1} T^{-3/2} \epsilon_T$. The prior probability of $\|\bzeta-\bzeta_0\|_\infty \le \eta$ is, in view of the assumed a priori independence of all components of $\bzeta$, bounded below by $(\bar c \eta)^{\mathrm{dim}(\bzeta)}$ for some constant $\bar c>0$. 
Hence $-\log \Pi (\|\bzeta-\bzeta_0\|_\infty \le \eta)\lesssim (d^2+d+2pdr)\log (1/\eta)\lesssim d^2 \log(1/\eta)$. Thus \eqref{KL proprty} for $T\bar \epsilon_T^2\asymp d^2 \log T$, that is for $\bar\epsilon_T\asymp d \sqrt{(\log T)/T}$, or any larger sequence.

\subsection{Test construction}

Recall that the true parameter is denoted by $\bzeta_0$, and the corresponding dispersion matrix is $\bUpsilon_{T,0}$. Let $\bzeta_1$ be another point in the parameter space such that the corresponding dispersion matrix $\bUpsilon_{T,1}$ satisfies $\Frob{\bUpsilon_{T,1}-\bUpsilon_{T,0}}>\sqrt{T} \epsilon_T$, where $\epsilon_T=M\bar\epsilon_T$ is a large constant multiple of the pre-rate $\bar{\epsilon}_T$ obtained in Subsection~\ref{prior concentration}. We first obtain a bound for the type I and type II error probabilities for testing the hypothesis $\bzeta=\bzeta_0$ against $\bzeta=\bzeta_1$. 

Let $\phi_T=\mathbbm{1}\{Q_{\bzeta_1}/Q_{\bzeta_0}>1\}$ stand for the likelihood ratio test. Then, by the Markov inequality applied to the square root of the likelihood ratio, both error probabilities are bounded by $e^{-R(Q_{\bzeta_1},Q_{\bzeta_0})}$, where $R$ stands for the Reyni divergence $-\log \int \sqrt{Q_{\bzeta_1} Q_{\bzeta_0}}$. 

Let $\rho_1,\ldots,\rho_p$ stand for the eigenvalues of $\bUpsilon_{T,0}^{-1/2}(\bUpsilon_{T,1}-\bUpsilon_{T,0})\bUpsilon_{T,0}^{-1/2}$. By Lemma~\ref{normal divergences}, the Reyni divergence is given by $\frac14 \sum_{j=1}^T [2\log (1+\rho_j)-\log(1+\rho_j)]$. By arguing as in the proof of Lemma~1 of \cite{roy2024bayesian}, each term inside the sum can be bounded below by $c_2 \min(\rho_j^2,c_1)$ for some constants $c_1,c_2>0$, where $c_1$ can be chosen as large as we like at the expense of making $c_2$ appropriately smaller. Under Condition (A4), using Relation (iii) of Lemma~\ref{norm comparison}, we obtain that the $\rho_j$ are bounded by some constant not changing with $T$. Thus, with a sufficiently large $c_1$, the minimum operation in the estimate is redundant; that is, the Reyni divergence is bounded below by 
$\frac14 c_2 \Frob{\bUpsilon_{T,0}^{-1/2}(\bUpsilon_{T,1}-\bUpsilon_{T,0})\bUpsilon_{T,0}^{-1/2}}^2\ge c_2' T\epsilon_T^2$ for some constant $c_2'>0$. This follows since $\Frob{\bUpsilon_{T,0}^{-1/2}(\bUpsilon_{T,1}-\bUpsilon_{T,0})\bUpsilon_{T,0}^{-1/2}}\ge \Frob{\bUpsilon_{T,1}-\bUpsilon_{T,0}}/\|\bUpsilon_{T,0}\|^2_{\mathrm{op}}$ and by Relation (iii) of Lemma~\ref{norm comparison}, $\|\bUpsilon_{T,0}\|^2_{\mathrm{op}}\le \|\bOmega^{-1}_0\|^2_{\mathrm{op}}$, which is bounded by a constant by Condition (A5). Thus both error probabilities are bounded by $e^{-c_2' T \epsilon_T^2}$ for some $c_2'>0$. 

Now let $\bzeta_2$ be another parameter with the associated dispersion matrix $\bUpsilon_{T,2}$ such that $\Frob{\bUpsilon_{T,2}-\bUpsilon_{T,1}}<\delta_T$, where $\delta_T$ is to be chosen sufficiently small. Then, using the Cauchy-Schwarz inequality and Part (iv) of Lemma~\ref{normal divergences}, the probability of type II error of $\phi_T$ at $\bzeta_2$ is bounded by 
\begin{align}
    \EE_{\bzeta_2}(1-\phi_T) &\le [\EE_{\bzeta_1}(1-\phi_T)]^{1/2} [\EE_{\bzeta_1} (Q_{\bzeta_2}/Q_{\bzeta_1})^2 ]^{1/2} \nonumber\\ 
    &\le \exp\{-c_2'T\epsilon_T^2/2+\op{\bUpsilon_{T,1}^{-1}}^2\Frob{\bUpsilon_{T,2}-\bUpsilon_{T,1}}^2/2\}.
    \label{type II error}
\end{align}
Consider a sieve $\mathcal{F}_T$ consisting of all alternative parameter points $\bzeta$ such that $\|\bm{f}\|_\infty, \|1/\bm{f}\|_\infty\le C ({T} \bar\epsilon_T^2)^{1/\gamma}$ and $\max( \|\bL_1\|_\infty ,\ldots,\|\bL_p\|_\infty)\le C ({T} \bar \epsilon_T^2)^{1/\gamma}$, $\max( \|\bK_1\|_\infty ,\ldots,\|\bK_p\|_\infty)\le C ({T} \bar \epsilon_T^2)^{1/\gamma}$, $\|\bE\|_\infty \le C ({T} \bar \epsilon_T^2)^{1/\gamma}$ %(and something for $\bE$, $\bK_1,\ldots,\bK_p$ depending on the estimate of Frobenius for $\bGamma$) 
for some constant $C$ to be chosen later, where $\gamma$ is as in Condition (A2). By Part (i) of Lemma~\ref{norm comparison}, for $\bzeta_1\in \mathcal{F}_T$, 
$$\op{\bUpsilon_{T,1}^{-1}}\le \op{\bSigma_1^{-1}}\le \op{\bOmega_1}+\sum_{j=1}^p \op{\bL_{j,1}\bL_{j,1}\trans}\le \|\bm{f}_1\|_\infty+ \sum_{j=1}^p \tr(\bL_{j,1}\bL_{j,1}\trans), $$
which is bounded by 
$$\|\bm{f}_1\|_\infty+pdr\max_j(\|\bL_{j,1}\|_\infty^2)\le C({T} \bar\epsilon_T^2)^{1/\gamma}+pdr (T\bar\epsilon_T^2)^{2/\gamma}\le C' dr (T\bar\epsilon_T^2)^{2/\gamma}$$  
for some constant $C'>0$ as $p$ is fixed and $T\bar\epsilon_T^2\to \infty$. Thus the expression in \eqref{type II error} is bounded by $\exp[-c_2 T\epsilon_T^2+ (C'dr)^2 (T\bar\epsilon_T^2)^{4/\gamma}\delta_T^2]\le \exp[-c_3 T\epsilon_T^2]$ for some constant $c_3>0$ by choosing $\delta_T$ a sufficiently small constant multiple of $(C'dr)^{-1} (T\bar\epsilon_T^2)^{1/2-2/\gamma}$ and $M>0$ large enough.

 Under the Lemma~\ref{lem:for_sieve}, within the sieve, we have $O_T=G_T\lesssim (T\bar\epsilon_T^2)^{2/\gamma}$ and $C_T=C^2dr(T\bar\epsilon_T^2)^{2/\gamma}$. Then $M_{\ell,U}\lesssim d(T\bar\epsilon_T^2)^{4/\gamma}, M_{\ell,V}\lesssim d(T\bar\epsilon_T^2)^{2/\gamma}$ and $M_{\ell,C,V,U}\lesssim d^p(T\bar\epsilon_T^2)^{8p/\gamma}$. Then $M_{P,1},M_{P,2}\lesssim d^{3+p}(T\bar\epsilon_T^2)^{(8+8p)/\gamma}$. Thus, $$\|\bUpsilon_{1,T}-\bUpsilon_{2,T}\|^2_{\mathrm{F}}\leq T^3\zeta_T^2(T\bar\epsilon_T^2)^{2/\gamma}d^{3+p}(T\bar\epsilon_T^2)^{(8+8p)/\gamma} = \delta^2_T,$$
when $\max(\|\bOmega_1-\bOmega_2\|^2_{\mathrm{F}},\|\bK_{1,j}-\bK_{2,j}\|^2_{\mathrm{F}},\|\bL_{1,k}-\bL_{2,k}\|^2_{\mathrm{F}})\lesssim \zeta_T^2$

We have, 
\begin{itemize}
\item [(1)]  $\|\bK_{1,j}-\bK_{2,j}\|^2_{\mathrm{F}}\le dr\|\bK_{1,j}-\bK_{2,j}\|^2_{\infty}$;
\item [(2)] $\|\bL_{1,j}-\bL_{2,j}\|^2_{\infty}\le dr\|\bL_{1,j}-\bL_{2,j}\|^2_{\mathrm{\infty}}$; 
\item [(3)] $\|\bOmega_1-\bOmega_2\|^2_{\mathrm{F}}$ is bounded by $$\|\bm{f}_1-\bm{f}_2\|^2_{\mathrm{F}} + 2\|\bm{f}_1\|_{\infty}\|\bE-\bE_0\|_{\mathrm{F}}^2\le d\|\bm{f}_1-\bm{f}_2\|^2_{\infty} + 2\|\bm{f}_1\|_{\infty}\|\bE_1\|_0\|\bE_1-\bE_2\|_{\infty}^2.$$
\end{itemize} 

Using (1), (2), and (3) above, we have 
\begin{itemize} 
\item [(i)] $\|\bK_{1,j}-\bK_{2,j}\|^2_{\infty},\|\bL_{1,j}-\bL_{2,j}\|^2_{\infty}\le \zeta_T^2/dr$;
\item [(ii)] $ \|\bm{f}_1-\bm{f}_2\|^2_{\infty} \le \zeta_T^2/d$; 
\item [(iii)] $ \|\bE_1-\bE_2\|_{\infty}^2 \le \max\{ ({T} \bar \epsilon_T^2)^{-2/\gamma}\log d, 1/d^2\}\zeta_T^2$ implies 
$$\{\|\bOmega_1-\bOmega_2\|^2_{\mathrm{F}},\|\bK_{1,j}-\bK_{2,j}\|^2_{\mathrm{F}},\|\bL_{1,k}-\bL_{2,k}\|^2_{\mathrm{F}}\}\lesssim \zeta_T^2.$$ 
\end{itemize}
Using the result from the above display, we let 
$$\zeta_T^2=\delta_T^2T^{-3}(T\bar\epsilon_T^2)^{-2/\gamma}d^{-3-p}(T\bar\epsilon_T^2)^{-(8+8p)/\gamma}.$$
%Observe that by Lemma ??, $\Frob{\bUpsilon_2-\bUpsilon_1}\le $??, and hence for any $\bzeta_2$ in $\mathcal{F}_T$ satisfying $\Frob{\bUpsilon_{T,2}-\bUpsilon_{T,1}}<\delta_T$. 
Thus, at $\bzeta_2 \in \mathcal{F}_T$ with $\Frob{\bUpsilon_{T,2}-\bUpsilon_{T,1}}<\delta_T$, 
$$ \EE_{\bzeta_2}(1-\phi_T)\le \exp[-c_3 T\epsilon_T^2].$$ 

\subsection{Rest of the proof of Theorem~2}

To show that the posterior probability of $\{\bzeta: \Frob{\bGamma-\bUpsilon_0}>\epsilon_T\}$ converges to zero in probability, it remains to show that the prior probability of the complement of the sieve is exponentially small: $\Pi (\mathcal{F}_T^c)\le e^{-c' T\bar \epsilon_T^2}$ for some sufficiently large $c'>0$. 

Observe that 
\begin{align*}
\Pi  (\mathcal{F}_T^c) &\le \sum_{j=1}^d \Pi (f_j>C ({T}\bar\epsilon_T^2)^{1/\gamma})+\sum_{k=1}^p\sum_{j=1}^d \sum_{l=1}^r \Pi (|L_{1,kl}| >C ({T}\bar\epsilon_T^2)^{1/\gamma})\\&\quad+\sum_{k=1}^p\sum_{j=1}^d \sum_{l=1}^r \Pi (|L_{1,kl}| >C ({T}\bar\epsilon_T^2)^{1/\gamma}) \\
&\le p C_0 e^{-c_0 C^\gamma T \bar\epsilon_T^2}+2pdr C_0 e^{-c_0 C^\gamma T \bar\epsilon_T^2}.
\end{align*}
Since $\log d\ll T\bar\epsilon_T^2$, the above expression is bounded by $B_0 e^{-b_0 T \epsilon_T^2}$ for some $B_0,b_0>0$ and $b_0$ can be chosen as large as we please by making $C$ in the definition of the sieve $\mathcal{F}_T$ large enough.
Hence, by arguing as in the proof of Theorem~8.19 of \cite{GhosalBook}, it follows that the posterior contraction rate for $\bUpsilon_T$ at $\bUpsilon_{T,0}$ in terms of the Frobenius distance is $\sqrt{T}\sqrt{d^2 (\log T)/T}=d\sqrt{\log T}$. 

Let the collection of such $\bzeta_2$ with $\Frob{\bUpsilon_{T,2}-\bUpsilon_{T,1}}<\delta_T$ be denoted by 
\begin{align*} 
\mathcal{B}_T &=\{\|\bK_{1,j}-\bK_{2,j}\|^2_{\infty}, \|\bL_{1,j}-\bL_{2,j}\|^2_{\infty}\le \zeta_T^2/dr,\\
& \qquad\qquad
\|\bm{f}_1-\bm{f}_2\|^2_{\infty} \le \zeta_T^2/d, \|\bE_1-\bE_2\|_{\infty}^2 \le \zeta_T^2/d^2\}, 
\end{align*}
where $\zeta_T^2=\delta_T^2T^{-3}(T\bar\epsilon_T^2)^{-2/\gamma}d^{-3-p}(T\bar\epsilon_T^2)^{-(8+8p)/\gamma}$. 
The number of such sets $\mathcal{B}_T$ needed to cover $\mathcal{F}_T$ is at most $N_T\le T^{B}$ following the similar lines of arguments as in Lemma 8 of \cite{roy2024bayesian} setting $\delta_T$ a negative polynomial in $T$. Therefore $\log N_T \lesssim T\epsilon_T^2 $. For each $\mathcal{B}_T$ containing a point $\bzeta_1$ such that $\Frob{\bUpsilon_{T,1}-\bUpsilon_{T,0}}>\epsilon_T$, we choose a representative and construct the likelihood ratio test $\phi_T$. The final test for testing the hypothesis $\bzeta=\bzeta_0$ against the alternative $\{\bzeta: \Frob{\bUpsilon_{T,1}-\bUpsilon_{T,0}}>\sqrt{T}\epsilon_T\}$ is the maximum of these tests. Then, the resulting test has the probability of type II error bounded by $\exp[-c_3 T\epsilon_T^2]$ while the probability of type I error is bounded by $\exp[\log N_T-c_2' T \epsilon_T^2]\le \exp[-c_2' T \epsilon_T^2/2]$ if we choose $M>0$ sufficiently large.

Finally, we convert the notion of convergence to those for $\bOmega$ and $\bA_1,\ldots,\bA_p$. 
Recall that the eigenvalues of $\Gamma_{T,0}$ are bounded between two fixed positive numbers, because by Lemma~\ref{norm comparison}, $\op{\bUpsilon_0}= \op{\bOmega^{-1}_0}$ and $\op{\bUpsilon_0^{-1}}\le \op{\bSigma_0^{-1}}$ and $\bSigma_0$ and $\Omega_0$ are assumed to have eigenvalues bounded between two fixed positive numbers. Now, applying Lemmas~\ref{lem:boundtransfer1} and \ref{lem:boundtransfer2} respectively,  
$\Frob{\bOmega-\bOmega_0}^2\le C_1 T^{-1}\Frob{\bUpsilon_T-\bUpsilon_{T,0}}^2$ and for all $j=1,\ldots,p$, $\Frob{\bA_j-\bA_{j,0}}^2\le C_1 T^{-1}\Frob{\bUpsilon_T-\bUpsilon_{T,0}}^2$ for some constant $C_1>0$. These relations yield the rate $d\sqrt{(\log T)/T}$ for the precision matrix $\bOmega$ and all regression coefficients at their respective true values in terms the Frobenius distance. 

\subsection{Proof of Theorem~3}

When $\bOmega_0$ is sparse and the prior for $\bOmega$ imposes sparsity as described in Section~3.4, we show the improved rate $\sqrt{(d+s)(\log T)/T}$. To this end, we refine the estimate of the prior concentration in Subsection~\ref{prior concentration}. The arguments used for the test construction, bounding the prior probability of the complement of the sieve and for converting the rate result in terms of the Frobenius distance on $\bOmega$ and $\bR_1,\ldots,\bR_p$ remain the same. Except to control the number of nonzero entries in $\bE$, we add $\|\bE\|_0\le C ({T} \bar \epsilon_T^2)^{1/\gamma}/\log d$ in the sieve and let 
\begin{align*} 
\mathcal{B}_T &=\{\|\bK_{1,j}-\bK_{2,j}\|^2_{\infty}\le \zeta_T^2/dr,\|\bL_{1,j}-\bL_{2,j}\|^2_{\infty}\le \zeta_T^2/dr, \\ 
&\qquad \|\bm{f}_1-\bm{f}_2\|^2_{\infty} \le \zeta_T^2/d, \|\bE_1-\bE_2\|_{\infty}^2 \le ({T} \bar \epsilon_T^2)^{-2/\gamma}\log d\zeta_T^2\}. 
\end{align*}
Covering number calculation will remain identical to Lemma 8 of \cite{roy2024bayesian}.

Since $r$ is assumed to be a fixed constant, $-\log \Pi (\|\bL_j-\bL_{j,0}\|\le \eta)\lesssim d \log (1/\eta)$, $j=1,\ldots,p$, and so is the corresponding estimate for $\bK_1,\ldots,\bK_p$. For $\bm{f}$, the corresponding estimate $d\log (1/\eta)$ does not change from the previous scenario. However, $\bE_0$ is now $s$-dimensional instead of $\binom{d}{2}$. The estimate for $-\log \Pi (\|\bE-\bE_{0}\|_\infty\le \eta)$ under the shrinkage prior improves to a multiple of $(d+s)\log (1/\eta)$ by the arguments given in the proof of Theorem~4.2 of \cite{shi2021bayesian}. Thus 
$$-\log \Pi (\|\bzeta-\bzeta_0\|_\infty\le \eta)\lesssim (d+s) \log (1/\eta),$$ 
leading to the asserted pre-rate $\bar \eta=\sqrt{(d+s)(\log T)/T}$. We also note that the additional sparsity imposed on $\bL_1,\ldots,\bL_p$ through the prior does not further improve the rate when $r$ is a fixed constant and it may be replaced by a fully non-singular prior.

\section{Auxiliary lemmas and proofs}
\label{sec:lemmas}

%For a $d$-dimensional centered covariance-stationary time series, the autocovariance is given by $\bGamma(h)=Cov(\bX_t,\bX_{t+h}), t,h \in \mathbb{Z}$ and the corresponding spectral density is given by $\bff(\omega)=\frac{1}{2\pi}\sum_{h=-\infty}^{\infty}\bGamma(h)e^{-ih\omega}, \omega \in [-\pi, \pi]$. For a Gaussian stable VAR(d) model, the spectral density has a closed-form expression,

The following lemma expresses certain divergence measures in the family of centered multivariate normal distributions. The proofs follow from direct calculations. 

\begin{lemma}
\label{normal divergences}
   Let $f_1$ and $f_2$ be probability densities of $k$-dimensional normal distributions with mean zero and dispersion matrices $\bm{\Delta}_1$ and  $\bm{\Delta}_2$ respectively. Let $\bm{R}=\bm{\Delta}_1^{-1/2}(\bm{\Delta}_2- \bm{\Delta}_1)\bm{\Delta}_1^{-1/2}$ with eigenvalues $-1<\rho_1,\ldots,\rho_k<\iy$. Then the following assertions hold: 
   \begin{enumerate}
       \item [{\rm (i)}] The Reyni divergence $R(f_1,f_2) =-\log \int \sqrt{f_1 f_2}$ is given by 
       \begin{eqnarray*}
       \lefteqn{\frac12 \log \det ((\bm{\Delta}_1+\bm{\Delta}_2)/2) -\frac14 \log \det (\bm{\Delta}_1)-\frac14\log\det (\bm{\Delta}_2)}\\
       &&=\frac12 \log \det (\bI+\bR/2)-\frac14 \log\det (\bI+\bR)\\
       &&=\frac14 \sum_{j=1}^k [2\log (1+\rho_j/2)-\log (1+\rho_j)].
       \end{eqnarray*}
       \item [{\rm (ii)}] The Kullback-Leibler divergence $K(f_1,f_2) =\int f_1 \log (f_2/f_1)$ is given by 
       \begin{eqnarray*}
       \lefteqn{\frac12 \log \det (\bm{\Delta}_2)-\frac12 \log\det (\bm{\Delta}_1) +\frac12 \tr(\bm{\Delta}_2^{-1/2}(\bm{\Delta}_2- \bm{\Delta}_1) \bm{\Delta}_2^{-1/2}) }\\
       &&=\frac12 \log \det (\bI+\bR)+\frac12 \tr((\bI+\bR)^{-1}-\bI)\\
       &&=\frac12 \sum_{j=1}^k [\log (1+\rho_j)-\rho_j/(1+\rho_j)].
       \end{eqnarray*}
       \item [{\rm (iii)}] The Kullback-Leibler variation $V(f_1,f_2) =\int f_1 (\log (f_2/f_1)-K(f_1,f_2))^2$ is given by 
       \begin{align*}
       \frac12 \tr((\bm{\Delta}_2^{-1/2}(\bm{\Delta}_2- \bm{\Delta}_1) \bm{\Delta}_2^{-1/2})^2) 
       =\frac12 \tr[((\bI+\bR)^{-1}-\bI)^2]
       =\frac12 \sum_{j=1}^k \rho_j^2/(1+\rho_j)^2.
       \end{align*}
\item [{\rm (iv)}] If $2\bm{\Delta}_1-\bm{\Delta}_2$ is positive definite, then expected squared-likelihood ratio $\int (f_2/f_1)^2 f_1$ is given by 
\begin{align*} 
\frac{\det (\bm{\Delta}_1)}{\sqrt{\det(\bm{\Delta}_2) \det (2\bm{\Delta_1}-\bm{\Delta}_2)}}=\frac{1}{\sqrt{\det(\bI+\bR)\det(\bI-\bR)}}=\exp[\sum_{j=1}^k \log (\rho_j^2-1)/2]
\end{align*}
which is bounded by 
$$ \exp[\sum_{j=1}^k \rho_j^2/2]=\exp[ \Frob{\bR}^2/2]\le \exp [\op{\bDelta_1^{-1}}^2\|\Frob{\bDelta_2-\bDelta_1}^2/2].$$
   \end{enumerate}
\end{lemma}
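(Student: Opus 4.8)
The plan is to verify every identity by direct computation from the centered Gaussian densities $f_i(\bx)=(2\pi)^{-k/2}(\det\bm{\Delta}_i)^{-1/2}\exp(-\tfrac12\bx^{\mathrm{T}}\bm{\Delta}_i^{-1}\bx)$, $i=1,2$, and then to convert every matrix expression into the eigenvalues $\rho_1,\ldots,\rho_k$ of $\bm{R}$ through the single reduction $\bm{\Delta}_1^{-1/2}\bm{\Delta}_2\bm{\Delta}_1^{-1/2}=\bI+\bm{R}$, equivalently $\bm{\Delta}_2=\bm{\Delta}_1^{1/2}(\bI+\bm{R})\bm{\Delta}_1^{1/2}$. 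This reduction immediately gives $\det\bm{\Delta}_2=\det\bm{\Delta}_1\,\det(\bI+\bm{R})$, $\tfrac12(\bm{\Delta}_1+\bm{\Delta}_2)=\bm{\Delta}_1^{1/2}(\bI+\bm{R}/2)\bm{\Delta}_1^{1/2}$, and $2\bm{\Delta}_1-\bm{\Delta}_2=\bm{\Delta}_1^{1/2}(\bI-\bm{R})\bm{\Delta}_1^{1/2}$; in each formula the factors of $\det\bm{\Delta}_1$ cancel, and after diagonalizing the symmetric $\bm{R}$ the matrix expressions become the stated symmetric functions of the $\rho_j$.

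For (i) I would write $\sqrt{f_1f_2}$ as a constant times $\exp(-\tfrac12\bx^{\mathrm{T}}\bm{M}\bx)$ with $\bm{M}=\tfrac12(\bm{\Delta}_1^{-1}+\bm{\Delta}_2^{-1})$ and integrate via $\int_{\RR^k}\exp(-\tfrac12\bx^{\mathrm{T}}\bm{M}\bx)\,d\bx=(2\pi)^{k/2}(\det\bm{M})^{-1/2}$; the clean determinant bookkeeping comes from the identity $\tfrac12(\bm{\Delta}_1^{-1}+\bm{\Delta}_2^{-1})=\bm{\Delta}_1^{-1}\bigl[\tfrac12(\bm{\Delta}_1+\bm{\Delta}_2)\bigr]\bm{\Delta}_2^{-1}$, which produces the first displayed form, after which the reduction delivers the $\bm{R}$ form and the eigenvalue sum. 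For (ii) and (iii) I would note that $\log(f_2/f_1)$ is a quadratic form in $\bx$ plus a constant, so the Kullback--Leibler divergence follows from the first moment $\EE_1[\bx^{\mathrm{T}}\bm{M}\bx]=\tr(\bm{M}\bm{\Delta}_1)$ and the variation from the quadratic-form variance $\Var_1(\bx^{\mathrm{T}}\bm{M}\bx)=2\tr((\bm{M}\bm{\Delta}_1)^2)$ for symmetric $\bm{M}$. Both collapse to eigenvalue sums because the relevant matrix $\bI-\bm{\Delta}_2^{-1}\bm{\Delta}_1$ is similar to $(\bI+\bm{R})^{-1}\bm{R}$, whose eigenvalues are $\rho_j/(1+\rho_j)$, yielding $\sum_j \rho_j/(1+\rho_j)$ in (ii) and $\sum_j \rho_j^2/(1+\rho_j)^2$ in (iii), with the remaining $\tfrac12\log\det(\bI+\bm{R})$ in (ii) coming from $\det\bm{\Delta}_2/\det\bm{\Delta}_1$.

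For (iv) the integrand $f_2^2/f_1$ has exponent $-\tfrac12\bx^{\mathrm{T}}(2\bm{\Delta}_2^{-1}-\bm{\Delta}_1^{-1})\bx$, so the key preliminary step is to identify when the integral is finite: since $2\bm{\Delta}_2^{-1}-\bm{\Delta}_1^{-1}=\bm{\Delta}_1^{-1/2}(2(\bI+\bm{R})^{-1}-\bI)\bm{\Delta}_1^{-1/2}$ has eigenvalues $(1-\rho_j)/(1+\rho_j)$, positive definiteness is equivalent to $\bm{R}\prec\bI$, which is exactly the hypothesis $2\bm{\Delta}_1-\bm{\Delta}_2\succ 0$. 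The Gaussian integral then gives $\det\bm{\Delta}_2^{-1}(\det\bm{\Delta}_1)^{1/2}\det(2\bm{\Delta}_2^{-1}-\bm{\Delta}_1^{-1})^{-1/2}$, and the reduction collapses this to $[\det(\bI+\bm{R})\det(\bI-\bm{R})]^{-1/2}=\prod_j(1-\rho_j^2)^{-1/2}$, matching the displayed closed form; the concluding bound follows from the elementary scalar inequality $-\tfrac12\log(1-\rho_j^2)\le\tfrac12\rho_j^2/(1-\rho_j^2)$ together with $\sum_j\rho_j^2=\Frob{\bm{R}}^2\le\op{\bm{\Delta}_1^{-1}}^2\Frob{\bm{\Delta}_2-\bm{\Delta}_1}^2$. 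The only genuine obstacles are organizational: keeping the conjugations straight so that the $\det\bm{\Delta}_1$ factors cancel cleanly, and, in (iv), verifying the equivalence of the three positivity conditions and the range restriction on the $\rho_j$ needed for the final exponential bound; everything else is routine Gaussian calculus.
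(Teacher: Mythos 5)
Your overall strategy---explicit Gaussian integrals followed by the single reduction $\bm{\Delta}_2=\bm{\Delta}_1^{1/2}(\bI+\bR)\bm{\Delta}_1^{1/2}$---is exactly what the paper intends: its entire proof of this lemma is the sentence that the results ``follow from direct calculations,'' and your calculations for (i) and (iii), the determinant identities in (iv), and the equivalence of $2\bm{\Delta}_1-\bm{\Delta}_2\succ 0$ with $\bR\prec\bI$ are all correct. One caution on (ii): if you literally compute $\int f_1\log(f_2/f_1)$ as the statement defines $K$, you get $-\frac12\sum_j[\log(1+\rho_j)-\rho_j/(1+\rho_j)]$, the \emph{negative} of the displayed eigenvalue formula; the displayed formula (and what the paper actually needs for its prior-concentration argument, where $K\ge 0$) is the standard divergence $\int f_1\log(f_1/f_2)$, and the first display in (ii) carries yet another sign error on its trace term. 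Your writeup leaves the signs floating, so you should state explicitly which convention you are proving and flag the typos, since no choice of signs makes all three displays in (ii) agree as written.

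The genuine gap is the concluding bound in (iv). Your chain is: exact value $\exp[-\frac12\sum_j\log(1-\rho_j^2)]$, then $-\log(1-x)\le x/(1-x)$, giving $\exp[\frac12\sum_j\rho_j^2/(1-\rho_j^2)]$. But that quantity is \emph{larger} than the target $\exp[\frac12\sum_j\rho_j^2]$, since $x/(1-x)\ge x$ on $[0,1)$; the step from your intermediate bound to the asserted one is a non sequitur. In fact no argument can close it, because the bound claimed in the lemma is false as stated: from $-\log(1-x)\ge x$ one gets $\prod_j(1-\rho_j^2)^{-1/2}\ge\exp[\frac12\sum_j\rho_j^2]$, with strict inequality unless all $\rho_j=0$ (take $k=1$, $\rho_1=1/2$: the left side is $2/\sqrt{3}\approx 1.155$ while $e^{1/8}\approx 1.133$). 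The repair is either to keep your intermediate bound $\exp[\frac12\sum_j\rho_j^2/(1-\rho_j^2)]$ as the stated conclusion, or to add a hypothesis such as $\max_j\rho_j^2\le c<1$, which yields $\int(f_2/f_1)^2f_1\le\exp[\Frob{\bR}^2/(2(1-c))]\le\exp[\op{\bm{\Delta}_1^{-1}}^2\Frob{\bm{\Delta}_2-\bm{\Delta}_1}^2/(2(1-c))]$; either form suffices for the lemma's only use in the paper (the type~II error bound in the test construction), where it is applied with $\Frob{\bUpsilon_{T,2}-\bUpsilon_{T,1}}\le\delta_T$ small. Finally, note that the paper's middle expression $\exp[\sum_j\log(\rho_j^2-1)/2]$ is itself a typo for $\exp[-\sum_j\log(1-\rho_j^2)/2]$; your $\prod_j(1-\rho_j^2)^{-1/2}$ is the correct form, and it would strengthen your writeup to say so explicitly rather than silently substituting it.
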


\begin{lemma}
\label{norm comparison}
For all $j$, we have
\begin{itemize}
    \item [{\rm (i)}] $\|\bUpsilon_{j}^{-1}\|^2_{\mathrm{op}}\leq \|\bC_p^{-1}\|^2_{\mathrm{op}}$,
    \item [{\rm (ii)}] $\|\bD_j^{-1}\|^2_{\mathrm{op}}\leq \|\bC_p^{-1}\|^2_{\mathrm{op}}$, 
    \item [{\rm (iii)}] $\|\bUpsilon_{j}\|^2_{\mathrm{op}}\leq \|\bGamma(0)\|^2_{\mathrm{op}}$,
    \item [{\rm (iv)}] $\|\bD_j\|^2_{\mathrm{op}}\leq \|\bGamma(0)\|^2_{\mathrm{op}}=\|\bOmega^{-1}\|^2_{\mathrm{op}}$,
      \item [{\rm (v)}] $\|\bUpsilon_{j}^{-1}\|^2_{\mathrm{op}}\leq \max\{\|\bC^{-1}_j\|^2_{\mathrm{op}},\|\bUpsilon_{j-1}\|^2_{\mathrm{op}}\}\leq \|\bC_p^{-1}\|^2_{\mathrm{op}}$ 
\end{itemize}
\end{lemma}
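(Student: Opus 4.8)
The plan rests on a dichotomy. All matrices appearing in the statement are symmetric and positive definite, so for each of them the operator norm equals the largest eigenvalue and the operator norm of its inverse equals the reciprocal of the smallest eigenvalue; since every quantity below is nonnegative, it suffices to prove the unsquared inequalities and square at the end. Parts (ii) and (iv) concern the Schur-complement (conditional-variance) matrix $\bD_j$, which is trapped between $\bSigma$ and $\bGamma(0)$ in the Löwner order; these follow immediately. Parts (i), (iii) and (v) concern the full block matrix $\bUpsilon_j$ and require an induction built on a Schur-complement recursion, which is where the difficulty lies.

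For (iv), write $\bD_j=\bGamma(0)-\bkappa_j^{\mathrm T}\bUpsilon_{j-1}^{-1}\bkappa_j$; as $\bUpsilon_{j-1}^{-1}\succ\bzero$ the subtracted matrix is nonnegative definite, so $\bD_j\preceq\bGamma(0)$ and hence $\op{\bD_j}\le\op{\bGamma(0)}=\op{\bOmega^{-1}}$. For (ii), I would use that $\bD_j$ is the backward prediction-error variance of $\bX_{t-j}$ given $\bX_{t-j+1},\ldots,\bX_t$, which by stationarity coincides with the forward conditional variance $\bC_j=\Var(\bX_{j+1}\mid\bX_j,\ldots,\bX_1)$; the causality characterization \eqref{eq:CausalCond}, namely $\bGamma(0)=\bC_0\succeq\bC_1\succeq\cdots\succeq\bC_p=\bSigma$ with $\bC_j=\bSigma$ for $j\ge p$, then gives $\bD_j=\bC_j\succeq\bC_p=\bSigma$. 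Reversing the order on positive definite matrices yields $\bD_j^{-1}\preceq\bC_p^{-1}$, i.e. $\op{\bD_j^{-1}}\le\op{\bC_p^{-1}}$.

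For (i) the plan is to prove the recursion (v) and then induct. The inductive closure is clean: the base case is $\bUpsilon_0=\bC_0=\bGamma(0)$, and in the step
\[
\op{\bUpsilon_j^{-1}}\le\max\{\op{\bC_j^{-1}},\op{\bUpsilon_{j-1}^{-1}}\}\le\op{\bC_p^{-1}},
\]
because $\op{\bC_j^{-1}}\le\op{\bC_p^{-1}}$ follows from $\bC_j\succeq\bC_p$ in \eqref{eq:CausalCond} and $\op{\bUpsilon_{j-1}^{-1}}\le\op{\bC_p^{-1}}$ is the inductive hypothesis. To attack (v) itself I would partition a unit vector as $\bv=(\bv_1,\bv_2)$ conformally with $\bUpsilon_j=\begin{bmatrix}\bGamma(0)&\bxi_j^{\mathrm T}\\\bxi_j&\bUpsilon_{j-1}\end{bmatrix}$ and complete the square in the block-inverse formula to obtain
\[
\bv^{\mathrm T}\bUpsilon_j^{-1}\bv=\bv_2^{\mathrm T}\bUpsilon_{j-1}^{-1}\bv_2+\bigl\|\bC_j^{-1/2}(\bv_1-\bxi_j^{\mathrm T}\bUpsilon_{j-1}^{-1}\bv_2)\bigr\|^2 .
\]
Part (iii) would be treated by the companion identity $\bv^{\mathrm T}\bUpsilon_j\bv=\bv_1^{\mathrm T}\bGamma(0)\bv_1+2\bv_1^{\mathrm T}\bxi_j^{\mathrm T}\bv_2+\bv_2^{\mathrm T}\bUpsilon_{j-1}\bv_2$ and the same base case.

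The main obstacle is the first inequality of (v). In the displayed identity the inner vector $\bv_1-\bxi_j^{\mathrm T}\bUpsilon_{j-1}^{-1}\bv_2$ can have norm larger than $\|\bv_1\|$, so bounding the two summands separately by $\op{\bC_j^{-1}}$ and $\op{\bUpsilon_{j-1}^{-1}}$ does not by itself deliver the clean maximum; the analogous cross term $2\bv_1^{\mathrm T}\bxi_j^{\mathrm T}\bv_2$ obstructs (iii). Taming this term is precisely where the positive-definite block-Toeplitz structure of $\bUpsilon_j$ must be used, and I expect it to be the crux of the proof. If the stated constants turn out to be too sharp to reach along this route, the same argument still bounds $\op{\bUpsilon_j^{-1}}$ and $\op{\bUpsilon_j}$ by the extrema of the VAR spectral density, which under Condition (A5) are fixed constants and suffice for every use of the lemma in the contraction argument.
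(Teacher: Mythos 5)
Taken as a proof of the lemma as stated, your proposal has two genuine gaps. First, parts (i), (iii) and (v) are never actually proved: you set up exactly the block--Schur decomposition that is needed, and then correctly observe that the cross term $2\bv_1^{\mathrm T}\bxi_j^{\mathrm T}\bv_2$ (equivalently, the fact that $\bv_1-\bxi_j^{\mathrm T}\bUpsilon_{j-1}^{-1}\bv_2$ can be longer than $\bv_1$) obstructs the clean maximum in (v) --- but you leave that obstacle untamed, so the inductive closure has no base to stand on. Second, your argument for (ii) rests on the identity $\bD_j=\bC_j$, which is a scalar fact: for vector processes the backward prediction-error covariance $\bD_j$ and the forward one $\bC_j$ agree in determinant but not as matrices, so you cannot feed $\bD_j$ into the L\"owner chain \eqref{eq:CausalCond}. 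Only (iv) is complete as written.

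The comparison with the paper's own proof is where this becomes interesting. The paper factorizes $\bUpsilon_j=\bP\bQ\bP^{\mathrm T}$ with $\bQ=\bdiag(\bUpsilon_{j-1},\bD_j)$ and $\bP$ unit block-triangular, asserts that the operator norm of $\bP$ is $1$, and deduces (v), (iii) and (i) by recursion, with (ii) asserted along the way rather than derived. But a unit triangular matrix with a nonzero off-diagonal block has $\op{\bP}>1$, and the inequality $\op{\bP\bQ\bP^{\mathrm T}}\le\op{\bQ}$ would require $\op{\bP}\le 1$: the paper's step is precisely the cross term you flagged, waved away rather than controlled. Indeed the stated constants are too sharp: for a scalar causal AR(1) with $\gamma(1)\neq 0$ one has $\op{\bUpsilon_1}=\gamma(0)+|\gamma(1)|>\op{\bGamma(0)}$, contradicting (iii), and $\lambda_{\min}(\bUpsilon_1)=\gamma(0)-|\gamma(1)|<\sigma^2=\lambda_{\min}(\bC_1)$, contradicting (i) already at $j=1$. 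So the crux you identified is not a failure of your imagination but a real flaw in the lemma and its proof, and your closing fallback --- replacing $\op{\bGamma(0)}$ and $\op{\bC_p^{-1}}$ by the extrema of the VAR spectral density, which are fixed constants under Condition (A5) and suffice for every invocation of the lemma in the contraction argument --- is the correct repair. Nevertheless, as a blind reconstruction your proposal is incomplete for (i), (iii), (v), and (ii) needs a different route than the one you give (for instance via the algorithm's recursion $\bD_j^{-1}=\bD_{j-1}^{-1}+\bP_j\bP_j^{\mathrm T}$ together with a separate bound at $j=p$, or again via spectral-density extrema).
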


\begin{lemma}
   % If $-\frac{1}{T}\log(\frac{\det(\bUpsilon_{1,T})^{1/4}\det(\bUpsilon_{2,T})^{1/4}}{\det((\bUpsilon_{1,T}+\bUpsilon_{2,T})/2)^{1/2}})$ (Reyni divergence) is small, $\frac{1}{T}\|\bUpsilon_{1,T}-\bUpsilon_{2,T}\|^2_{\mathrm{F}}$ is small. Specifically, $C_{1,T}C_{2,T}\frac{1}{T}\|\bUpsilon_{1,T}-\bUpsilon_{2,T}\|^2_{\mathrm{F}}\leq -\frac{1}{T}\log(\frac{\det(\bUpsilon_{1,T})^{1/4}\det(\bUpsilon_{2,T})^{1/4}}{\det((\bUpsilon_{1,T}+\bUpsilon_{2,T})/2)^{1/2}})$ where $C_{1,T}$ and $C_{2,T}$ polynomials in $T$ depending on the sieve as defined in the proof.  
   We have $\|\bOmega_1-\bOmega_2\|^2_{\mathrm{F}}$ is small if $T^{-1}\|\bUpsilon_{1,T}-\bUpsilon_{2,T}\|^2_{\mathrm{F}}$ is small.
    \label{lem:boundtransfer1}
\end{lemma}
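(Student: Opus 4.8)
The plan is to reduce the claim to a finite-dimensional matrix-inversion bound by exploiting that the stationary covariance $\bGamma(0)$ appears verbatim on the diagonal of the full block-Toeplitz matrix $\bUpsilon_T$. Write $\bGamma_1(0),\bGamma_2(0)$ for the stationary variances attached to the two parameter values, so that $\bOmega_i=\bGamma_i(0)^{-1}$.

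First I would extract the diagonal blocks. By the block-Toeplitz structure \eqref{eq:blockToeplitz}, each of the $T$ diagonal $d\times d$ blocks of $\bUpsilon_{i,T}$ equals $\bGamma_i(0)$; since the squared Frobenius norm of a partitioned matrix is the sum of the squared Frobenius norms of its blocks, discarding the off-diagonal blocks gives
$$\Frob{\bUpsilon_{1,T}-\bUpsilon_{2,T}}^2\ge \sum_{t=1}^T\Frob{\bGamma_1(0)-\bGamma_2(0)}^2=T\,\Frob{\bGamma_1(0)-\bGamma_2(0)}^2,$$
hence $\Frob{\bGamma_1(0)-\bGamma_2(0)}^2\le T^{-1}\Frob{\bUpsilon_{1,T}-\bUpsilon_{2,T}}^2$. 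Note the inequality runs in exactly the direction we want: the diagonal restriction lower-bounds the full difference, producing the gain factor $T^{-1}$.

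Next I would transfer this to the precisions via the resolvent identity $\bOmega_1-\bOmega_2=-\bOmega_1(\bGamma_1(0)-\bGamma_2(0))\bOmega_2$ and the submultiplicativity $\Frob{\bM_1\bM_2\bM_3}\le\op{\bM_1}\,\Frob{\bM_2}\,\op{\bM_3}$, which yield
$$\Frob{\bOmega_1-\bOmega_2}\le\op{\bOmega_1}\,\op{\bOmega_2}\,\Frob{\bGamma_1(0)-\bGamma_2(0)}\le\op{\bOmega_1}\,\op{\bOmega_2}\,T^{-1/2}\Frob{\bUpsilon_{1,T}-\bUpsilon_{2,T}}.$$
It then remains to bound $\op{\bOmega_i}$ by a fixed constant. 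Since $\bGamma_i(0)$ is a principal $d\times d$ submatrix of $\bUpsilon_{i,T}$, Cauchy interlacing gives $\lambda_{\min}(\bGamma_i(0))\ge\lambda_{\min}(\bUpsilon_{i,T})$, so $\op{\bOmega_i}=\op{\bGamma_i(0)^{-1}}\le\op{\bUpsilon_{i,T}^{-1}}\le\op{\bSigma_i^{-1}}$ by Relation (i) of Lemma~\ref{norm comparison}; under Condition (A5) this is bounded by a constant $C$ on the relevant neighborhood of the truth. Combining, $\Frob{\bOmega_1-\bOmega_2}^2\le C^4\,T^{-1}\Frob{\bUpsilon_{1,T}-\bUpsilon_{2,T}}^2$, which is the assertion.

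The elementary but essential observation is the diagonal-block extraction, since it is what makes the $T^{-1}$ scaling appear; the only step that genuinely uses the hypotheses is the conditioning bound on $\op{\bOmega_i}$, where passing from the covariance difference to the precision difference requires the eigenvalues of $\bGamma(0)$ to be controlled uniformly, supplied by Condition (A5) together with Lemma~\ref{norm comparison}. I do not expect any serious obstacle beyond making this uniform conditioning precise on the set where the lemma is invoked.
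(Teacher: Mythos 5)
Your first two steps coincide with the paper's own proof: the diagonal-block extraction yielding $\Frob{\bGamma_1(0)-\bGamma_2(0)}^2\le T^{-1}\Frob{\bUpsilon_{1,T}-\bUpsilon_{2,T}}^2$, and the resolvent identity $\bOmega_1-\bOmega_2=-\bOmega_1(\bGamma_1(0)-\bGamma_2(0))\bOmega_2$ with submultiplicativity, are exactly the paper's argument. The gap is in your last step, where you bound $\op{\bOmega_i}$ for \emph{both} parameter points by interlacing, Relation (i) of Lemma~\ref{norm comparison}, and Condition (A5). Condition (A5) constrains only the \emph{true} parameter $\bzeta_0$. But the lemma is invoked (at the end of the proof of Theorem~2) with one point equal to the truth and the other a generic parameter point about which nothing is known except $\Frob{\bUpsilon_{T}-\bUpsilon_{T,0}}\le\sqrt{T}\epsilon_T$; for that generic point neither (A4)--(A5) nor the sieve gives a fixed-constant bound on $\op{\bSigma^{-1}}$ (on $\mathcal{F}_T$ one only has $\op{\bSigma^{-1}}\lesssim dr(T\bar\epsilon_T^2)^{2/\gamma}$, which grows with $T$). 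Moreover, the chain $\op{\bOmega}\le\op{\bUpsilon_{T}^{-1}}\le\op{\bSigma^{-1}}$ cannot rescue this, since $\bSigma^{-1}=\bOmega+\sum_{j=1}^p\bL_j\bL_j^{\mathrm{T}}\ge\bOmega$: bounding $\op{\bSigma^{-1}}$ by a constant already presupposes a constant bound on $\op{\bOmega}$, so the appeal is circular for precisely the quantity you need.

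The missing idea, and the way the paper closes this hole, is that the operator-norm control of the \emph{second} precision must be derived from the smallness hypothesis itself rather than assumed. Since the diagonal extraction makes $\|\bGamma_1(0)-\bGamma_2(0)\|_{\mathrm{op}}\le\Frob{\bGamma_1(0)-\bGamma_2(0)}$ small, and $\op{\bGamma_1(0)^{-1}}$ is bounded on the truth side (by (A5) together with Lemma~\ref{norm comparison}), a Neumann-series perturbation bound gives
\[
\op{\bGamma_2(0)^{-1}}\le \frac{\op{\bGamma_1(0)^{-1}}}{1-\op{\bGamma_1(0)^{-1}}\,\|\bGamma_1(0)-\bGamma_2(0)\|_{\mathrm{op}}},
\]
which is bounded by a constant once $T^{-1}\Frob{\bUpsilon_{1,T}-\bUpsilon_{2,T}}^2$ is small enough; this asymmetric treatment of the two points is exactly the middle display in the paper's proof. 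Substituting this bound into your resolvent step makes your argument correct, and then it is the paper's proof.
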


\begin{lemma}
    If $T^{-1}\|\bUpsilon_{1,T}-\bUpsilon_{2,T}\|^2_{\mathrm{F}}$ is small, $\|\bA_1-\bA_2\|^2_{\mathrm{F}}$ is small, where $\bA$ is the autoregressive coefficient.
    \label{lem:boundtransfer2}
\end{lemma}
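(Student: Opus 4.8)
The plan is to transfer closeness in two stages: first from the large block-Toeplitz matrices $\bUpsilon_{1,T},\bUpsilon_{2,T}$ down to the individual autocovariance blocks $\bGamma(0),\ldots,\bGamma(p)$, and then from those blocks up to the autoregression coefficients through the Yule--Walker map. The essential first step, which is also what delivers the factor $T^{-1}$, exploits the replicated block structure of $\bUpsilon_T$: its $(r,c)$ block equals $\bGamma(c-r)$ (or its transpose), so the block $\bGamma(h)$ for $0\le h\le p$ appears $T-h$ times along the corresponding diagonal. Hence
\[
\Frob{\bUpsilon_{1,T}-\bUpsilon_{2,T}}^2=T\,\Frob{\bGamma_1(0)-\bGamma_2(0)}^2+2\sum_{h=1}^{T-1}(T-h)\,\Frob{\bGamma_1(h)-\bGamma_2(h)}^2,
\]
so that for each fixed $0\le h\le p$ a single term of the sum gives $(T-h)\Frob{\bGamma_1(h)-\bGamma_2(h)}^2\le\Frob{\bUpsilon_{1,T}-\bUpsilon_{2,T}}^2$, whence $\Frob{\bGamma_1(h)-\bGamma_2(h)}^2\lesssim T^{-1}\Frob{\bUpsilon_{1,T}-\bUpsilon_{2,T}}^2$. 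I want to stress that merely passing to the principal submatrix $\bUpsilon_{p-1}$ would not suffice, since that loses the $T^{-1}$ gain; the replication of each autocovariance block order-$T$ times is precisely what the normalizing factor $T^{-1}$ records.

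Next I would represent the stacked coefficient $\bA^{(i)}=(\bA^{(i)}_1,\ldots,\bA^{(i)}_p)$ under parameter value $i\in\{1,2\}$ through the Yule--Walker solution $\bA^{(i)}=\bxi_{p,i}^{\mathrm{T}}\bUpsilon_{p-1,i}^{-1}$, where $\bxi_{p,i}^{\mathrm{T}}=(\bGamma_i(1),\ldots,\bGamma_i(p))$ and $\bUpsilon_{p-1,i}$ is assembled from $\bGamma_i(0),\ldots,\bGamma_i(p-1)$. Using the resolvent identity $\bUpsilon_{p-1,1}^{-1}-\bUpsilon_{p-1,2}^{-1}=\bUpsilon_{p-1,1}^{-1}(\bUpsilon_{p-1,2}-\bUpsilon_{p-1,1})\bUpsilon_{p-1,2}^{-1}$, the difference decomposes as
\[
\bA^{(1)}-\bA^{(2)}=(\bxi_{p,1}-\bxi_{p,2})^{\mathrm{T}}\bUpsilon_{p-1,1}^{-1}+\bxi_{p,2}^{\mathrm{T}}\bUpsilon_{p-1,1}^{-1}(\bUpsilon_{p-1,2}-\bUpsilon_{p-1,1})\bUpsilon_{p-1,2}^{-1}.
\]
I would then bound each summand in Frobenius norm by placing the Frobenius norm on the genuinely small factor and operator norms on the rest. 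The operator norms $\op{\bUpsilon_{p-1,1}^{-1}}$ and $\op{\bUpsilon_{p-1,2}^{-1}}$ are controlled by $\op{\bSigma^{-1}}$ via Relation~(i) of Lemma~\ref{norm comparison}, which is bounded by a constant under Condition~(A5); similarly $\Frob{\bxi_{p,2}}$ is bounded since the autocovariances have bounded norm under Condition~(A5). Finally $\Frob{\bxi_{p,1}-\bxi_{p,2}}$ and $\Frob{\bUpsilon_{p-1,2}-\bUpsilon_{p-1,1}}$ are just rearrangements of the finitely many blocks $\bGamma_1(h)-\bGamma_2(h)$, $0\le h\le p$ (finite since $p$ is fixed), and so are each $\lesssim\max_{0\le h\le p}\Frob{\bGamma_1(h)-\bGamma_2(h)}$.

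Combining the two displays yields $\Frob{\bA^{(1)}-\bA^{(2)}}^2\lesssim\max_{0\le h\le p}\Frob{\bGamma_1(h)-\bGamma_2(h)}^2\lesssim T^{-1}\Frob{\bUpsilon_{1,T}-\bUpsilon_{2,T}}^2$, which gives the stated smallness and in particular the per-lag bound $\Frob{\bA^{(1)}_j-\bA^{(2)}_j}^2\lesssim T^{-1}\Frob{\bUpsilon_{1,T}-\bUpsilon_{2,T}}^2$ used in the main proof. The perturbation algebra is routine; the one point requiring genuine care, and the main obstacle, is ensuring the operator-norm control $\op{\bUpsilon_{p-1}^{-1}}\le\op{\bSigma^{-1}}$ holds simultaneously for both parameter values, i.e. that both $\bzeta_1$ and $\bzeta_2$ lie in the region where Condition~(A5)'s eigenvalue bounds are available (the sieve), so that the Lipschitz constant of the Yule--Walker map is uniform in $T$ and does not silently erode the $T^{-1}$ gain harvested from the Toeplitz replication.
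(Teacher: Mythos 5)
Your proof is correct and follows essentially the same route as the paper's: both harvest the factor $T^{-1}$ from the replication of autocovariance blocks inside the block-Toeplitz matrix $\bUpsilon_T$, and both then push this through the Yule--Walker map $\bA = \bxi_p^{\mathrm{T}}\bUpsilon_{p-1}^{-1}$ via a two-term perturbation decomposition with operator-norm control of the inverses. The only differences are presentational: you count replications of the individual blocks $\bGamma(h)$ and make the resolvent identity explicit, whereas the paper counts copies of $\bxi_p$ and of disjoint diagonal blocks $\bUpsilon_{p-1}$ directly and leaves the bound on $\Frob{\bUpsilon_{1,p-1}^{-1}-\bUpsilon_{2,p-1}^{-1}}$ implicit, so your write-up is if anything slightly more complete on that step.
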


% \begin{lemma}

% \begin{align*}
%     \frac{1}{T}KL(Q_{\bzeta_0},Q_{\bzeta_1})\lesssim \frac{1}{T}\|\bUpsilon_{1,T}-\bUpsilon_{0,T}\|^2_{\mathrm{F}}
% \end{align*}
% \label{lem:KL}
% \end{lemma}

% \begin{proof}[Proof of Lemma \ref{lem:KL}]
% Kullback-Leibler between
% $\MVN(0,\bUpsilon_{1,T})$ and
% $\MVN(0,\bUpsilon_{0,T})$ is

% $KL(Q_{\bzeta_0},Q_{\bzeta_1})={1 \over 2}\left\{\operatorname {tr} \left(\bUpsilon_{0,T}^{-1}\bUpsilon_{1,T}-\bI_{dT}\right)+\ln {|\bUpsilon_{0,T}| \over |\bUpsilon_{1,T}|}\right\}=-\frac{1}{2}[\sum_{j=1}^{dT}(1-e_j)+\sum_{j=1}^{dT}\log e_{j}]$,
% where $e_j$'s are eigenvalues of $\bUpsilon_{0,T}^{-1/2}\bUpsilon_{1,T}\bUpsilon_{0,T}^{-1/2}$. {\color{red} Where $e_j$'s are small}

% \end{proof}

The following Lemma is for around the truth case. 
\begin{lemma}
    With $\{\|\bOmega_1-\bOmega_0\|^2_{\mathrm{F}},\|\bK_{1,j}-\bK_{0,j}\|^2_{\mathrm{F}},\|\bL_{1,k}-\bL_{0,k}\|^2_{\mathrm{F}}\}\lesssim \epsilon^2$, we have
\begin{align*}
   \|\bUpsilon_{1,T}-\bUpsilon_{0,T}\|^2_{\mathrm{F}}\lesssim \epsilon^2 T^3,
\end{align*}
assuming fixed lower and upper bounds on eigenvalues of $\bOmega_0$ and an upper bound for eigenvalues of $\bL_{0,k}$.
\label{lem:around_the_truth}
\end{lemma}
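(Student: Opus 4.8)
The plan is to exploit the block Toeplitz structure of $\bUpsilon_T$ to reduce the claim to a perturbation bound on the individual autocovariance blocks, and then to control those block perturbations through the recursive construction of the autocovariances from the basic parameters. Since the $(i,j)$ block of $\bUpsilon_T$ equals $\bGamma(i-j)$ with $\bGamma(-h)=\bGamma(h)^{\mathrm{T}}$, writing $\Delta\bGamma(h)=\bGamma_1(h)-\bGamma_0(h)$ I would first record
\[
\Frob{\bUpsilon_{1,T}-\bUpsilon_{0,T}}^2 = T\Frob{\Delta\bGamma(0)}^2 + 2\sum_{h=1}^{T-1}(T-h)\Frob{\Delta\bGamma(h)}^2 \le T\sum_{h=0}^{T-1}\Frob{\Delta\bGamma(h)}^2 ,
\]
so it suffices to bound $\sum_{h=0}^{T-1}\Frob{\Delta\bGamma(h)}^2$ by a polynomial in $T$ times $\epsilon^2$. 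Throughout I would use that the stated hypotheses force uniformly bounded operator norms: by Lemma~\ref{norm comparison} one has $\op{\bUpsilon_j}\le\op{\bGamma(0)}=\op{\bOmega^{-1}}$ and $\op{\bUpsilon_j^{-1}}\le\op{\bSigma^{-1}}$, and the representation $\bSigma^{-1}=\bOmega+\sum_j\bL_j\bL_j^{\mathrm{T}}$ shows $\op{\bSigma^{-1}}$ is controlled by the assumed bounds on the eigenvalues of $\bOmega_0$ and on $\bL_{0,k}$.

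For the low-order blocks $0\le h\le p$ I would argue directly. Since $\bGamma(0)=\bOmega^{-1}$ and $\Delta(\bOmega^{-1})=-\bOmega_1^{-1}\,\Delta\bOmega\,\bOmega_0^{-1}$, the bounded-eigenvalue assumption gives $\Frob{\Delta\bGamma(0)}\le\op{\bOmega_1^{-1}}\op{\bOmega_0^{-1}}\Frob{\Delta\bOmega}\lesssim\epsilon$. For $1\le h\le p$ I would use the recursion $\bGamma(h)^{\mathrm{T}}=\bU_h\bV_h^{\mathrm{T}}+\bxi_{h-1}^{\mathrm{T}}\bUpsilon_{h-2}^{-1}\bkappa_{h-1}$ together with the definitions \eqref{eq:Uj} and \eqref{eq:Vj} of $\bU_h,\bV_h$ in terms of $\bL_h,\bK_h$. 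Because $p$ is fixed and every matrix entering these expressions has operator norm bounded by a constant, each $\bGamma(h)$ is a fixed composition of smooth matrix operations with bounded derivatives, yielding $\Frob{\Delta\bGamma(h)}\lesssim\epsilon$ for all $h\le p$; the same reasoning applied to $(\bA_1,\ldots,\bA_p)=\bxi_p^{\mathrm{T}}\bUpsilon_{p-1}^{-1}$ gives $\Frob{\Delta\bA_i}\lesssim\epsilon$.

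The substantive step is the high-order range $h>p$, where the Yule--Walker recursion $\bGamma(h)=\sum_{i=1}^p\bA_i\bGamma(h-i)$ gives the linear recursion
\[
\Delta\bGamma(h)=\sum_{i=1}^p \bA_{1,i}\,\Delta\bGamma(h-i)+\sum_{i=1}^p \Delta\bA_i\,\bGamma_0(h-i),
\]
driven by a forcing term $\bF(h)=\sum_i\Delta\bA_i\,\bGamma_0(h-i)$ with $\Frob{\bF(h)}\lesssim\epsilon$, using $\Frob{\Delta\bA_i}\lesssim\epsilon$ and the uniform bound $\op{\bGamma_0(\cdot)}\le\op{\bGamma_0(0)}=\op{\bOmega_0^{-1}}$. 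Recasting this in companion form, the homogeneous part is governed by the companion matrix of the true Schur-stable process, whose powers have summable operator norm, so that solving the convolution with bounded initial data (from the low-order step) and bounded forcing yields $\Frob{\Delta\bGamma(h)}\lesssim\epsilon$ uniformly in $h\le T$. Inserting this into the block sum gives $\sum_{h=0}^{T-1}\Frob{\Delta\bGamma(h)}^2\lesssim\epsilon^2 T$, hence $\Frob{\bUpsilon_{1,T}-\bUpsilon_{0,T}}^2\lesssim\epsilon^2 T^2\le\epsilon^2 T^3$, which is the asserted bound (the stated power $T^3$ being a convenient, non-tight over-bound).

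The main obstacle is precisely the propagation in the last step: a naive termwise use of the triangle inequality introduces the factor $\sum_i\op{\bA_i}$, which need not be below $1$ for a stable VAR and would force an unacceptable geometric-in-$h$ blowup. The remedy is to use Schur stability quantitatively, so that the spectral radius of the companion matrix stays bounded away from $1$ uniformly over the relevant parameter range; this is exactly what the uniform eigenvalue bounds (together with Lemma~\ref{norm comparison}) supply, and it makes $\sum_{n\ge0}\op{\tilde{\bA}^n}$ bounded by a $T$-independent constant. Securing this uniformity, so that no hidden constant depends on $T$, is the delicate part; everything else reduces to routine bounded-derivative perturbation estimates for fixed-size matrix operations.
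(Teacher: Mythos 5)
Your block-Toeplitz reduction and your treatment of the low-order blocks match the paper's strategy (the latter is carried out in the paper's Lemmas~\ref{lem:rec}--\ref{lem:UjVj}), but the crux of your argument — the high-order propagation step — has a genuine gap. You need a $T$-independent bound on $\sum_{n\ge 0}\op{\widetilde{\bA}^{\,n}}$ for a companion matrix $\widetilde{\bA}$, and you assert that the hypotheses "supply" this. They do not do so directly: fixed eigenvalue bounds on $\bOmega_0$ and $\bL_{0,k}$ give bounds on $\bGamma_0(0)$ and $\bSigma_0$, but converting these into a quantitative spectral gap for the companion matrix is itself a nontrivial argument (the companion-form innovation covariance $\mathrm{diag}(\bSigma,\bzero,\ldots,\bzero)$ is singular, so the standard Lyapunov contraction argument must be run over $p$-step blocks, and a left-eigenvector argument is needed to pass from the stacked covariance to the spectral radius); and even granted a spectral gap, passing from a spectral-radius bound to summability of $\op{\widetilde{\bA}^{\,n}}$ with a constant uniform over the parameter class requires a further argument (Gelfand's formula gives no uniformity). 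There is also an internal inconsistency: with your forcing term $\sum_i \Delta\bA_i\,\bGamma_0(h-i)$, the homogeneous part of the recursion is driven by the \emph{perturbed} coefficients $\bA_{1,i}$, not the true ones, so you would need summable powers of the perturbed companion matrix — which requires either swapping the roles (forcing with $\bGamma_1(h-i)$, using $\op{\bGamma_1(h)}\le\op{\bOmega_1^{-1}}$) or a separate perturbation argument for the gap. All of this is fixable, and you correctly flag it as "the delicate part," but it is precisely the substance of the proof and is missing.

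The comparison with the paper is instructive: the paper's proof is structured to \emph{avoid} any stability-margin requirement. Its Lemma~\ref{lem:bigsigma} starts from the same block decomposition, but then uses $\Frob{\bA_1^k-\bA_2^k}^2\le k\Frob{\bA_1-\bA_2}^2 a^{k-1}$ with $a=\op{\bA_1}^2$, and bounds the resulting sums $\sum_k (T-k)k a^{k-1}$ and $\sum_k (T-k) b^k$ by their monotone limits as $a,b\to 1$, namely $\binom{T-1}{2}$ and $\binom{T}{2}$. The price of this uniformity over the whole causal parameter space — including processes arbitrarily close to the stability boundary — is exactly the extra power of $T$: the paper gets $\epsilon^2 T^3$, whereas your route, if the summability claim were actually established, would yield the sharper $\epsilon^2 T^2$. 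So your proposal is not a shortcut around the paper's argument; it trades the paper's boundary-uniform counting bound for a quantitative-stability lemma that still has to be proved.
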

%{\color{red} If the assumption is on sup-norm for $\bL_{0,k}$, we will have a contribution of $d$ there. }

\begin{lemma}
    With $\{\|\bOmega_1-\bOmega_2\|^2_{\mathrm{F}},\|\bK_{1,j}-\bK_{2,j}\|^2_{\mathrm{F}},\|\bL_{1,k}-\bL_{2,k}\|^2_{\mathrm{F}}\}\lesssim \epsilon^2$, and $\|\bOmega_{\ell}\|^2_{\mathrm{op}}\leq O_T, \|\bUpsilon_{\ell}(0)\|^2_{\mathrm{op}}\leq G_T, \|\bL_{\ell,k}\|^2_{\infty}\leq L_T$ and $\|\bC_{\ell,p}^{-1}\|^2_{\mathrm{op}}\leq C_T=O_T+drL_T$  for $\ell=1,2$, Setting $M_{1,U}=G_TL_Tdr$, $M_{1,V}= C_T$ and $M_{1,C,V,U}=(1+C^2_TG_TL_Tdr)^p$, we have, 
\begin{align*}
   \|\bUpsilon_{1,T}-\bUpsilon_{2,T}\|^2_{\mathrm{F}}\leq \epsilon^2[T^2M_{P,1}+T^3G_TM_{P,2}],
\end{align*}
where $M_{P,1}$ is 
\begin{align*} 
\lefteqn{(p+1) + p^3\{M_{1,U}M_{1,V}M_{2,U}M_{2,V} + 2C_T(M_{1,V}+M_{2,U})\} }\\
&+ 2G_T p^2\{p^2 M_{1,U}M_{1,V}M_{1,C,U,V} + 2C_TM_{2,V}M_{1,V}M_{1,C,U,V}  +2C_TM_{2,U}M_{1,V}M_{1,C,U,V}\}
\end{align*}
and $M_{P,2}$ is 
$$2p^2(M_{1,U}M_{1,V}M_{1,C,U,V}+p) + 2pC_TM_{2,V}M_{1,V}M_{1,C,U,V} +2pC_TM_{2,U}M_{1,V}M_{1,C,U,V}.$$
\label{lem:for_sieve}
\end{lemma}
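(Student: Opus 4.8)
The plan is to reduce the block-Toeplitz Frobenius distance to a weighted sum of autocovariance differences, to control each autocovariance difference by propagating the $\epsilon$-sized perturbation of $(\bOmega,\bL_1,\ldots,\bL_p,\bK_1,\ldots,\bK_p)$ through the recursion of Section~\ref{sec:likelihood_comp}, and finally to reassemble the contributions. First I would exploit that $\bUpsilon_{\ell,T}$ is block Toeplitz with $(s,t)$ block $\bGamma_\ell(t-s)$ and $\bGamma_\ell(-h)=\bGamma_\ell(h)^{\mathrm{T}}$, so that the lag-$h$ block appears with multiplicity $T$ when $h=0$ and $2(T-h)$ when $1\le h\le T-1$. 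Since the Frobenius norm is invariant under transposition, this gives
\begin{align*}
\|\bUpsilon_{1,T}-\bUpsilon_{2,T}\|_{\mathrm{F}}^2
= T\,\Frob{\bGamma_1(0)-\bGamma_2(0)}^2 + 2\sum_{h=1}^{T-1}(T-h)\,\Frob{\bGamma_1(h)-\bGamma_2(h)}^2
\le 2T\sum_{h=0}^{T-1}\Frob{\bGamma_1(h)-\bGamma_2(h)}^2,
\end{align*}
reducing the claim to a per-lag bound on $\Frob{\bGamma_1(h)-\bGamma_2(h)}^2$.

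For the base lags $0\le h\le p$ I would track the perturbation step by step through the recursion. Starting from $\bGamma_\ell(0)=\bOmega_\ell^{-1}$ and the Lipschitz bound $\Frob{\bOmega_1^{-1}-\bOmega_2^{-1}}\le\op{\bOmega_1^{-1}}\op{\bOmega_2^{-1}}\Frob{\bOmega_1-\bOmega_2}$, together with $\op{\bOmega_\ell^{-1}}=\op{\bGamma_\ell(0)}\le G_T^{1/2}$, one gets $\Frob{\bGamma_1(0)-\bGamma_2(0)}\lesssim G_T\,\epsilon$. I would then pass in turn to $\bU_j,\bV_j,\bW_j,\bGamma(j),\bC_j,\bD_j,\bD_j^{-1}$, each of which is a smooth function (products, Schur complements, matrix inverses, and the symmetric square roots $(\bI+\bL_j^{\mathrm{T}}\bC_{j-1}\bL_j)^{\pm1/2}$ and $(\bK_j^{\mathrm{T}}\bD_{j-1}^{-1}\bK_j)^{-1/2}$) of quantities whose operator norms are controlled on the sieve by $O_T,G_T,L_T,C_T$. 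The square-root factors are benign because the matrices inside them are bounded below by $\bI$, so the standard perturbation inequalities for inverses and square roots apply with Lipschitz constants expressed through these sieve bounds; composing the $p$ steps produces the quantities $M_{1,U}=G_TL_Tdr$, $M_{1,V}=C_T$ and $M_{1,C,U,V}=(1+C_T^2G_TL_Tdr)^p$ and yields $\Frob{\bGamma_1(h)-\bGamma_2(h)}^2\lesssim\epsilon^2 M_{P,1}$ for $0\le h\le p$.

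For the tail lags $h>p$ the increments $\bC_j^{-1}-\bC_{j-1}^{-1}$ vanish, so the autocovariances are the Durbin--Levinson continuation $\bGamma(h)^{\mathrm{T}}=\bxi_{h-1}^{\mathrm{T}}\bUpsilon_{h-2}^{-1}\bkappa_{h-1}$, equivalently the Yule--Walker recursion $\bGamma(h)=\sum_{k=1}^p\bA_k\bGamma(h-k)$. Here the key is that Lemma~\ref{norm comparison} supplies the uniform bounds $\op{\bUpsilon_j}\le\op{\bGamma(0)}\le G_T^{1/2}$ and $\op{\bUpsilon_j^{-1}}\le\op{\bC_p^{-1}}\le C_T^{1/2}$ for all $j$, so the prediction coefficients stay bounded independently of the lag and the difference $\Frob{\bGamma_1(h)-\bGamma_2(h)}$ obeys a linear recursion in $h$ whose solution grows only polynomially. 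Carrying the constants through gives a per-lag bound of the form $\Frob{\bGamma_1(h)-\bGamma_2(h)}^2\lesssim\epsilon^2(M_{P,1}+h\,G_TM_{P,2})$, where $M_{P,2}$ collects the factors produced by one continuation step. Substituting into the Toeplitz reduction and using $\sum_{h=0}^{T-1}(M_{P,1}+hG_TM_{P,2})\le TM_{P,1}+\tfrac12 T^2 G_TM_{P,2}$ yields $\|\bUpsilon_{1,T}-\bUpsilon_{2,T}\|_{\mathrm{F}}^2\lesssim\epsilon^2[T^2M_{P,1}+T^3G_TM_{P,2}]$, which is the asserted bound.

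The main obstacle is the perturbation bookkeeping in the last two steps: every Lipschitz constant arising from the matrix inverses, symmetric square roots and Schur complements in the recursion must be expressed uniformly in terms of the sieve bounds $O_T,G_T,L_T,C_T$, and, most delicately, one must show that the accumulation of the autocovariance-difference error over lags is only linear in $h$. This is where the uniform operator-norm bounds of Lemma~\ref{norm comparison} are essential: a naive estimate through powers of the companion matrix would grow geometrically, and it is the stationarity-based uniform control of $\bUpsilon_j$ and $\bUpsilon_j^{-1}$ that keeps the growth polynomial and hence delivers the factor $T^3$ rather than a higher power or an exponential factor.
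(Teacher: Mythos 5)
Your overall architecture coincides with the paper's: the block-Toeplitz expansion into per-lag autocovariance differences, the propagation of the $\epsilon$-perturbation through the $\bU_j,\bV_j,\bC_j,\bD_j$ recursion for the first $p$ lags, and an autoregressive continuation for the tail lags are exactly the contents of Lemma~\ref{lem:bigsigma}, Lemmas~\ref{lem:rec}, \ref{lem:finalbd}, \ref{lem:finalbdpre}, \ref{lem:UjVj}, and the assembly in \eqref{eq:mainbd}; the only packaging difference is that the paper reduces the tail to companion (VAR(1)) form and expresses everything through $\Frob{\bUpsilon_{1,p-1}-\bUpsilon_{2,p-1}}$ and $\Frob{\bkappa_{1,p}\bUpsilon_{1,p-1}^{-1}-\bkappa_{2,p}\bUpsilon_{2,p-1}^{-1}}$, while you keep per-lag bounds and sum them.

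There is, however, a genuine gap at exactly the step you call delicate. Your tail-lag claim $\Frob{\bGamma_1(h)-\bGamma_2(h)}^2\lesssim \epsilon^2(M_{P,1}+h\,G_T M_{P,2})$ --- squared error \emph{linear} in $h$ --- is asserted, not derived, and the mechanism you sketch cannot produce it. First, a norm recursion $e_h\le\sum_{k=1}^p\op{\bA_{1,k}}\,e_{h-k}+\cdots$ with merely bounded coefficients does not have polynomially growing solutions in general; one would need $\sum_{k}\op{\bA_{1,k}}\le 1$ or similar, which causality does not supply. Second, the rigorous version of your stationarity argument gives only the weaker quadratic growth: the companion powers satisfy $\bA_\ell^h=\mathrm{Cov}(\bY_{t+h},\bY_t)\bUpsilon_{\ell,p-1}^{-1}$, a block of $\bUpsilon_{\ell,h+p-1}$ times $\bUpsilon_{\ell,p-1}^{-1}$, so by Lemma~\ref{norm comparison} $\op{\bA_\ell^h}\le (G_TC_T)^{1/2}$ uniformly in $h$; telescoping $\bA_1^h-\bA_2^h=\sum_{i=0}^{h-1}\bA_1^i(\bA_1-\bA_2)\bA_2^{h-1-i}$ then gives $\Frob{\bGamma_1(h)-\bGamma_2(h)}\lesssim h\,\epsilon$, i.e.\ squared error of order $h^2\epsilon^2$, and $2\sum_{h=1}^{T-1}(T-h)h^2$ is of order $T^4$, not the asserted $T^2M_{P,1}+T^3G_TM_{P,2}$. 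The scalar AR(1) case with $a_1=1-1/T$, $a_2=1-2/T$ shows the quadratic growth in $h$ is real, so no rearrangement of your argument as stated can recover $T^3$.

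For comparison, the paper extracts the factor $T^3$ inside the proof of Lemma~\ref{lem:bigsigma} from the power-difference estimate $\|\bA_1^k-\bA_2^k\|^2_{\mathrm{F}}\le \|\bA_1-\bA_2\|^2_{\mathrm{F}}\sum_{i=1}^k\op{\bA_1}^{2(k-i)}\op{\bA_2}^{2(i-1)}\le k\,\|\bA_1-\bA_2\|^2_{\mathrm{F}}\,a^{k-1}$ with $a\le 1$, combined with $\sum_{k}(T-k)k a^{k-1}\le T\binom{T-1}{2}$ uniformly over $a\le 1$. So to close your proof along the paper's lines you would have to establish a bound of exactly that form (squared power-difference growing like $k\,a^{k-1}$), which is a stronger statement than uniform boundedness of companion powers; note that it squares a sum of $k$ terms without the Cauchy--Schwarz factor $k$, and it presumes $\op{\bA_\ell}\le 1$, which causality alone (spectral radius less than one) does not guarantee --- both points deserve scrutiny rather than citation. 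A smaller secondary issue: your remark that all square-root factors are benign "because the matrices inside them are bounded below by $\bI$" is false for $(\bK_j^{\mathrm{T}}\bD_{j-1}^{-1}\bK_j)^{-1/2}$; its perturbation constant involves $\op{(\bK_j^{\mathrm{T}}\bD_{j-1}^{-1}\bK_j)^{-1}}$, which the sieve bounds $O_T,G_T,L_T,C_T$ do not control, as is visible in the statement of Lemma~\ref{lem:UjVj}.
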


The proof is based on the next five lemmas and results.

\begin{lemma}
   The quantity $\|\bUpsilon_{1,T}-\bUpsilon_{2,T}\|^2_{\mathrm{F}}$ is bounded by $$ T^2\|\bUpsilon_{1,p-1}-\bUpsilon_{2,p-1}\|^2_{\mathrm{F}}+T^2(T-p+1)\|\bUpsilon_{1}(0)\|^2_{\mathrm{op}}\|\bkappa_{1,p}\bUpsilon_{1,p-1}^{-1}-\bkappa_{2,p}\bUpsilon_{2,p-1}^{-1}\|^2_{\mathrm{F}}.$$
   %$$\|\bUpsilon_{1,T}-\bUpsilon_{2,T}\|^2_{\mathrm{op}}\leq T^2\|\bUpsilon_{1,p-1}-\bUpsilon_{2,p-1}\|^2_{\mathrm{op}}+T^2(T-p+1)\|\bUpsilon_{1,p-1}\|^2_{\mathrm{op}}\|\bkappa_{1,p}\bUpsilon_{1,p-1}^{-1}-\bkappa_{2,p}\bUpsilon_{2,p-1}^{-1}\|^2_{\mathrm{op}}.$$
   \label{lem:bigsigma}
\end{lemma}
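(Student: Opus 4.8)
The plan is to exploit the block-Toeplitz structure of $\bUpsilon_{\ell,T}$ together with the fact that, for a causal VAR$(p)$, every autocovariance of lag at least $p$ is generated from the first $p$ of them by a single linear prediction operator. Writing $\bGamma_\ell(h)$ for the lag-$h$ autocovariance under parameter set $\ell\in\{1,2\}$, the Toeplitz structure gives the exact identity $\Frob{\bUpsilon_{1,T}-\bUpsilon_{2,T}}^2=\sum_{h=0}^{T-1} m_h \Frob{\bGamma_1(h)-\bGamma_2(h)}^2$, where each block multiplicity satisfies $m_h\le 2T$. I would split this sum at $h=p-1$: the blocks with $h\le p-1$ are exactly the distinct blocks of $\bUpsilon_{\ell,p-1}$, while those with $h\ge p$ must be propagated. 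To handle the propagation I would use the prediction operator $\bB_\ell=\bkappa_{\ell,p}\bUpsilon_{\ell,p-1}^{-1}$ appearing in the statement, equivalently the companion matrix $\bPhi_\ell$ built from $\bB_\ell$, so that by the standard companion-form / Yule--Walker relation $\bGamma_\ell(h)=\bS\bPhi_\ell^{h}\bUpsilon_{\ell,p-1}\bS^{\mathrm{T}}$ with $\bS=[\bI,\bzero,\ldots,\bzero]$ the block selecting the leading coordinate.

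For the low-lag part ($h\le p-1$), each difference $\bGamma_1(h)-\bGamma_2(h)$ is a sub-block of $\bUpsilon_{1,p-1}-\bUpsilon_{2,p-1}$, and since each such block is counted at most $T$ times in $\bUpsilon_{\ell,T}$ while it is counted at least once in $\bUpsilon_{\ell,p-1}$, this portion of the sum is dominated by a multiple of $T\,\Frob{\bUpsilon_{1,p-1}-\bUpsilon_{2,p-1}}^2$, and hence by the first term $T^2\Frob{\bUpsilon_{1,p-1}-\bUpsilon_{2,p-1}}^2$ claimed in the lemma.

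For the high-lag part ($h\ge p$) I would telescope $\bPhi_1^{h}\bUpsilon_{1,p-1}-\bPhi_2^{h}\bUpsilon_{2,p-1}=\bPhi_1^{h}(\bUpsilon_{1,p-1}-\bUpsilon_{2,p-1})+(\bPhi_1^{h}-\bPhi_2^{h})\bUpsilon_{2,p-1}$ and expand $\bPhi_1^{h}-\bPhi_2^{h}=\sum_{m=0}^{h-1}\bPhi_1^{m}(\bPhi_1-\bPhi_2)\bPhi_2^{h-1-m}$. The first summand again feeds into the first term of the lemma, while the second is linear in $\bPhi_1-\bPhi_2$, which is governed by the prediction-coefficient difference $\bkappa_{1,p}\bUpsilon_{1,p-1}^{-1}-\bkappa_{2,p}\bUpsilon_{2,p-1}^{-1}$. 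Here causality is essential: since both processes are causal the companion matrices $\bPhi_\ell$ are Schur stable, and together with the uniform eigenvalue bounds of Condition~(A5) this makes $\op{\bPhi_\ell^{m}}$ bounded by a constant uniformly in $m$; moreover $\op{\bUpsilon_{\ell,p-1}}\le\op{\bUpsilon_\ell(0)}$ by Relation~(iii) of Lemma~\ref{norm comparison}, which supplies the factor $\op{\bUpsilon_{1}(0)}^2$. Summing the telescoped bound over the at most $T-p+1$ high lags, each counted at most $2T$ times, and using that the telescoping length is $O(T)$, produces the second term $T^2(T-p+1)\op{\bUpsilon_{1}(0)}^2\Frob{\bkappa_{1,p}\bUpsilon_{1,p-1}^{-1}-\bkappa_{2,p}\bUpsilon_{2,p-1}^{-1}}^2$.

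The main obstacle is the high-lag propagation: a priori the difference $\bGamma_1(h)-\bGamma_2(h)$ could grow geometrically in $h$ through repeated application of the prediction operator, and it is only the Schur stability guaranteed by causality that keeps $\op{\bPhi_\ell^{m}}$ uniformly bounded, so that the difference stays \emph{linear} in the one-step prediction-coefficient discrepancy and accrues only polynomial-in-$T$ factors rather than exponential ones. The remaining work is bookkeeping — tracking the passages between operator and Frobenius norms and the exact powers of $T$ — which yields the (deliberately loose) constants $T^2$ and $T-p+1$ stated in the lemma.
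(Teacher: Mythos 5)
Your plan has the same skeleton as the paper's own proof: the block-Toeplitz multiplicity count, the companion-form identity $\bGamma_\ell(h)=\bS\bPhi_\ell^{h}\bUpsilon_{\ell,p-1}\bS^{\mathrm{T}}$ (the paper proves the VAR(1) identity $\bGamma(h)=\bA^{h}\bGamma(0)$ and then substitutes the companion quantities $\bA_\ell=\bkappa_{\ell,p}\bUpsilon_{\ell,p-1}^{-1}$ and $\bUpsilon_{\ell,p-1}$), the same two-term splitting of $\bPhi_1^{h}\bUpsilon_{1,p-1}-\bPhi_2^{h}\bUpsilon_{2,p-1}$, and the same telescoping of matrix-power differences. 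The genuine gap is at the single point where you deviate: what you extract from causality. The paper's powers of $T$ come from asserting $a=\op{\bA_1}^2\le 1$, $b=\op{\bA_2}^2\le 1$, so that every telescoping factor is a contraction and the lag sums reduce to geometric-type sums bounded by binomial coefficients (e.g.\ $\sum_{k<T}k a^{k-1}\le \binom{T}{2}$); this is what its computation uses to arrive at the $T^2$ and $T^2(T-p+1)$ (in the proof's final display, $T^3$) factors with clean constants. You replace this by ``Schur stability implies $\sup_m\op{\bPhi_\ell^{m}}\le M<\infty$'', which is strictly weaker, and then the bookkeeping you defer does not close. With only a uniform bound $M$, telescoping gives $\Frob{\bPhi_1^{h}-\bPhi_2^{h}}\le h M^{2}\Frob{\bPhi_1-\bPhi_2}$, so after squaring each lag $h$ contributes $h^{2}M^{4}$, and $\sum_{h\ge p}(T-h)h^{2}M^{4}\asymp T^{4}M^{4}$: one power of $T$ more than the statement permits, with an unabsorbed prefactor $M^{4}$ (your first term likewise acquires $M^{2}$). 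Your tally ``$(T-p+1)$ lags $\times$ multiplicity $2T$ $\times$ telescoping length $O(T)$'' lets the telescoping length enter linearly, but in a squared Frobenius norm it enters squared unless the factors $\op{\bPhi_\ell^{m}}$ are at most one or decay geometrically with tracked constants --- exactly what uniform boundedness fails to supply.

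The constant $M$ is also not harmless, and Condition (A5) cannot rescue it: the lemma is a deterministic inequality about an arbitrary pair of parameter points, and it is applied downstream to pairs of sieve points (Lemma~\ref{lem:for_sieve}), where (A5) --- a condition on the truth alone --- gives no control and the relevant quantities grow polynomially in $T$. Note moreover that for $p\ge 2$ the companion matrix always has $\op{\bPhi_\ell}\ge 1$ because of its identity sub-blocks, so no unweighted contraction is available; what causality actually provides under this parameterization is the weighted contraction $\bPhi_\ell\bUpsilon_{\ell,p-1}\bPhi_\ell^{\mathrm{T}}\le \bUpsilon_{\ell,p-1}$, equivalently $\op{\bUpsilon_{\ell,p-1}^{-1/2}\bPhi_\ell\bUpsilon_{\ell,p-1}^{1/2}}\le 1$, which is also the rigorous content behind the paper's bald claim $a,b<1$. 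To make your argument deliver the stated bound you would need to run the telescoping after conjugating by $\bUpsilon_{\ell,p-1}^{\pm 1/2}$, so that every power is a genuine contraction, and then explicitly track the condition-number factors that the conjugation introduces. As written, your proposal yields a bound of the form $C(M)\,T^{2}\Frob{\bUpsilon_{1,p-1}-\bUpsilon_{2,p-1}}^{2}+C'(M)\,T^{3}(T-p+1)\op{\bUpsilon_{2,p-1}}^{2}\Frob{\bkappa_{1,p}\bUpsilon_{1,p-1}^{-1}-\bkappa_{2,p}\bUpsilon_{2,p-1}^{-1}}^{2}$, which is not the inequality claimed in the lemma.
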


\begin{lemma}[Auxiliary recurrences]
  The following bounds hold:   %{\bf Recurrence inequalities to be used}
\begin{align}
    \|\bUpsilon_{1,j}-\bUpsilon_{2,j}\|^2_{\mathrm{F}} &\leq \|\bUpsilon_{1,j-1}-\bUpsilon_{2,j-1}\|^2_{\mathrm{F}}+\|\bD_{1,j}-\bD_{2,j}\|^2_{\mathrm{F}}\nonumber \\ 
    & \quad +(\|\bUpsilon_1(0)\|^2_{\mathrm{op}}+\|\bUpsilon_2(0)\|^2_{\mathrm{op}}) \nonumber\\
    &\qquad \times \|\bUpsilon_{1,j-1}^{-1}\bkappa_{1,j}^{\mathrm{T}}-\bUpsilon_{2,j-1}^{-1}\bkappa_{2,j}^{\mathrm{T}}\|^2_{\mathrm{F}}; \\
   \|\bD_{1,j}-\bD_{2,j}\|^2_{\mathrm{F}} &\leq\|\bD_{1,j-1}-\bD_{2,j-1}\|^2_{\mathrm{F}} \nonumber\\ 
   &\quad + \|\bW_{1,j}\|^2_{\mathrm{op}}\|\bW_{2,j}\|^2_{\mathrm{op}}\|\bC_{1,j-1}^{-1}-\bC_{2,j-1}^{-1}\|^2_{\mathrm{F}} \nonumber \\ 
   &\quad + (\|\bC_{1,j-1}^{-1}\|^2_{\mathrm{op}}+\|\bC_{2,j-1}^{-1}\|^2_{\mathrm{op}})\|\bW_{1,j}-\bW_{2,j}\|^2_{\mathrm{F}};\\
    \|\bW_{1,j}-\bW_{2,j}\|^2_{\mathrm{F}} &\leq \|\bU_{1,j}-\bU_{2,j}\|^2_{\mathrm{F}}\|\bV_{1,j}\|^2_{\mathrm{op}} \nonumber\\ 
    & \quad +\|\bV_{1,j}-\bV_{2,j}\|^2_{\mathrm{F}}\|\bU_{2,j}\|^2_{\mathrm{op}};\\
        \|\bUpsilon_{1,j-1}^{-1}\bkappa_{1,j}^{\mathrm{T}}-\bUpsilon_{2,j-1}^{-1}\bkappa_{2,j}^{\mathrm{T}}\|^2_{\mathrm{F}} &\leq\|\bUpsilon_{1,j-2}^{-1}\bkappa_{1,j-1}^{\mathrm{T}}-\bUpsilon_{2,j-2}^{-1}\bkappa_{2,j-1}^{\mathrm{T}}\|^2_{\mathrm{F}} \nonumber \\
        &\qquad \times(1+\|\bC_{1,j}^{-1}\|^2_{\mathrm{op}}\|\bU_{1,j}\|^2_{\mathrm{op}}\|\bV_{1,j}\|^2_{\mathrm{op}}) \nonumber\\
        &\quad+2\|\bC_{1,j}^{-1}\bU_{1,j}\bV_{1,j}-\bC_{2,j}^{-1}\bU_{2,j}\bV_{2,j}\|^2_{\mathrm{F}}. 
    \end{align}

\label{lem:rec}
\end{lemma}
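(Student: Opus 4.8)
The plan is to prove the four displayed recurrences as one-step perturbation estimates, each built from a defining identity of Subsection~\ref{sec:likelihood_comp} together with two elementary facts: the mixed sub-multiplicativity $\Frob{\bA\bB}\le \op{\bA}\Frob{\bB}$ (and its mirror image), and the telescoping identity $\bA\bB-\bA'\bB'=(\bA-\bA')\bB+\bA'(\bB-\bB')$. Every operator-norm prefactor that appears is ultimately controlled uniformly in $j$ by Lemma~\ref{norm comparison} (e.g.\ $\op{\bUpsilon_j}\le\op{\bGamma(0)}$ and $\op{\bC_j^{-1}},\op{\bD_j^{-1}}\le\op{\bC_p^{-1}}$), so here I only track the structure of each bound. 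I would present them in the order third, second, fourth, first, matching the order in which they are substituted into one another when assembling Lemma~\ref{lem:for_sieve}.

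For the third inequality I use $\bW_{i,j}=\bU_{i,j}\bV_{i,j}^{\mathrm{T}}$ (Step~3) and the telescoping split $\bW_{1,j}-\bW_{2,j}=(\bU_{1,j}-\bU_{2,j})\bV_{1,j}^{\mathrm{T}}+\bU_{2,j}(\bV_{1,j}-\bV_{2,j})^{\mathrm{T}}$; applying mixed sub-multiplicativity to each summand and an elementary inequality on the cross term yields the stated form with prefactors $\op{\bV_{1,j}}^2$ and $\op{\bU_{2,j}}^2$. For the second inequality I first record the Schur-complement update implied by Step~5 combined with $\bU_j\bU_j^{\mathrm{T}}=\bW_j\bD_{j-1}^{-1}\bW_j^{\mathrm{T}}$, namely $\bD_{i,j}=\bD_{i,j-1}-\bW_{i,j}^{\mathrm{T}}\bC_{i,j-1}^{-1}\bW_{i,j}$. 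Differencing keeps the $\bD_{i,j-1}$ term intact and reduces the remainder to a triple-product perturbation of $\bW^{\mathrm{T}}\bC^{-1}\bW$: perturbing the middle factor produces $\op{\bW_{1,j}}^2\op{\bW_{2,j}}^2\Frob{\bC_{1,j-1}^{-1}-\bC_{2,j-1}^{-1}}^2$, while perturbing the outer factors produces $(\op{\bC_{1,j-1}^{-1}}^2+\op{\bC_{2,j-1}^{-1}}^2)\Frob{\bW_{1,j}-\bW_{2,j}}^2$.

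The fourth inequality is the main obstacle and the only genuinely recursive one. Writing $\bB_{i,j}:=\bUpsilon_{i,j-1}^{-1}\bkappa_{i,j}$ for the block of one-step prediction coefficients, I would apply the partitioned-inverse formula to $\bUpsilon_{i,j-1}$ in its nested form (with Schur complement $\bD_{i,j-1}$), which expresses $\bB_{i,j}$ as an embedding of $\bB_{i,j-1}=\bUpsilon_{i,j-2}^{-1}\bkappa_{i,j-1}$ plus a rank-$r$ correction that, using $\bW_{i,j}=\bU_{i,j}\bV_{i,j}^{\mathrm{T}}$ and the identities of Steps~2--3, can be rewritten through $\bC_{i,j}^{-1}\bU_{i,j}\bV_{i,j}$. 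Differencing and bounding by product perturbation, the embedded $\bB_{i,j-1}$-difference propagates with the amplification factor $(1+\op{\bC_{1,j}^{-1}}^2\op{\bU_{1,j}}^2\op{\bV_{1,j}}^2)$, whereas the correction contributes the additive term $2\Frob{\bC_{1,j}^{-1}\bU_{1,j}\bV_{1,j}-\bC_{2,j}^{-1}\bU_{2,j}\bV_{2,j}}^2$. The delicate point is tracking the correction so that exactly this multiplicative factor, rather than a worse one, emerges; it is precisely this factor that, iterated over the fixed range $j=1,\ldots,p$, generates the polynomial-in-$T$ constants recorded in Lemma~\ref{lem:for_sieve}.

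Finally, the first inequality follows from the block $\mathrm{LDL}^{\mathrm{T}}$ factorization of $\bUpsilon_{i,j}$ reparametrized through $\bUpsilon_{i,j-1}$, the Schur complement $\bD_{i,j}$, and the coupling $\bB_{i,j}$: the Frobenius norm of $\bUpsilon_{1,j}-\bUpsilon_{2,j}$ splits across the diagonal blocks, producing the $\Frob{\bUpsilon_{1,j-1}-\bUpsilon_{2,j-1}}^2$ and $\Frob{\bD_{1,j}-\bD_{2,j}}^2$ contributions, while the off-diagonal and corner blocks, being built from $\bUpsilon_{i,j-1}$ and $\bB_{i,j}$, are handled by mixed sub-multiplicativity, with the corner covariance $\bGamma(0)=\bUpsilon_\ell(0)$ supplying the prefactor $(\op{\bUpsilon_1(0)}^2+\op{\bUpsilon_2(0)}^2)$ in front of $\Frob{\bB_{1,j}-\bB_{2,j}}^2$. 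Chaining these four estimates downward from $j$ to $1$ and combining with Lemma~\ref{lem:bigsigma} then delivers Lemma~\ref{lem:for_sieve}; the uniform operator-norm control from Lemma~\ref{norm comparison} is what guarantees that each per-step amplification factor stays bounded by a model-level constant.
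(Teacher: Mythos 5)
Your proposal follows essentially the same route as the paper: the second and third bounds from differencing the Schur-complement recursion $\bD_{i,j}=\bD_{i,j-1}-\bW_{i,j}^{\mathrm{T}}\bC_{i,j-1}^{-1}\bW_{i,j}$ with $\bW_{i,j}=\bU_{i,j}\bV_{i,j}^{\mathrm{T}}$, the first from the block factorization $\bUpsilon_{i,j}=\bP_i\bQ_i\bP_i^{\mathrm{T}}$ (with $\bQ_i=\bdiag(\bUpsilon_{i,j-1},\bD_{i,j})$ and the coupling block $\bkappa_{i,j}\bUpsilon_{i,j-1}^{-1}$), and the fourth from the partitioned-inverse formula expressing $\bUpsilon_{i,j-1}^{-1}\bkappa_{i,j}^{\mathrm{T}}$ as the previous prediction block plus a correction through $\bC^{-1}\bU\bV$, with operator norms controlled uniformly via Lemma~\ref{norm comparison}. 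The decompositions, prefactors, and the recursive amplification structure you identify are exactly those used in the paper's proof.
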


\begin{lemma}[Bounding the first term in Lemma \ref{lem:bigsigma}]
The quantity $\|\bUpsilon_{1,p-1}-\bUpsilon_{2,p-1}\|^2_{\mathrm{F}}$ is bounded by 
    \begin{align}
   \lefteqn{\|\bUpsilon_{1}(0)-\bUpsilon_{2}(0)\|^2_{\mathrm{F}}+\sum_{j=0}^{p-1}\|\bD_{1,j}-\bD_{2,j}\|^2_{\mathrm{F}}} \label{eq:ineq1}\\
   &\qquad +(\|\bUpsilon_1(0)\|^2_{\mathrm{op}}+\|\bUpsilon_2(0)\|^2_{\mathrm{op}})\sum_{j=0}^{p-1}\|\bUpsilon_{1,j-1}^{-1}\bkappa_{1,j}^{\mathrm{T}}-\bUpsilon_{2,j-1}^{-1}\bkappa_{2,j}^{\mathrm{T}}\|^2_{\mathrm{F}},\nonumber
\end{align}
and 
\begin{align} 
\sum_{j=0}^{p-1}\|\bD_{1,j}-\bD_{2,j}\|^2_{\mathrm{F}} 
& \le {p\|\bUpsilon_1(0)-\bUpsilon_2(0)\|^2_{\mathrm{F}} } \nonumber\\
     &\quad + p\sum_{k=1}^{p}\big[M_{1,U}M_{1,V}M_{2,U}M_{2,V}\|\bC_{1,k-1}^{-1}-\bC_{2,k-1}^{-1}\|^2_{\mathrm{F}}\nonumber\\
     &\qquad + (\|\bC_{1,p}^{-1}\|^2_{\mathrm{op}}+\|\bC_{2,p}^{-1}\|^2_{\mathrm{op}})\nonumber \\
     &\qquad\qquad \times (\|\bU_{1,k}-\bU_{2,k}\|^2_{\mathrm{F}}M_{1,V}+\|\bV_{1,k}-\bV_{2,k}\|^2_{\mathrm{F}}M_{2,U})\big],\label{eq:ineq2}
    %&\|\bUpsilon_{1,p-1}^{-1}\bkappa_{1,p}^{\mathrm{T}}-\bUpsilon_{2,p-1}^{-1}\bkappa_{2,p}^{\mathrm{T}}\|^2_{\mathrm{F}}\nonumber\\&\quad\leq 2M_{1,U}M_{1,V}M_{1,C,U,V}\sum_{k=0}^p\|\bC^{-1}_{1,k}-\bC^{-1}_{2,k}\|^2_{\mathrm{F}} + 2\|\bC_{2,p}^{-1}\|^2_{\mathrm{op}}M_{2,V}M_{1,V}M_{1,C,U,V}\sum_{k=1}^p\|\bU_{1,k}-\bU_{2,k}\|^2_{\mathrm{F}}\nonumber\\&\qquad+2\|\bC_{2,p}^{-1}\|^2_{\mathrm{op}}\|^2_{\mathrm{F}}M_{2,U}M_{1,V}M_{1,C,U,V}\sum_{k=1}^p\|\bV_{1,k}-\bV_{2,k}\|^2_{\mathrm{F}}\label{eq:ineq3}.
\end{align}
    where, for $\ell\in\{1,2\}$, 
    $$M_{\ell,C,U,V}=(1+\|\bC_{\ell,p}^{-1}\|^2_{\mathrm{op}}\|M_{\ell,U}M_{\ell,V})^p, \quad M_{\ell,U}=\max_j\|\bU_{\ell,j}\|^2_{\mathrm{op}},M_{\ell,V}=\max_j\|\bV_{\ell,j}\|^2_{\mathrm{op}}.$$
\label{lem:finalbd}
\end{lemma}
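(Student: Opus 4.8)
The plan is to derive the two displayed inequalities by iterating the four recurrences of Lemma~\ref{lem:rec} and then collapsing the resulting nested sums with crude operator-norm estimates. No genuinely new idea is needed beyond the recurrences; the work is in the bookkeeping of indices and the uniform norm bounds.

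First I would establish (\ref{eq:ineq1}). Iterating the first recurrence of Lemma~\ref{lem:rec} lowers the index of $\Frob{\bUpsilon_{1,j}-\bUpsilon_{2,j}}^2$ by one at each step, depositing a term $\Frob{\bD_{1,j}-\bD_{2,j}}^2$ together with a $\bkappa$-term scaled by $(\op{\bUpsilon_1(0)}^2+\op{\bUpsilon_2(0)}^2)$. Unrolling from $j=p-1$ down to the base index telescopes the $\bUpsilon$ differences, and using the base identity $\bUpsilon_{\ell,0}=\bGamma_\ell(0)=\bUpsilon_\ell(0)$ produces the leading $\Frob{\bUpsilon_1(0)-\bUpsilon_2(0)}^2$ and the two accumulated sums of (\ref{eq:ineq1}); the base-index contributions (at $j=0$) are folded into the leading term and the starting object of the $\bkappa$-sum under the index conventions of Lemma~\ref{lem:rec}.

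Next I would establish (\ref{eq:ineq2}). Iterating the second recurrence of Lemma~\ref{lem:rec} writes each $\Frob{\bD_{1,j}-\bD_{2,j}}^2$ as the base term $\Frob{\bD_{1,0}-\bD_{2,0}}^2=\Frob{\bUpsilon_1(0)-\bUpsilon_2(0)}^2$ plus an inner sum over $k=1,\ldots,j$. Summing over $j=0,\ldots,p-1$ and reversing the order of summation, each $k$-indexed contribution is repeated at most $p$ times, which supplies both leading factors of $p$. Within each summand I would then (i) bound $\op{\bW_{\ell,k}}^2\le \op{\bU_{\ell,k}}^2\op{\bV_{\ell,k}}^2\le M_{\ell,U}M_{\ell,V}$ by submultiplicativity applied to the factorization $\bW_{\ell,k}=\bU_{\ell,k}\bV_{\ell,k}^{\mathrm{T}}$; (ii) substitute the third recurrence of Lemma~\ref{lem:rec} for $\Frob{\bW_{1,k}-\bW_{2,k}}^2$ and replace $\op{\bV_{1,k}}^2\le M_{1,V}$ and $\op{\bU_{2,k}}^2\le M_{2,U}$; and (iii) replace $\op{\bC_{\ell,k-1}^{-1}}^2\le \op{\bC_{\ell,p}^{-1}}^2$ using the Lowner monotonicity $\bC_{\ell,0}^{-1}\le\cdots\le\bC_{\ell,p}^{-1}$ from Proposition~\ref{prop:NewParameterization}, which is valid because $k-1\le p$. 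Collecting terms then yields (\ref{eq:ineq2}); the auxiliary quantities $M_{\ell,C,U,V}$, $M_{\ell,U}$, $M_{\ell,V}$ are recorded here for the subsequent step in which the $\bkappa$-sum left over in (\ref{eq:ineq1}) is controlled through the fourth recurrence of Lemma~\ref{lem:rec}.

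The submultiplicative and telescoping estimates are routine. The main obstacle is the double-sum manipulation for (\ref{eq:ineq2})---correctly verifying that, after reversing the order of summation over $j$, the inner $k$-sum contributes exactly the factor $p$, and that the base-index conventions ($\bD_{\ell,0}=\bUpsilon_\ell(0)$ and the starting object of the $\bkappa$-recursion) are consistent with the index ranges in Lemma~\ref{lem:rec}. One must also confirm that the operator-norm monotonicity of the conditional precisions is legitimately inherited from the causal ordering simultaneously for both parameter configurations $\ell=1,2$, since this is what licenses replacing $k-1$ by $p$ uniformly inside the sum.
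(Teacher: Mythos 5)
Your proposal is correct and takes essentially the same route as the paper's own proof: both iterate the recurrences of Lemma~\ref{lem:rec}, collapse the double sum $\sum_{j=0}^{p-1}\sum_{k=1}^{j}$ into $p\sum_{k=1}^{p}$, and pass to the uniform bounds $M_{\ell,U}$, $M_{\ell,V}$ together with $\|\bC_{\ell,k-1}^{-1}\|^2_{\mathrm{op}}\le\|\bC_{\ell,p}^{-1}\|^2_{\mathrm{op}}$. The only difference is presentational: you make explicit (via Proposition~\ref{prop:NewParameterization}) the monotonicity of the conditional precisions that licenses replacing $k-1$ by $p$, a step the paper uses implicitly.
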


\begin{lemma}[Bounding the second term in Lemma \ref{lem:bigsigma}]
We have   
  \begin{align}
\|\bUpsilon_{1,p-1}^{-1}\bkappa_{1,p}^{\mathrm{T}}-\bUpsilon_{2,p-1}^{-1}\bkappa_{2,p}^{\mathrm{T}}\|^2_{\mathrm{F}}
 & \le {2M_{1,U}M_{1,V}M_{1,C,U,V}\sum_{k=0}^p\|\bC^{-1}_{1,k}-\bC^{-1}_{2,k}\|^2_{\mathrm{F}} }\nonumber\\
 &\quad + 2\|\bC_{2,p}^{-1}\|^2_{\mathrm{op}}M_{2,V}M_{1,V}M_{1,C,U,V}\sum_{k=1}^p\|\bU_{1,k}-\bU_{2,k}\|^2_{\mathrm{F}}\nonumber\\
&\quad +2\|\bC_{2,p}^{-1}\|^2_{\mathrm{op}}\|^2_{\mathrm{F}}M_{2,U}M_{1,V}M_{1,C,U,V}\sum_{k=1}^p\|\bV_{1,k}-\bV_{2,k}\|^2_{\mathrm{F}}, 
\label{eq:ineq3}
\end{align}
    where $$M_{\ell,U}=\max_j\|\bU_{\ell,j}\|^2_{\mathrm{op}}, \quad M_{\ell,V}=\max_j\|\bV_{\ell,j}\|^2_{\mathrm{op}},\quad  M_{\ell,C,U,V}=(1+\|\bC_{\ell,p}^{-1}\|^2_{\mathrm{op}}\|M_{\ell,U}M_{\ell,V})^p.$$
\label{lem:finalbdpre}
\end{lemma}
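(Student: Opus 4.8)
The plan is to read the displayed inequality as the solution of a first-order linear recurrence and to unroll it. Set $S_j:=\Frob{\bUpsilon_{1,j-1}^{-1}\bkappa_{1,j}^{\mathrm{T}}-\bUpsilon_{2,j-1}^{-1}\bkappa_{2,j}^{\mathrm{T}}}^2$, so that the quantity to be bounded is $S_p$. The fourth inequality of Lemma~\ref{lem:rec} has exactly the shape $S_j\le a_j S_{j-1}+b_j$, with multiplier $a_j=1+\op{\bC_{1,j}^{-1}}^2\op{\bU_{1,j}}^2\op{\bV_{1,j}}^2$ and forcing term $b_j=2\Frob{\bC_{1,j}^{-1}\bU_{1,j}\bV_{1,j}^{\mathrm{T}}-\bC_{2,j}^{-1}\bU_{2,j}\bV_{2,j}^{\mathrm{T}}}^2$. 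Iterating this relation down to the base case (at the bottom of the recursion the precision index drops to $\bC_0^{-1}=\bOmega$, since $\bUpsilon_0^{-1}\bkappa_1^{\mathrm{T}}=\bC_0^{-1}\bU_1\bV_1^{\mathrm{T}}$, so the first term is itself already of forcing type) yields $S_p\le\sum_{k}b_k\prod_{j>k}a_j$.

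First I would bound the multipliers uniformly. By the monotonicity of the conditional precisions in Proposition~\ref{prop:NewParameterization}, $\bC_{\ell,j}^{-1}\le\bC_{\ell,p}^{-1}$, so $\op{\bC_{1,j}^{-1}}\le\op{\bC_{1,p}^{-1}}$; together with $\op{\bU_{1,j}}^2\le M_{1,U}$ and $\op{\bV_{1,j}}^2\le M_{1,V}$ this gives $a_j\le 1+\op{\bC_{1,p}^{-1}}^2M_{1,U}M_{1,V}$, whence every partial product satisfies $\prod_{j>k}a_j\le(1+\op{\bC_{1,p}^{-1}}^2M_{1,U}M_{1,V})^p=M_{1,C,U,V}$. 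Pulling this common factor out of the sum is what produces the $M_{1,C,U,V}$ appearing in all three terms of the statement.

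Next I would bound each forcing term $b_k$. Writing $\bW_{\ell,k}=\bU_{\ell,k}\bV_{\ell,k}^{\mathrm{T}}$, I split
\[
\bC_{1,k}^{-1}\bW_{1,k}-\bC_{2,k}^{-1}\bW_{2,k}=(\bC_{1,k}^{-1}-\bC_{2,k}^{-1})\bW_{1,k}+\bC_{2,k}^{-1}(\bW_{1,k}-\bW_{2,k}),
\]
and apply $\Frob{XY}\le\op{X}\Frob{Y}$ together with $\Frob{a+b}^2\le 2\Frob{a}^2+2\Frob{b}^2$. The first piece contributes $\op{\bW_{1,k}}^2\Frob{\bC_{1,k}^{-1}-\bC_{2,k}^{-1}}^2\le M_{1,U}M_{1,V}\Frob{\bC_{1,k}^{-1}-\bC_{2,k}^{-1}}^2$, using $\op{\bW_{1,k}}^2\le\op{\bU_{1,k}}^2\op{\bV_{1,k}}^2$. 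For the second piece I use $\op{\bC_{2,k}^{-1}}\le\op{\bC_{2,p}^{-1}}$ and then feed in the third inequality of Lemma~\ref{lem:rec}, namely $\Frob{\bW_{1,k}-\bW_{2,k}}^2\le M_{1,V}\Frob{\bU_{1,k}-\bU_{2,k}}^2+M_{2,U}\Frob{\bV_{1,k}-\bV_{2,k}}^2$, to turn the $\bW$-difference into the $\bU$- and $\bV$-difference terms. Summing over $k$, multiplying by the uniform factor $M_{1,C,U,V}$, and letting the base case furnish the $k=0$ summand $\Frob{\bOmega_1-\bOmega_2}^2$ assembles the three sums $\sum_{k=0}^p\Frob{\bC_{1,k}^{-1}-\bC_{2,k}^{-1}}^2$, $\sum_{k=1}^p\Frob{\bU_{1,k}-\bU_{2,k}}^2$ and $\sum_{k=1}^p\Frob{\bV_{1,k}-\bV_{2,k}}^2$ with the advertised coefficients.

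The step I expect to be the main obstacle is the bookkeeping of exactly which operator-norm constants ($M_{1,U},M_{1,V},M_{2,U},M_{2,V}$) attach to each difference factor: the two-term split above and the $\bW$-difference bound of Lemma~\ref{lem:rec} fix whether $\bU_1,\bV_1$ or $\bU_2,\bV_2$ survive in each operator norm, and matching these to the stated coefficients $M_{2,V}M_{1,V}$ and $M_{2,U}M_{1,V}$ requires care, with several of the $M$'s entering only as conservative upper bounds. The other point that must be checked explicitly is that the monotonicity $\bC_{\ell,j}^{-1}\le\bC_{\ell,p}^{-1}$ is legitimately available at every index, so that the single constant $M_{1,C,U,V}$ genuinely dominates all the partial products produced by unrolling.
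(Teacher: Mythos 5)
Your proposal is correct and follows essentially the same route as the paper's own proof: both start from the fourth recurrence of Lemma~\ref{lem:rec}, unroll it $p$ times with the multipliers uniformly dominated by $M_{1,C,U,V}$ via the monotonicity $\op{\bC_{\ell,j}^{-1}}\le\op{\bC_{\ell,p}^{-1}}$, and split each forcing term $\bC_{1,k}^{-1}\bW_{1,k}-\bC_{2,k}^{-1}\bW_{2,k}$ into a $\bC$-difference piece and a $\bW$-difference piece that is then converted to $\bU$- and $\bV$-differences. The only discrepancies are in which $M_{\ell,U},M_{\ell,V}$ factors decorate each sum, but the paper's own derivation exhibits exactly the same slack relative to its stated constants (it derives $M_{1,U}$ where the lemma states $M_{2,U}M_{1,V}$), so your acknowledgment that these enter only as conservative bookkeeping is consistent with the source.
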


The following steps transform the bounds for $\bU_j$'s and $\bV_j$'s to $\bL_j$'s and $\bK_j$'s.

\begin{lemma}
The quantity $\|\bV_{j,1}-\bV_{j,2}\|^2_{\mathrm{F}}$ is bounded by 
\begin{align*}
        \lefteqn{ r\|\bK_{j,1}\|^2_{\mathrm {op}}\|(\bK_{j,1}^{\mathrm{T}}\bD_{j-1,1}^{-1}\bK_{1,1})^{-1}\|_{\mathrm{op}}\|(\bK_{j,2}^{\mathrm{T}}\bD_{j-1,2}^{-1}\bK_{j,2})^{-1}\|_{\mathrm{op}} }\\
        &
        \quad\times \{\|\bK_{j,1}\|_{\mathrm{op}}\|\bK_{j,2}\|_{\mathrm{op}}\|\bD_{j-1,1}^{-1}-\bD_{j-1,2}^{-1}\|_{\mathrm{op}}\\
        &+|\bK_{j,1}-\bK_{j,2}\|^2_{\mathrm{op}}(\|\bD_{j-1,1}^{-1}\|_{\mathrm{op}}\|\bK_{j,1}\|_{\mathrm{op}}+\|\bD_{j-1,2}^{-1}\|_{\mathrm{op}}\|\bK_{j,2}\|_{\mathrm{op}})\}\\&+ r\|(\bK_{j,2}^{\mathrm{T}}\bD_{j-1,2}^{-1}\bK_{j,2})^{-1/2}\|^2_{\mathrm{op}}\|\bK_{j,1}-\bK_{j,2}\|^2_{\mathrm{op}}, 
    \end{align*}
and    $\|\bU_{j,1}-\bU_{j,2}\|^2_{\mathrm{op}}$ is bounded by 
\begin{align*}
       \lefteqn{ \|\bC_{j-1,1}\|^2_{\mathrm{op}}\|\bL_{j,1}\|^2_{\mathrm{op}}\|(\bI + \bL^{\mathrm{T}}_{j,1} \bC_{j-1,1}\bL_{j,1})^{-1}\|_{\mathrm{op}} }\\
       &\quad\times \|(\bI + \bL^{\mathrm{T}}_{j,2} \bC_{j-1,2}\bL_{j,2})^{-1}\|_{\mathrm{op}}\|\bL^{\mathrm{T}}_{j,1} \bC_{j-1,1}\bL_{j,1}       -\bL^{\mathrm{T}}_2 \bC_{j-1,2}\bL_{j,2}\|_{\mathrm{op}}\\
       &+\|(\bI + \bL^{\mathrm{T}}_{j,2} \bC_{j-1,2}\bL_{j,2})^{-1/2}\|^2_{\mathrm{op}}\|\bC_{j-1,1}\bL_{j,1}-\bC_{j-1,2}\bL_{j,2}\|^2_{\mathrm{op}}.
    \end{align*}
    \label{lem:UjVj}
\end{lemma}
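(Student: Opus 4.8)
The plan is to treat the two bounds by a single scheme, since by \eqref{eq:Vj} and \eqref{eq:Uj} both $\bV_j=\bK_j\bM^{-1/2}$ with $\bM=\bK_j^{\mathrm{T}}\bD_{j-1}^{-1}\bK_j$ and $\bU_j=\bC_{j-1}\bL_j\,\bM^{-1/2}$ with $\bM=\bI+\bL_j^{\mathrm{T}}\bC_{j-1}\bL_j$ are of the generic form $\bX\bM^{-1/2}$ for a symmetric positive definite $r\times r$ matrix $\bM$. Writing $\bM_1,\bM_2$ for the inner matrices under the two parameter values, I would first split
\begin{align*}
\bX_1\bM_1^{-1/2}-\bX_2\bM_2^{-1/2}
=\bX_1\bigl(\bM_1^{-1/2}-\bM_2^{-1/2}\bigr)+\bigl(\bX_1-\bX_2\bigr)\bM_2^{-1/2},
\end{align*}
and then apply $\op{\bA+\bB}^2\le 2\op{\bA}^2+2\op{\bB}^2$ and submultiplicativity of the operator norm. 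For $\bV_j$ I would additionally use $\Frob{\bV_{j,1}-\bV_{j,2}}^2\le r\,\op{\bV_{j,1}-\bV_{j,2}}^2$, which holds because the difference is a $d\times r$ matrix with at most $r$ nonzero singular values; this is the source of the leading factor $r$, and is why the $\bV$ bound is stated in the Frobenius norm while the $\bU$ bound stays in the operator norm.

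The crux, and the step I expect to be the main obstacle, is controlling the inverse-square-root difference $\op{\bM_1^{-1/2}-\bM_2^{-1/2}}$ so that the eventual estimate is \emph{first order} in $\op{\bM_1-\bM_2}$, as the stated bounds require. A naive route through the Lipschitz bound for $\bM\mapsto\bM^{1/2}$ produces a term quadratic in $\op{\bM_1-\bM_2}$, which is the wrong structure. Instead I would invoke the Hölder continuity of the matrix square root, $\op{\bA^{1/2}-\bB^{1/2}}^2\le\op{\bA-\bB}$ for positive semidefinite $\bA,\bB$, applied to $\bA=\bM_1^{-1},\bB=\bM_2^{-1}$, followed by the resolvent identity $\bM_1^{-1}-\bM_2^{-1}=\bM_1^{-1}(\bM_2-\bM_1)\bM_2^{-1}$, giving
\begin{align*}
\op{\bM_1^{-1/2}-\bM_2^{-1/2}}^2\le\op{\bM_1^{-1}-\bM_2^{-1}}\le\op{\bM_1^{-1}}\,\op{\bM_2^{-1}}\,\op{\bM_1-\bM_2}.
\end{align*}
This reproduces exactly the prefactor $\op{\bM_1^{-1}}\,\op{\bM_2^{-1}}$ and the first-order difference $\op{\bM_1-\bM_2}$ that head each bound.

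It remains to expand $\op{\bM_1-\bM_2}$ and the second telescoping term. For $\bU_j$ the identities cancel, so $\bM_1-\bM_2=\bL_{j,1}^{\mathrm{T}}\bC_{j-1,1}\bL_{j,1}-\bL_{j,2}^{\mathrm{T}}\bC_{j-1,2}\bL_{j,2}$ is left intact, and with $\op{\bX_1}^2\le\op{\bC_{j-1,1}}^2\op{\bL_{j,1}}^2$ for the first term and $\op{\bX_1-\bX_2}^2=\op{\bC_{j-1,1}\bL_{j,1}-\bC_{j-1,2}\bL_{j,2}}^2$ for the second, the two stated summands of the $\bU$ bound follow directly. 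For $\bV_j$ I would further telescope $\bM_1-\bM_2=\bK_{j,1}^{\mathrm{T}}\bD_{j-1,1}^{-1}\bK_{j,1}-\bK_{j,2}^{\mathrm{T}}\bD_{j-1,2}^{-1}\bK_{j,2}$ by inserting $\bK_{j,1}^{\mathrm{T}}\bD_{j-1,2}^{-1}\bK_{j,1}$, producing a piece bounded by $\op{\bK_{j,1}}^2\op{\bD_{j-1,1}^{-1}-\bD_{j-1,2}^{-1}}$ together with cross terms governed by $\op{\bK_{j,1}-\bK_{j,2}}$, which assemble into the braces of the $\bV$ bound; the second telescoping term contributes $\op{(\bK_{j,2}^{\mathrm{T}}\bD_{j-1,2}^{-1}\bK_{j,2})^{-1/2}}^2\op{\bK_{j,1}-\bK_{j,2}}^2$, matching its final summand. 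The remaining work is routine bookkeeping of these submultiplicative estimates, with all absolute constants absorbed into the stated inequalities.
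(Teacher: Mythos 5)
Your proposal is correct and follows essentially the same route as the paper's proof: the same splitting $\bX_1\bM_1^{-1/2}-\bX_2\bM_2^{-1/2}=\bX_1(\bM_1^{-1/2}-\bM_2^{-1/2})+(\bX_1-\bX_2)\bM_2^{-1/2}$, the same H\"older-type inequality $\op{\bA^{1/2}-\bB^{1/2}}^2\le\op{\bA-\bB}$ (stated in the paper with a typo in the exponent), the resolvent identity to expand the difference of inverses, and the rank-$r$ bound $\Frob{\cdot}^2\le r\op{\cdot}^2$ to pass to the Frobenius norm for $\bV_j$. You spell out the resolvent and telescoping steps that the paper leaves implicit, but the argument is the same.
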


\subsection{Proofs of the Auxiliary Lemmas}

\begin{proof}[Proof of Lemma~\ref{norm comparison}]
    Since,
$$\bM={\begin{bmatrix}\bA&\bC^{\mathrm{T}}\\\bC&\bB\end{bmatrix}}={\begin{bmatrix}\bI_{p}&\bzero\\\bC\bA^{-1}&\bI_{q}\end{bmatrix}}{\begin{bmatrix}\bA&\bzero\\\bzero&\bB-\bC\bA^{-1}C^{\mathrm{T}}\end{bmatrix}}{\begin{bmatrix}\bI_{p}&
\bA^{-1}\bC^{\mathrm{T}}\\\bzero&\bI_{q}\end{bmatrix}},$$
we have 
$$\bUpsilon_{j}={\begin{bmatrix}\bI_{d(j-1)}&\bzero\\\bkappa_j\bUpsilon_{j-1}^{-1}&\bI_{d}\end{bmatrix}}{\begin{bmatrix}\bUpsilon_{j-1}&\bzero\\\bzero&\bGamma(0)-\bkappa_j^{\mathrm{T}}\bUpsilon_{j-1}^{-1}\bkappa_j\end{bmatrix}}{\begin{bmatrix}\bI_{d(j-1)}&\bUpsilon_{j-1}^{-1}\bkappa_j^{\mathrm{T}}\\\bzero&\bI_{d}\end{bmatrix}}=\bP\bQ\bP^{\mathrm{T}}.$$

The operator norm of $\bP$ is 1 and $\bD_j=\bGamma(0)-\bkappa_j^{\mathrm{T}}\bUpsilon_{j-1}^{-1}\bkappa_j$, giving 
%$\|\bUpsilon_{1,j}-\bUpsilon_{2,j}\|^2_{\mathrm{F}}\leq \|\bQ_1-\bQ_2\|_{\mathrm{F}}+(\|\bQ_1\|^2_{\mathrm{op}}+\|\bQ_2\|^2_{\mathrm{op}})\|\bP_1-\bP_2\|^2_{\mathrm{F}}$.
$$\|\bUpsilon_{j}\|^2_{\mathrm{op}}\leq\|\bQ\|^2_{\mathrm{op}}\leq \max\{\|\bD_j\|^2_{\mathrm{op}},\|\bUpsilon_{j-1}\|^2_{\mathrm{op}}\}\leq \|\bGamma(0)\|^2_{\mathrm{op}},$$ by applying the first inequality recursively as $\|\bD_j\|^2_{\mathrm{op}}\leq \|\bGamma(0)\|^2_{\mathrm{op}}$.

Applying the above, we also have,
$$\bUpsilon_{j}^{-1}={\begin{bmatrix}\bI_{d(j-1)}&\bzero\\-\bkappa_j\bUpsilon_{j-1}^{-1}&\bI_{d}\end{bmatrix}}{\begin{bmatrix}\bUpsilon_{j-1}^{-1}&0\\0&\bD_{j}^{-1}\end{bmatrix}}{\begin{bmatrix}\bI_{d(j-1)}&-\bUpsilon_{j-1}^{-1}\bkappa_j^{\mathrm{T}}\\\bzero&\bI_{d}\end{bmatrix}}=\bG\bH\bG^{\mathrm{T}}.$$
The operator norm of $\bG$ is 1, and so  $$\|\bUpsilon_{j}^{-1}\|^2_{\mathrm{op}}\leq \|\bH\|^2_{\mathrm{op}}\leq \max\{\|\bD^{-1}_j\|^2_{\mathrm{op}},\|\bUpsilon_{j-1}^{-1}\|^2_{\mathrm{op}}\}\leq \|\bC_p^{-1}\|^2_{\mathrm{op}},$$ by applying the first inequality recursively since $\|\bD_j^{-1}\|^2_{\mathrm{op}}\leq \|\bC_p^{-1}\|^2_{\mathrm{op}}$ and $\bC_p^{-1}=\bOmega+\sum_{k=1}^p \bL_K\bL_k^{\mathrm{T}}$. 
Thus,
\begin{align*} 
\|\bUpsilon_{1,j}-\bUpsilon_{2,j}\|^2_{\mathrm{F}} & \leq \|\bQ_1-\bQ_2\|_{\mathrm{F}}+(\|\bQ_1\|^2_{\mathrm{op}}+\|\bQ_2\|^2_{\mathrm{op}})\|\bP_1-\bP_2\|^2_{\mathrm{F}} \\ 
& \leq \|\bUpsilon_{1,j-1}-\bUpsilon_{2,j-1}\|^2_{\mathrm{F}}+\|\bD_{1,j}-\bD_{2,j}\|^2_{\mathrm{F}}
\\
&\quad +(\|\bUpsilon_1(0)\|^2_{\mathrm{op}}+\|\bUpsilon_2(0)\|^2_{\mathrm{op}})\|\bUpsilon_{1,j-1}^{-1}\bkappa_{1,j}^{\mathrm{T}}-\bUpsilon_{2,j-1}^{-1}\bkappa_{2,j}^{\mathrm{T}}\|^2_{\mathrm{F}}.
\end{align*}
Thus we have 
$$\|\bkappa_{j}\|^2_{\mathrm{op}}\leq \|\bUpsilon_{j-1}^{-1}\bkappa_{j}^{\mathrm{T}}\|^2_{\mathrm{op}}\|\bUpsilon_{j-1}\|^2_{\mathrm{op}}\leq \|\bUpsilon_{j-1}\|^2_{\mathrm{op}}\leq \|\bGamma(0)\|^2_{\mathrm{op}} ,$$
since $\|\bUpsilon_{j-1}^{-1}\bkappa_{j}^{\mathrm{T}}\|^2_{\mathrm{op}}\leq 1$, 
\end{proof}

\begin{proof}[Proof of Lemma~\ref{lem:boundtransfer1}]

There are $T$ many diagonal blocks of $\bUpsilon_1(0)$ and $\bUpsilon_{2}(0)$ in $\bUpsilon_{1,T}$ and $\bUpsilon_{2,T}$, respectively. Thus, $  \|\bUpsilon_1(0)-\bUpsilon_2(0)\|^2_{\mathrm{F}} \leq T^{-1}\|\bUpsilon_{1,T}-\bUpsilon_{2,T}\|^2_{\mathrm{F}}$ and 
 \begin{align*} 
\|\bUpsilon_2(0)^{-1}\|^2_{\mathrm{op}} & \leq \|\bUpsilon_1(0)^{-1}\|^2_{\mathrm{op}}+\|\bUpsilon_2(0)^{-1}\|^2_{\mathrm{op}}\|\bUpsilon_1(0)^{-1}\|^2_{\mathrm{op}}\|\bUpsilon_{1,T}-\bUpsilon_{2,T}\|^2_{\mathrm{op}}, 
\end{align*} 
which leads to the bound 
$$\|\bUpsilon_2(0)^{-1}\|^2_{\mathrm{op}}\leq \frac{\|\bUpsilon_1(0)^{-1}\|^2_{\mathrm{op}}}{1-\|\bUpsilon_1(0)^{-1}\|^2_{\mathrm{op}}\|\bUpsilon_{1}(0)-\bUpsilon_{2}(0)\|^2_{\mathrm{op}}}$$ whenever $\|\bUpsilon_{1}(0)-\bUpsilon_{2}(0)\|^2_{\mathrm{op}}$ is sufficiently small. 
 Hence, 
 \begin{align*}
 \|\bOmega_1-\bOmega_2\|^2_{\mathrm{F}} & \le \|\bUpsilon_1(0)-\bUpsilon_2(0)\|^2_{\mathrm{F}}\|\bUpsilon_1(0)^{-1}\|^2_{\mathrm{op}}\|\bUpsilon_2(0)^{-1}\|^2_{\mathrm{op}}\\
& \leq \|\bUpsilon_1(0)-\bUpsilon_2(0)\|^2_{\mathrm{F}}\frac{\|\bUpsilon_1(0)^{-1}\|^2_{\mathrm{op}}}{1-\|\bUpsilon_1(0)^{-1}\|^2_{\mathrm{op}}\|\bUpsilon_{1}(0)-\bUpsilon_{2}(0)\|^2_{\mathrm{op}}}. 
 \end{align*}
 This completes the proof.
\end{proof}

\begin{proof}[Proof of Lemma~\ref{lem:boundtransfer2}]
Note that 
    $$\|\bA_1-\bA_2\|^2_{\mathrm{F}}=\|\bxi_{1,p}-\bxi_{2,p}\|^2_{\mathrm{F}}\|\bGamma^{-1}_{1,p-1}\|^2_{\mathrm{op}}+\|\bGamma^{-1}_{1,p-1}-\bGamma^{-1}_{2,p-1}\|^2_{\mathrm{F}}\|\bxi_{2,p-1}\|^2_{\mathrm{op}}.$$
    As there are $T-p$ blocks of $\bxi_{1,p}$ in $\bUpsilon_{1,T}$, $(T-p)\|\bxi_{1,p}-\bxi_{2,p}\|^2_{\mathrm{F}}\leq \|\bUpsilon_{1,T}-\bUpsilon_{2,T}\|^2_{\mathrm{F}}$ which implies that $(1-p/T)\|\bxi_{1,p}-\bxi_{2,p}\|^2_{\mathrm{F}}\leq T^{-1}\|\bUpsilon_{1,T}-\bUpsilon_{2,T}\|^2_{\mathrm{F}}$. Moreover, we have the bound $\|\bxi_{2,p-1}\|^2_{\mathrm{op}}\leq \|\bxi_{1,p-1}\|^2_{\mathrm{op}}+\|\bxi_{1,p-1}-\bxi_{2,p-1}\|^2_{\mathrm{op}}$.

We assume without loss of generality that $p$ divides $T$; the contribution from the remainder may be neglected since $p/T\to 0$. 
As there are $T/p$ diagonal blocks of $\bxi_{1,p}$ in $\bUpsilon_{1,p-1}$, 
$$ p^{-1}\|\bUpsilon_{1,p-1}-\bUpsilon_{2,p-1}\|^2_{\mathrm{F}}\leq T^{-1} (T/p)\|\bUpsilon_{1,T}-\bUpsilon_{2,T}\|^2_{\mathrm{F}} \leq T^{-1} \|\bUpsilon_{1,T}-\bUpsilon_{2,T}\|^2_{\mathrm{F}},$$
establishing the claim. 
\end{proof}

\begin{proof}[Proof of Lemma~\ref{lem:around_the_truth}]
Let  
$$\max\{\|\bOmega_1-\bOmega_2\|_{\mathrm{F}}, \|\bL_{1,j}-\bL_{2,j}\|_{\mathrm{F}},\|\bU_{1,j}-\bU_{2,j}\|_{\mathrm{F}}, \|\bV_{1,j}-\bV_{2,j}\|_{\mathrm{F}}\}\leq \epsilon.$$ 
Since, $\|\bU_{\ell,j}\|^2_{\mathrm{op}}\leq \|\bC_{j-1}\|^2_{\mathrm{op}}\|\bL_j\|^2_{\mathrm{op}}$ and $\|\bV_{\ell,j}\|^2_{\mathrm{op}}\leq \|\bD_{j-1}^{-1}\|^2_{\mathrm{op}}\leq \|\bC_p^{-1}\|_{\mathrm{op}}$, we have
\begin{align*} 
\max_j\|\bU_{\ell,j}\|^2_{\mathrm{op}} &\leq \|\bUpsilon_1(0)\|^2_{\mathrm{op}}\|\bL_{\ell,j}\|^2_{\mathrm{op}}\leq  M_{\ell,U},\\ \max_j\|\bV_{\ell,j}\|^2_{\mathrm{op}} &\leq\|\bC_{\ell,p}^{-1}\|_{\mathrm{op}}= M_{\ell,V}, \\ (1+\|\bC^{-1}_{1,p}\|_{\mathrm{op}}\|\bUpsilon_1(0)\|^2_{\mathrm{op}}\|\bL_{1,j}\|^2_{\mathrm{op}}\|\bC_p^{-1}\|_{\mathrm{op}})^p &\leq M_{\ell,C,V,U}, 
\end{align*} 
we have, 
\begin{align*}
    &\|\bUpsilon_{1,p-1}-\bUpsilon_{2,p-1}\|^2_{\mathrm{F}}\\&\quad\leq \epsilon[(p+1) + p^3\{M_{1,U}M_{1,V}M_{2,U}M_{2,V} + (\|\bC_{1,p}^{-1}\|^2_{\mathrm{op}}+\|\bC_{2,p}^{-1}\|^2_{\mathrm{op}})(M_{1,V}+M_{2,U})\} \\&\qquad+ (\|\bUpsilon_1(0)\|^2_{\mathrm{op}}+\|\bUpsilon_2(0)\|^2_{\mathrm{op}})\{p^4M_{1,U}M_{1,V}M_{1,C,U,V}\\
    &\qquad + 2p^2\|\bC_{2,p}^{-1}\|^2_{\mathrm{op}}M_{2,V}M_{1,V}M_{1,C,U,V}  +2p^2\|\bC_{2,p}^{-1}\|^2_{\mathrm{op}}\|^2_{\mathrm{F}}M_{2,U}M_{1,V}M_{1,C,U,V}\}]
\end{align*}
and 
\begin{align*}
    \lefteqn{\|\bUpsilon_{1,p-1}^{-1}\bkappa_{1,p}^{\mathrm{T}}-\bUpsilon_{2,p-1}^{-1}\bkappa_{2,p}^{\mathrm{T}}\|^2_{\mathrm{F}}} \\
    &
\leq \epsilon[2p^2(M_{1,U}M_{1,V}M_{1,C,U,V}+p) + 2p\|\bC_{2,p}^{-1}\|^2_{\mathrm{op}}M_{2,V}M_{1,V}M_{1,C,U,V} \\&\qquad+2\|\bC_{2,p}^{-1}\|^2_{\mathrm{op}}\|^2_{\mathrm{F}}M_{2,U}M_{1,V}M_{1,C,U,V}p].
\end{align*}
Finally,
\begin{align}
   \|\bUpsilon_{1,T}-\bUpsilon_{2,T}\|_{\mathrm{F}}\leq \epsilon[T^2M_{P,1}+(T-p+1)\|\bUpsilon_{1,p-1}\|^2_{\mathrm{op}}M_{P,2}], 
   \label{eq:mainbd}
\end{align}
where 
\begin{align*} 
M_{P,1} &=[(p+1) + p^3\{M_{1,U}M_{1,V}M_{2,U}M_{2,V} \\
&\quad + (\|\bC_{1,p}^{-1}\|^2_{\mathrm{op}}+\|\bC_{2,p}^{-1}\|^2_{\mathrm{op}})(M_{1,V}+M_{2,U})\} \\
&\quad + (\|\bUpsilon_1(0)\|^2_{\mathrm{op}}+\|\bUpsilon_2(0)\|^2_{\mathrm{op}})\{p^4M_{1,U}M_{1,V}M_{1,C,U,V}\\
&\quad + 2p^2\|\bC_{2,p}^{-1}\|^2_{\mathrm{op}}M_{2,V}M_{1,V}M_{1,C,U,V}  \\
&\quad +2p^2\|\bC_{2,p}^{-1}\|^2_{\mathrm{op}}\|^2_{\mathrm{F}}M_{2,U}M_{1,V}M_{1,C,U,V}\}]
\end{align*} 
and 
\begin{align*} 
M_{P,2} &=[2p^2(M_{1,U}M_{1,V}M_{1,C,U,V}+p) \\ 
&\quad + 2p\|\bC_{2,p}^{-1}\|^2_{\mathrm{op}}M_{2,V}M_{1,V}M_{1,C,U,V} +2p\|\bC_{2,p}^{-1}\|^2_{\mathrm{op}}\|^2_{\mathrm{F}}M_{2,U}M_{1,V}M_{1,C,U,V}].
\end{align*}

While bounding around the truth, we let the eigenvalues of $\bOmega_{1}$ be bounded between two fixed constants. Also let $\|\bL_{1,j}\|^2_{\mathrm{op}}$ and $\|\bK_{1,j}\|^2_{\mathrm{op}}$ be also bounded by a fixed constant. This puts an upper bound on $\bC_p^{-1}$ as well. Then we have 
$$\|\bU_{1,j}-\bU_{2,j}\|^2_{\mathrm{F}}\lesssim \|\bOmega_1-\bOmega_2\|^2_{\mathrm{op}}+\sum_{k=1}^{j}  \|\bL_{1,k}\bL_{1,k}^{\mathrm{T}}-\bL_{2,k}\bL_{2,k}^{\mathrm{T}}\|^2_{\mathrm{op}}.$$ 
% for some $u_T$, a polynomial in $T$.
We can similarly show that 
$$\|\bV_{1,j}-\bV_{2,j}\|^2_{\mathrm{F}}\lesssim \|\bD_{j-1,1}-\bD_{j-1,2}\|^2_{\mathrm{op}}+ \|\bK_{1,j}-\bK_{2,j}\|^2_{\mathrm{op}}.$$ %for some $v_T$ which is polynomial in $T$.
The recurrence relations leads to 
\begin{align*} 
\lefteqn{\|\bD_{1,j-1}-\bD_{2,j-1}\|^2_{\mathrm{op}} }\\
&\leq\|\bD_{1,j-2}-\bD_{2,j-2}\|^2_{\mathrm{op}} \\
&\quad + \|\bU_{1,j-1}\|^2_{\mathrm{op}}\|\bU_{2,j-1}\|^2_{\mathrm{op}}\|\bV_{1,j-1}\|^2_{\mathrm{op}}\|\|\bV_{2,j-1}\|^2_{\mathrm{op}}\|\bC_{1,j-2}^{-1}-\bC_{2,j-2}^{-1}\|^2_{\mathrm{op}}\\
&\quad + (\|\bC_{1,p}^{-1}\|^2_{\mathrm{op}}+\|\bC_{2,p}^{-1}\|^2_{\mathrm{op}})(\|\bU_{1,j-1}-\bU_{2,j-1}\|^2_{\mathrm{op}}\|\bV_{1,j-1}\|^2_{\mathrm{op}}\\
& \qquad \qquad \qquad \qquad \qquad \qquad \qquad \qquad 
+\|\bV_{1,j-1}-\bV_{2,j-1}\|^2_{\mathrm{op}}\|\bU_{2,j-1}\|^2_{\mathrm{op}}).
\end{align*} 
Then $\|\bV_{1,k}-\bV_{2,k}\|^2_{\mathrm{F}}\lesssim \epsilon$ and $\|\bU_{1,k}-\bU_{2,k}\|^2_{\mathrm{F}}\lesssim \epsilon$ if 
$$\|\bOmega_1-\bOmega_2\|^2_{\mathrm{op}}\lesssim \ep, \; \|\bK_{1,j}-\bK_{2,j}\|^2_{\mathrm{op}}\lesssim \epsilon, \; \|\bL_{1,k}-\bL_{2,k}\|^2_{\mathrm{op}}\lesssim \epsilon.$$ The implicit constants of proportionality will depend on the bounds for the truth.

\end{proof}
%{\bf While bounding around the truth}, 

\begin{proof}[Proof of Lemma~\ref{lem:for_sieve}]
    In the sieve, we have 
    $$\|\bU_{1,j}-\bU_{2,j}\|^2_{\mathrm{F}}\leq u_T\{\|\bOmega_1-\bOmega_2\|^2_{\mathrm{op}}+\sum_{k=1}^{j}  \|\bL_{1,k}\bL_{1,k}^{\mathrm{T}}-\bL_{2,k}\bL_{2,k}^{\mathrm{T}}\|^2_{\mathrm{op}}\}$$ 
    for some polynomial $u_T$ in $T$. 
We can similarly show that 
$$\|\bV_{1,j}-\bV_{2,j}\|^2_{\mathrm{F}}\leq v_T\{\|\bD_{j-1,1}-\bD_{j-1,2}\|^2_{\mathrm{op}}+ \|\bK_{1,j}-\bK_{2,j}\|^2_{\mathrm{op}}\}$$ 
for some polynomial $v_T$ in $T$.

Then we have $\|\bV_{1,k}-\bV_{2,k}\|^2_{\mathrm{F}}\leq \epsilon,\|\bU_{1,k}-\bU_{2,k}\|^2_{\mathrm{F}}\leq \epsilon$ if 
\begin{align*} 
\max\{\|\bOmega_1-\bOmega_2\|^2_{\mathrm{op}},\|\bK_{1,j}-\bK_{2,j}\|^2_{\mathrm{op}}\} &\leq \min\big\{\frac{\epsilon}{u_T},\frac{\epsilon}{v'_T}\big\},\\
\|\bL_{1,k}\bL_{1,k}^{\mathrm{T}}-\bL_{2,k}\bL_{2,k}^{\mathrm{T}}\|^2_{\mathrm{op}} &\leq \min\big\{\frac{\epsilon}{pu_T},\frac{\epsilon}{pv'_T}\big\}. 
\end{align*}
The required condition will hold if 
$\|\bL_{1,k}-\bL_{2,k}\|^2_{\mathrm{op}}\leq o_T\epsilon$ for some polynomial $o_T\in\G_T$.

Finally, $\|\bD_{1,0}-\bD_{2,0}\|^2_{\mathrm{op}} \leq c_T\|\bOmega_1-\bOmega_2\|^2_{\mathrm{op}}$ for some $c_T$ which is polynomial in $T$ and the bound for $\|\bU_{1,j-1}-\bU_{2,j-1}\|^2_{\mathrm{op}}$ is established above. 
Thus 
$$\|\bV_{1,j}-\bV_{2,j}\|^2_{\mathrm{F}}\leq v'_T\{\|\bOmega_1-\bOmega_2\|^2_{\mathrm{op}}+ \|\bK_{1,j}-\bK_{2,j}\|^2_{\mathrm{op}}+\sum_{k=1}^{j-1}  \|\bL_{1,k}\bL_{1,k}^{\mathrm{T}}-\bL_{2,k}\bL_{2,k}^{\mathrm{T}}\|^2_{\mathrm{op}}\}$$  
for some $v'_T$ which is polynomial in $T$

If $a_{T,1},a_{T,2},b_{T},c_T$ are all polynomials in $T$ belonging to $\G_T$, we have $M_{P,1}=c_T$ and $M_{P,2}=e_T$ also polynomials in $T$. 

Setting, $\epsilon=\delta_T/[T^2c_T+(T-p+1)a_{T,1}e_T]$, we have $\|\bUpsilon_{1,T}-\bUpsilon_{2,T}\|_{\mathrm{F}}\leq \delta_T$, where 
$$ \max(\|\bV_{1,k}-\bV_{2,k}\|_{\mathrm{F}},\|\bU_{1,k}-\bU_{2,k}\|_{\mathrm{F}}, \|\bL_{1,k}\bL_{1,k}^{\mathrm{T}}-\bL_{2,k}\bL_{2,k}^{\mathrm{T}}\|_{\mathrm{F}}, \|\bUpsilon_{1}(0)-\bUpsilon_{2}(0)\|_{\mathrm{F}})\leq \epsilon.$$
Then, we reach the conclusion using \eqref{eq:mainbd}.
\end{proof}

\begin{proof}[Proof of Lemma~\ref{lem:bigsigma}]
We first compute a bound for VAR(1) case. Then use it for VAR(p).
    From \cite{lutkepohl2005new}, we get $$\bGamma(j)=\sum_{i=0}^{\infty}\bA^{j+i}\bC_{1}(\bA^{i})^{\mathrm{T}}=\bA^{j}\sum_{i=0}^{\infty}\bA^{i}\bC_{1}(\bA^{i})^{\mathrm{T}}$$ 
    for a VAR(1) process, where $\bA=\bGamma(1)\bOmega$. 

%$\bGamma(0)\succ\bA\bGamma(0)\bA^{\mathrm{T}}$ \\

Using the telescopic sum setting $\bGamma(0)-\bA\bGamma(0)\bA=\bC_1$, we have $$\sum_{i=0}^{\infty}\bA^{i}\bC_{1}(\bA^{i})^{\mathrm{T}}=\bGamma(0)-\lim_{k\rightarrow\infty} \bA^k\bGamma(0)(\bA^k)^{\mathrm{T}}= \bGamma(0)$$ under our parametrization. Hence, $\bGamma(j)=\bA^{j}\bGamma(0)$. 

As $\bGamma(1)=\bU_1\bV_1^{\mathrm{T}}$ and $\bC_1=\bOmega^{-1}-\bU_1\bU_1^{\mathrm{T}}$, we have $\bV_1\bOmega\bV_1^{\mathrm{T}}=\bI$, and hence $\bOmega^{-1}\geq \bU_1\bU_1^{\mathrm{T}}$. Alternatively, $\lambda_{\min}(\bOmega^{-1}-\bU_1\bU_1^{\mathrm{T}})\geq 0$.

Since, $\bV_1^{\mathrm{T}}\bOmega\bV_1=\bI$, we have $\|\bV_1^{\mathrm{T}}\bOmega^{1/2}\|^2_{\mathrm{op}}= 1$. Hence $\|\bV_1\|^2_{\mathrm{op}}\leq \|\bOmega^{-1/2}\|^2_{\mathrm{op}}$. 

Note that for any two matrices, 
$\bA^k - \bB^k = \bA^{k-1}(\bA-\bB)+(\bA^{k-1}-\bB^{k-1})\bB$. Recursively using this relation and applying norm, we obtain 
$$\|\bA^k - \bB^k\|^2_{\mathrm{F}}\leq \|\bA-\bB\|^2_{\mathrm{F}}(\sum_{i=1}^{k}\|\bA\|_{\mathrm{op}}^{2(k-i)}\|\bB\|_{\mathrm{op}}^{2(i-1)}),$$ 
using the triangle inequality and the submultiplicative property of operator norm.

After some simplification using GP-series sum results, 
\begin{align*}
\sum_{i=1}^{k}\|\bA\|_{\mathrm{op}}^{2(k-i)}\|\bB\|_{\mathrm{op}}^{2(i-1)} &=\frac{\|\bA\|_{\mathrm{op}}^{2k}-\|\bB\|_{\mathrm{op}}^{2k}}{\|\bA\|^2_{\mathrm{op}}-\|\bB\|^2_{\mathrm{op}}}\\
&=\frac{\{\|\bB\|^2_{\mathrm{op}}+(\|\bA\|^2_{\mathrm{op}}-\|\bB\|^2_{\mathrm{op}})\}^{k}-\|\bB\|_{\mathrm{op}}^{2k}}{\|\bA\|^2_{\mathrm{op}}-\|\bB\|^2_{\mathrm{op}}}\\
&=\sum_{i=0}^{k-1} {k\choose i} \|\bB\|^{i}_{\mathrm{op}}(\|\bA\|^2_{\mathrm{op}}-\|\bB\|^2_{\mathrm{op}})^{k-1-i}.
\end{align*}
Hence, 
\begin{align*} 
\|\bA^k - \bB^k\|^2_{\mathrm{F}} &\leq \|\bA-\bB\|^2_{\mathrm{F}}\left\{\sum_{i=0}^{k-1} {k\choose i} \|\bB\|^{i}_{\mathrm{op}}(\|\bA\|^2_{\mathrm{op}}-\|\bB\|^2_{\mathrm{op}})^{k-1-i}\right\} \\
&\leq k\|\bA-\bB\|^2_{\mathrm{F}} \min\{ \|\bA\|_{\mathrm{op}}^{2(k-1)}, |\|\bA\|^2_{\mathrm{op}}-\|\bB\|^2_{\mathrm{op}}|^{k-1} \}.
\end{align*}
Hence 
%We have $\|(\bTheta_1-\bTheta_2)\bUpsilon_2(0)\|^2_{\mathrm{F}}\leq \|\bUpsilon_{1,T}-\bUpsilon_{2,T}\|^2_{\mathrm{F}} + \|\bTheta_1\|^2_{\mathrm{op}}\|\bUpsilon_{1,T}-\bUpsilon_{2,T}\|^2_{\mathrm{F}}$\\

%$\|\bM_1-\bM_2\|^2_{\mathrm{F}}\leq \|\bUpsilon_{1,T}-\bUpsilon_{2,T}\|^2_{\mathrm{F}} + \|\bTheta_1\bUpsilon_1(0)\bTheta_1^{\mathrm{T}}-\bTheta_2\bUpsilon_2(0)\bTheta_2^{\mathrm{T}}\|^2_{\mathrm{F}}\leq \|\bUpsilon_{1,T}-\bUpsilon_{2,T}\|^2_{\mathrm{F}} + \|\bTheta_1\bUpsilon_1(0)-\bTheta_2\bUpsilon_2(0)\|^2_{\mathrm{F}}\|\bTheta_1^{\mathrm{T}}\|^2_{\mathrm{op}} + \|\bTheta_1-\bTheta_2\|^2_{\mathrm{F}}\|\bUpsilon_2(0)\bTheta_2^{\mathrm{T}}\|^2_{\mathrm{op}}$.\\

%$\|\bTheta_1-\bTheta_2\|^2_{\mathrm{F}},\quad \|\bM_1-\bM_2\|^2_{\mathrm{F}}\lesssim \|\bUpsilon_{1,T}-\bUpsilon_{2,T}\|^2_{\mathrm{F}}$ (?)\\
\begin{align*}
\|\bUpsilon_{1,T}-\bUpsilon_{2,T}\|^2_{\mathrm{F}} &\leq T\|\bUpsilon_1(0)-\bUpsilon_2(0)\|_{\mathrm{F}} + \sum_{k=1}^{T-1}(T-k)\|\bUpsilon_1(k)-\bUpsilon_{2}(k)\|^2_{\mathrm{F}};\\
\|\bUpsilon_1(k)-\bUpsilon_{2}(k)\|^2_{\mathrm{F}} &\leq \|\bUpsilon_1(0)\|^2_{\mathrm{op}}\|\bA_1^k-\bA_2^k\|_{\mathrm{F}} + \|\bA_2\|^{k}_{\mathrm{op}}\|\bUpsilon_1(0)-\bUpsilon_2(0)\|^2_{\mathrm{F}}; \\
\|\bA_1-\bA_2\|_{\mathrm{F}} &\leq \|\bUpsilon_1(1)\|^2_{\mathrm{op}}\|\bOmega_1-\bOmega_2\|^2_{\mathrm{F}}+\|\bOmega_2\|^2_{\mathrm{op}}\|\bUpsilon_1(1)-\bUpsilon_2(1)\|^2_{\mathrm{F}}\\
&\leq \|\bU_1\|^2_{\mathrm{op}}\|\bV_1\|^2_{\mathrm{op}}\|\bOmega_1-\bOmega_2\|^2_{\mathrm{F}}\\
&\quad+\|\bOmega_2\|^2_{\mathrm{op}}\{\|\bU_{1,1}\|^2_{\mathrm{op}}\|\bV_{1,1}-\bV_{1,2}\|^2_{\mathrm{F}}+\|\bV_{1,2}\|^2_{\mathrm{op}}\|\bU_{1,1}-\bU_{1,2}\|^2_{\mathrm{F}}\},
\end{align*}
and 
\begin{align} 
\lefteqn{\|\bUpsilon_{1,T}-\bUpsilon_{2,T}\|^2_{\mathrm{F}}}\nonumber \\ &\leq T\|\bUpsilon_1(0)-\bUpsilon_2(0)\|^2_{\mathrm{F}} \nonumber\\
 &\quad + \sum_{k=1}^{T-1}(T-k)\{\|\bUpsilon_1(0)\|^2_{\mathrm{op}}\|\bA_1^k-\bA_2^k\|^2_{\mathrm{F}} + \|\bA_2\|^{k}_{\mathrm{op}}\|\bUpsilon_1(0)-\bUpsilon_2(0)\|^2_{\mathrm{F}}\} \nonumber\\
 &\leq T\|\bUpsilon_1(0)-\bUpsilon_2(0)\|_{\mathrm{F}} + \sum_{k=1}^{T-1}(T-k)\{\|\bUpsilon_1(0)\|^2_{\mathrm{op}}k\|\bA_1-\bA_2\|^2_{\mathrm{F}} a^{k-1} \nonumber\\
 &\quad+ \|\bA_2\|^{2k}_{\mathrm{op}}\|\bUpsilon_1(0)-\bUpsilon_2(0)\|^2_{\mathrm{F}}\} \nonumber\\
 &\leq (T\frac{b-b^{\mathrm{T}}}{1-b} -\frac{b\{(T-1)b^{\mathrm{T}}-Tb^{T-1}+1\}}{(1-b)^2} + T)\|\bUpsilon_1(0)-\bUpsilon_2(0)\|^2_{\mathrm{F}} \nonumber\\
 &\qquad + T\|\bUpsilon_1(0)\|^2_{\mathrm{op}}\|\bA_1-\bA_2\|^2_{\mathrm{F}}\sum_{k=1}^{T-1} ka^{k-1}\nonumber\\
 &\leq (\frac{Tb-Tb^2+b^{T+1}-b}{(1-b)^2} + T)\|\bUpsilon_1(0)-\bUpsilon_2(0)\|^2_{\mathrm{F}} \nonumber\\
 &\quad+ T\|\bUpsilon_1(0)\|^2_{\mathrm{op}}\|\bA_1-\bA_2\|^2_{\mathrm{F}}\frac{1+(T-2)a^{T-1}-(T-1)a^{T-2}}{(1-a)^2}
 \label{eq:bound}%\\
 %&\quad\leq (\frac{Tb-Tb^2+b^{T+1}-b}{(1-b)^2} + T)\|\bOmega_1-\bOmega_2\|^2_{\mathrm{F}}\|\bUpsilon_1(0)\|^2_{\mathrm{op}}\|\bUpsilon_2(0)\|^2_{\mathrm{op}} \nonumber\\&\qquad+ T\|\bUpsilon_1(0)\|^2_{\mathrm{op}}\big[\|\bU_1\|^2_{\mathrm{op}}\|\bV_1\|^2_{\mathrm{op}}\|\bOmega_1-\bOmega_2\|^2_{\mathrm{F}}+\|\bOmega_2\|^2_{\mathrm{op}}\{\|\bU_{1,1}\|^2_{\mathrm{op}}\|\bV_{1,1}-\bV_{1,2}\|^2_{\mathrm{F}}\nonumber\\&\qquad+\|\bV_{1,2}\|^2_{\mathrm{op}}\|\bU_{1,1}-\bU_{1,2}\|^2_{\mathrm{F}}\}\big]\frac{1+(T-2)a^{T-1}-(T-1)a^{T-2}}{(1-a)^2},\nonumber
\end{align}
where $a = \|\bA_1\|^2_{\mathrm{op}}$ and $b=\|\bA_2\|^2_{\mathrm{op}}$. Since, $a<1$, we have $$\frac{1+(T-2)a^{T-1}-(T-1)a^{T-2}}{(1-a)^2}<\frac{1}{(1-a)^2}.$$ This comes in terms of the model parameters.

It is easy to verify that the functions $(1-a)^{-2}\{1+(T-2)a^{T-1}-(T-1)a^{T-2}\}$ and $(1-b)^{-2}\{Tb-Tb^2+b^{T+1}-b\}$ are increasing in $a$ and, $b$ respectively. Following are the limits,
$$
\lim_{a\rightarrow 1} \frac{1+(T-2)a^{T-1}-(T-1)a^{T-2}}{(1-a)^2} = {T-1\choose 2}.
$$

$$
\lim_{b\rightarrow 1} \frac{Tb-Tb^2+b^{T+1}-b}{(1-b)^2} + T = {T \choose 2}.
$$
%For either of the distances, following inequality can be used.
%\|\bUpsilon_{1,T}-\bUpsilon_{2,T}\|^2_{\mathrm{op}}&\leq 
% \begin{align*}
%     \|\bUpsilon_{1,T}-\bUpsilon_{2,T}\|^2_{\mathrm{F}}&\leq \left(\frac{Tb-Tb^2+b^{T+1}-b}{(1-b)^2} + T\right)\|\bOmega_1-\bOmega_2\|^2_{\mathrm{F}}\|\bUpsilon_1(0)\|^2_{\mathrm{op}}\|\bUpsilon_2(0)\|^2_{\mathrm{op}} \\&\quad+ T\|\bUpsilon_1(0)\|^2_{\mathrm{op}}\big[\|\bU_1\|^2_{\mathrm{op}}\|\bV_1\|^2_{\mathrm{op}}\|\bOmega_1-\bOmega_2\|^2_{\mathrm{F}}+\|\bOmega_2\|^2_{\mathrm{op}}\{\|\bU_{1,1}\|^2_{\mathrm{op}}\|\bV_{1,1}-\bV_{1,2}\|^2_{\mathrm{F}}\\&\quad+\|\bV_{1,2}\|^2_{\mathrm{op}}\|\bU_{1,1}-\bU_{1,2}\|^2_{\mathrm{F}}\}\big]\frac{1+(T-2)a^{T-1}-(T-1)a^{T-2}}{(1-a)^2}\\
%     &\leq {T\choose 2}\sqrt{d}\|\bOmega_1-\bOmega_2\|^2_{\mathrm{op}}\|\bUpsilon_1(0)\|^2_{\mathrm{op}}\|\bUpsilon_2(0)\|^2_{\mathrm{op}} \\&\quad+ T\|\bUpsilon_1(0)\|^2_{\mathrm{op}}\big[\|\bU_1\|^2_{\mathrm{op}}\|\bV_1\|^2_{\mathrm{op}}\sqrt{d}\|\bOmega_1-\bOmega_2\|^2_{\mathrm{op}}+\|\bOmega_2\|^2_{\mathrm{op}}\{\|\bU_{1,1}\|^2_{\mathrm{op}}\sqrt{r}\|\bV_{1,1}-\bV_{1,2}\|^2_{\mathrm{op}}\\&\quad+\sqrt{r}\|\bV_{1,2}\|^2_{\mathrm{op}}\|\bU_{1,1}-\bU_{1,2}\|^2_{\mathrm{op}}\}\big]{T-1\choose 2}
% \end{align*}
%(\b1e_1^{\mathrm{T}}\bUpsilon_{p}\b1e_1)^{-1}(\b1e_1^{\mathrm{T}}\bUpsilon_{p}\b1e_2)$, where $\b1e_1=(\bI_{pd},\bzero)$ and $\b1e_2=(\bzero,\bI_d)$
Due to a VAR(1) representation of VAR$(p)$ \citep{ghosh2019high}, Equation \eqref{eq:bound} and $\bA_{\ell}=\bkappa_{\ell,p}\bUpsilon_{\ell,p-1}^{-1}$ for $\ell=1,2$, we have 
\begin{align*}
    \lefteqn{\|\bUpsilon_{1,T}-\bUpsilon_{2,T}\|^2_{\mathrm{F}}}\\
    &\leq T^2\|\bUpsilon_{1,p-1}-\bUpsilon_{2,p-1}\|^2_{\mathrm{F}}+T^3\|\bUpsilon_{1,p-1}\|^2_{\mathrm{op}}\|\bkappa_{1,p}\bUpsilon_{1,p-1}^{-1}-\bkappa_{2,p}\bUpsilon_{2,p-1}^{-1}\|^2_{\mathrm{F}}\\&\leq T^2\|\bUpsilon_{1,p-1}-\bUpsilon_{2,p-1}\|^2_{\mathrm{F}}+T^3\|\bUpsilon_{1}(0)\|^2_{\mathrm{op}}\|\bkappa_{1,p}\bUpsilon_{1,p-1}^{-1}-\bkappa_{2,p}\bUpsilon_{2,p-1}^{-1}\|^2_{\mathrm{F}},
\end{align*}
completing the proof. 
\end{proof}

\begin{proof}[Proof of Lemma~\ref{lem:rec}]
   From the relation $\bD_j=\bD_{j-1}-\bW_{j}^{\mathrm{T}}\bC_{j-1}^{-1}\bW_{j}$ with $\bW_j=\bU_j\bV_j$.\\ %=\bGamma(0)-\sum_{\ell=0}^{j-1}\bW_{\ell+1}^{\mathrm{T}}\bC_{\ell}^{-1}\bW_{\ell+1}$, 
\begin{align*} 
\|\bD_{1,j}-\bD_{2,j}\|^2_{\mathrm{F}} &\leq\|\bD_{1,j-1}-\bD_{2,j-1}\|^2_{\mathrm{F}} + \|\bW_{1,j}\|^2_{\mathrm{op}}\|\bW_{2,j}\|^2_{\mathrm{op}}\|\bC_{1,j-1}^{-1}-\bC_{2,j-1}^{-1}\|^2_{\mathrm{F}} \\
&+ (\|\bC_{1,j-1}^{-1}\|^2_{\mathrm{op}}+\|\bC_{2,j-1}^{-1}\|^2_{\mathrm{op}})\|\bW_{1,j}-\bW_{2,j}\|^2_{\mathrm{F}}.
\end{align*}
We have the bounds 
$$\|\bW_{1,j}-\bW_{2,j}\|^2_{\mathrm{F}}\leq \|\bU_{1,j}-\bU_{2,j}\|^2_{\mathrm{F}}\|\bV_{1,j}\|^2_{\mathrm{op}}+\|\bV_{1,j}-\bV_{2,j}\|^2_{\mathrm{F}}\|\bU_{2,j}\|^2_{\mathrm{op}},$$ 
$\|\bC_{1,j-1}^{-1}\|^2_{\mathrm{op}}\leq 
\|\bC_{1,p}^{-1}\|^2_{\mathrm{op}}$, and 
$\|\bC_{2,j-1}^{-1}\|^2_{\mathrm{op}}\leq \|\bC_{2,p}^{-1}\|^2_{\mathrm{op}}$.
%By mathematical induction, the difference can be bounded by the parameters.
Using the quadratic factorization $\bG\bH\bG^T$ for $\bUpsilon^{-1}$ given in the proof of Lemma 2, we have the following,
\begin{align*} 
\|\bUpsilon^{-1}_{1,j}-\bUpsilon^{-1}_{2,j}\|^2_{\mathrm{F}} &\leq \|\bG_1-\bG_2\|_{\mathrm{F}}+(\|\bG_1\|^2_{\mathrm{op}}+\|\bG_2\|^2_{\mathrm{op}})\|\bH_1-\bH_2\|^2_{\mathrm{F}}\\
&\leq \|\bGamma^{-1}_{1,j-1}-\bGamma^{-1}_{2,j-1}\|^2_{\mathrm{F}}+\|\bD^{-1}_{1,j}-\bD^{-1}_{2,j}\|^2_{\mathrm{F}}\\
&\quad +(\|\bC_{1,p}^{-1}\|^2_{\mathrm{op}}+\|\bC_{2,p}^{-1}\|^2_{\mathrm{op}})\|\bUpsilon_{1,j-1}^{-1}\bkappa_{1,j}^{\mathrm{T}}-\bUpsilon_{2,j-1}^{-1}\bkappa_{2,j}^{\mathrm{T}}\|^2_{\mathrm{F}}.
\end{align*}
Now, 
\begin{align*} 
\|\bD^{-1}_{1,j}-\bD^{-1}_{2,j}\|^2_{\mathrm{F}} &\leq \|\bD^{-1}_{1,j}\|^2_{\mathrm{op}}\|\bD^{-1}_{2,j}\|^2_{\mathrm{op}}\|\bD_{1,j}-\bD_{2,j}\|^2_{\mathrm{F}}\\
&\leq \|\bC_{1,p}^{-1}\|^2_{\mathrm{op}}\|\bC_{2,p}^{-1}\|^2_{\mathrm{op}}\|\bD_{1,j}-\bD_{2,j}\|^2_{\mathrm{F}}
\end{align*} 
since $\|\bUpsilon_{1,j-1}^{-1}\|^2_{\mathrm{op}}\leq\|\bUpsilon_{1,p}^{-1}\|^2_{\mathrm{op}}, \|\bUpsilon_{2,j-1}^{-1}\|^2_{\mathrm{op}}\leq\|\bC_{2,p}^{-1}\|^2_{\mathrm{op}}$.
Let 
\begin{align*} 
\bm {P} ={\begin{bmatrix}\bm{A}_1 &\bm{A}_2 \\\bm{A}_3 &\bm{A}_4 \end{bmatrix}}^{-1} ={\begin{bmatrix}\bm{B}_1 &\bm{B}_2 \\\bm{B}_3 &\bm{B}_4, \end{bmatrix}}
\end{align*} 
where 
\begin{align*} 
\bB_1 &=(\bm{A}_1 -\bm{A}_2\bm{A}_4^{-1}\bm{A}_3)^{-1},\\
\bB_2&=-(\bA_1 -\bA_2\bA_4^{-1}\bA_3)^{-1}\bA_2\bA_4^{-1},\\ 
\bB_3&=-\bA_4^{-1}\bA_3(\bA_1 -\bA_2\bA_4^{-1}\bA_3)^{-1},\\ \bB_4&=\bA_4^{-1}+\bA_4^{-1}\bA_3 (\bA_1 -\bA_2\bA_4^{-1}\bA_3)^{-1}\bA_2\bA_4^{-1}.
\end{align*}
Setting $\bP=\bUpsilon_{j-1}, \bA_1=\bGamma(0), \bA_2=\bxi_{j-1}^{\mathrm{T}},\bA_3=\bxi_{j-1},\bA_4=\bUpsilon_{j-2}$, we have 
$$\bUpsilon_{j-1}^{-1}\bkappa_{j}^{\mathrm{T}}=\{\bC_{j-1}^{-1}\bU_j\bV_j, \bUpsilon_{j-2}^{-1}\bkappa_{j-1}^{\mathrm{T}}-\bUpsilon_{j-2}^{-1}\bxi_{j-1}^{\mathrm{T}}\bC_{j-1}^{-1}\bU_j\bV_j\},$$ since 
$\bGamma(j) = \bU_j\bV_j^{\mathrm{T}} + \bxi_{j-1}\bGamma^{-1}_{j-2}\bkappa_{j-1}$. 

Now, $\|\bUpsilon_{1,j-2}^{-1}\bkappa_{1,j-1}^{\mathrm{T}}\|^2_{\mathrm{op}},\|\bUpsilon_{2,j-2}^{-1}\bkappa_{2,j-1}^{\mathrm{T}}\|^2_{\mathrm{op}}\leq 1$, and hence the relation $$\|\bUpsilon_{1,j-1}^{-1}\bxi_{1,j}^{\mathrm{T}}-\bUpsilon_{2,j-1}^{-1}\bxi_{2,j}^{\mathrm{T}}\|^2_{\mathrm{F}}=\|\bUpsilon_{1,j-1}^{-1}\bkappa_{1,j}^{\mathrm{T}}-\bUpsilon_{2,j-1}^{-1}\bkappa_{2,j}^{\mathrm{T}}\|^2_{\mathrm{F}}$$ holds. 
%We have $\|\bUpsilon_{1,j-1}^{-1}\bkappa_{1,j}^{\mathrm{T}}-\bUpsilon_{2,j-1}^{-1}\bkappa_{2,j}^{\mathrm{T}}\|^2_{\mathrm{F}}\leq \|\bUpsilon_{1,j-1}^{-1}-\bUpsilon_{2,j-1}^{-1}\|^2_{\mathrm{F}}\|\bkappa_{2,j}\|^2_{\mathrm{op}}+\|\bUpsilon_{1,j-1}^{-1}\|^2_{\mathrm{op}}\|\bkappa_{1,j}-\bkappa_{2,j}\|^2_{\mathrm{F}}$.\\
\end{proof}

\begin{proof}[Proof of Lemma~\ref{lem:finalbd}]
We have the relations 
\begin{align*}
    \|\bUpsilon_{1,j}-\bUpsilon_{2,j}\|^2_{\mathrm{F}}\nonumber &\leq \|\bUpsilon_{1,j-1}-\bUpsilon_{2,j-1}\|^2_{\mathrm{F}}+\|\bD_{1,j}-\bD_{2,j}\|^2_{\mathrm{F}}\nonumber \\
    &\quad  +(\|\bUpsilon_1(0)\|^2_{\mathrm{op}}+\|\bUpsilon_2(0)\|^2_{\mathrm{op}})\|\bUpsilon_{1,j-1}^{-1}\bkappa_{1,j}^{\mathrm{T}}-\bUpsilon_{2,j-1}^{-1}\bkappa_{2,j}^{\mathrm{T}}\|^2_{\mathrm{F}},\\
    \|\bD_{1,j}-\bD_{2,j}\|^2_{\mathrm{F}}
    &\leq\|\bD_{1,j-1}-\bD_{2,j-1}\|^2_{\mathrm{F}} \nonumber \\
& \quad    + \|\bU_{1,j}\bV_{1,j}\|^2_{\mathrm{op}}\|\bU_{2,j}\bV_{2,j}\|^2_{\mathrm{op}}\|\bC_{1,j-1}^{-1}-\bC_{2,j-1}^{-1}\|^2_{\mathrm{F}} 
\nonumber \\
& \quad 
+ (\|\bC_{1,j-1}^{-1}\|^2_{\mathrm{op}}+\|\bC_{2,j-1}^{-1}\|^2_{\mathrm{op}})\\
&\qquad \times (\|\bU_{1,j}-\bU_{2,j}\|^2_{\mathrm{F}}\|\bV_{1,j}\|^2_{\mathrm{op}}+\|\bV_{1,j}-\bV_{2,j}\|^2_{\mathrm{F}}\|\bU_{2,j}\|^2_{\mathrm{op}})\nonumber\\
    &\le \|\bUpsilon_1(0)-\bUpsilon_2(0)\|^2_{\mathrm{F}}\\
    &\quad+ \sum_{k=1}^{j}\big[\|\bU_{1,k}\bV_{1,k}\|^2_{\mathrm{op}}\|\bU_{2,k}\bV_{2,k}\|^2_{\mathrm{op}}\|\bC_{1,k-1}^{-1}-\bC_{2,k-1}^{-1}\|^2_{\mathrm{F}} \\
    &\quad + (\|\bC_{1,k-1}^{-1}\|^2_{\mathrm{op}}+\|\bC_{2,k-1}^{-1}\|^2_{\mathrm{op}})\\
    &\qquad \times (\|\bU_{1,k}-\bU_{2,k}\|^2_{\mathrm{F}}\|\bV_{1,k}\|^2_{\mathrm{op}}+\|\bV_{1,k}-\bV_{2,k}\|^2_{\mathrm{F}}\|\bU_{2,k}\|^2_{\mathrm{op}})\big],
    \end{align*} 
    \begin{align*} 
\|\bUpsilon_{1,j-1}^{-1}\bkappa_{1,j}^{\mathrm{T}}-\bUpsilon_{2,j-1}^{-1}\bkappa_{2,j}^{\mathrm{T}}\|^2_{\mathrm{F}} & \leq \bB_1(1+\|\bC_{1,p}^{-1}\|^2_{\mathrm{op}}\max_j\|\bU_{1,j}\|^2_{\mathrm{op}}\max_j\|\bV_{1,j}\|^2_{\mathrm{op}})^j+\cdots 
\end{align*} 
and hence 
\begin{align*}
    \sum_{j=0}^{p-1}\|\bD_{1,j}-\bD_{2,j}\|^2_{\mathrm{F}} 
    &\leq p\|\bUpsilon_1(0)-\bUpsilon_2(0)\|^2_{\mathrm{F}}\\
    &\quad + \sum_{j=0}^{p-1}\sum_{k=1}^{j}\big[\|\bU_{1,k}\bV_{1,k}\|^2_{\mathrm{op}}\|\bU_{2,k}\bV_{2,k}\|^2_{\mathrm{op}}\|\bC_{1,k-1}^{-1}-\bC_{2,k-1}^{-1}\|^2_{\mathrm{F}} \\ 
    &\qquad + (\|\bC_{1,k-1}^{-1}\|^2_{\mathrm{op}}+\|\bC_{2,k-1}^{-1}\|^2_{\mathrm{op}})(\|\bU_{1,k}-\bU_{2,k}\|^2_{\mathrm{F}}\|\bV_{1,k}\|^2_{\mathrm{op}}\\
    &\qquad +\|\bV_{1,k}-\bV_{2,k}\|^2_{\mathrm{F}}\|\bU_{2,k}\|^2_{\mathrm{op}})\big]\nonumber\\
    &\leq p\|\bUpsilon_1(0)-\bUpsilon_2(0)\|^2_{\mathrm{F}} \\
    & \quad + \sum_{j=1}^{p-1}\sum_{k=1}^{j}\big[M_{1,U}M_{1,V}M_{2,U}M_{2,V}\|\bC_{1,k-1}^{-1}-\bC_{2,k-1}^{-1}\|^2_{\mathrm{F}} \\
    &\qquad + (\|\bC_{1,p}^{-1}\|^2_{\mathrm{op}}+\|\bC_{2,p}^{-1}\|^2_{\mathrm{op}})\\
    &\qquad \qquad \times (\|\bU_{1,k}-\bU_{2,k}\|^2_{\mathrm{F}}M_{1,V}+\|\bV_{1,k}-\bV_{2,k}\|^2_{\mathrm{F}}M_{2,U})\big]\nonumber\\
     &\leq p\|\bUpsilon_1(0)-\bUpsilon_2(0)\|^2_{\mathrm{F}}\\
     &\quad+ p\sum_{k=1}^{p}\big[M_{1,U}M_{1,V}M_{2,U}M_{2,V}\|\bC_{1,k-1}^{-1}-\bC_{2,k-1}^{-1}\|^2_{\mathrm{F}}\\
     &\qquad+ (\|\bC_{1,p}^{-1}\|^2_{\mathrm{op}}+\|\bC_{2,p}^{-1}\|^2_{\mathrm{op}})\\
     &\qquad \qquad \times (\|\bU_{1,k}-\bU_{2,k}\|^2_{\mathrm{F}}M_{1,V}+\|\bV_{1,k}-\bV_{2,k}\|^2_{\mathrm{F}}M_{2,U})\big],
\end{align*}
completing the proof. 
\end{proof}

\begin{proof}[Proof of Lemma~\ref{lem:finalbdpre}]
We have 
\begin{align*}
    \lefteqn{\|\bUpsilon_{1,j-1}^{-1}\bkappa_{1,j}^{\mathrm{T}}-\bUpsilon_{2,j-1}^{-1}\bkappa_{2,j}^{\mathrm{T}}\|^2_{\mathrm{F}}}\nonumber\\
    &\leq\|\bUpsilon_{1,j-2}^{-1}\bkappa_{1,j-1}^{\mathrm{T}}-\bUpsilon_{2,j-2}^{-1}\bkappa_{2,j-1}^{\mathrm{T}}\|^2_{\mathrm{F}}(1+\|\bC_{1,j}^{-1}\|^2_{\mathrm{op}}\|\bU_{1,j}\|^2_{\mathrm{op}}\|\bV_{1,j}\|^2_{\mathrm{op}})\\
    &\quad+2\|\bC_{1,j}^{-1}\bU_{1,j}\bV_{1,j}-\bC_{2,j}^{-1}\bU_{2,j}\bV_{2,j}\|^2_{\mathrm{F}}\\
    &\leq \|\bUpsilon_{1,j-2}^{-1}\bkappa_{1,j-1}^{\mathrm{T}}-\bUpsilon_{2,j-2}^{-1}\bkappa_{2,j-1}^{\mathrm{T}}\|^2_{\mathrm{F}}(1+\|\bC_{1,j}^{-1}\|^2_{\mathrm{op}}\|\bU_{1,j}\|^2_{\mathrm{op}}\|\bV_{1,j}\|^2_{\mathrm{op}})\\
    &\quad+2\|\bC^{-1}_{1,j}-\bC^{-1}_{2,j}\|^2_{\mathrm{F}}\|\bU_{1,j}\|^2_{\mathrm{op}}\|\bV_{1,j}\|^2_{\mathrm{op}}\\
    &\quad+
    2\|\bC_{2,j}^{-1}\|^2_{\mathrm{op}}\bigg[\|\bU_{1,j}-\bU_{2,j}\|^2_{\mathrm{F}}\|\bV_{2,j}\|^2_{\mathrm{op}}+\|\bV_{1,j}-\bV_{2,j}\|^2_{\mathrm{F}}\|\bU_{1,j}\|^2_{\mathrm{op}}\bigg]\\
    &\leq \|\bUpsilon_{1,j-2}^{-1}\bkappa_{1,j-1}^{\mathrm{T}}-\bUpsilon_{2,j-2}^{-1}\bkappa_{2,j-1}^{\mathrm{T}}\|^2_{\mathrm{F}}(1+\|\bC_{1,p}^{-1}\|^2_{\mathrm{op}}\|M_{1,U}M_{1,V})\\
    &\quad+2\|\bC^{-1}_{1,j}-\bC^{-1}_{2,j}\|^2_{\mathrm{F}}M_{1,U}M_{1,V}\\&\qquad+
    2\|\bC_{2,p}^{-1}\|^2_{\mathrm{op}}\big[\|\bU_{1,j}-\bU_{2,j}\|^2_{\mathrm{F}}M_{2,V}+\|\bV_{1,j}-\bV_{2,j}\|^2_{\mathrm{F}}M_{1,U}\big],
\end{align*}
where $M_{\ell,U}=\max_j\|\bU_{\ell,j}\|^2_{\mathrm{op}}$, and $M_{\ell,V}=\max_j\|\bV_{\ell,j}\|^2_{\mathrm{op}}$. 

Let $M_{\ell,C,U,V}=(1+\|\bC_{\ell,p}^{-1}\|^2_{\mathrm{op}}\|M_{\ell,U}M_{\ell,V})^p$, then 
\begin{align*}
    \|\bUpsilon_{1,p-1}^{-1}\bkappa_{1,p}^{\mathrm{T}}-\bUpsilon_{2,p-1}^{-1}\bkappa_{2,p}^{\mathrm{T}}\|^2_{\mathrm{F}} &\leq 2M_{1,U}M_{1,V}M_{1,C,U,V}\sum_{k=0}^p\|\bC^{-1}_{1,k}-\bC^{-1}_{2,k}\|^2_{\mathrm{F}} \\
    &\quad + 2\|\bC_{2,p}^{-1}\|^2_{\mathrm{op}}M_{2,V}M_{1,V}M_{1,C,U,V}\sum_{k=1}^p\|\bU_{1,k}-\bU_{2,k}\|^2_{\mathrm{F}}\nonumber\\
    &\quad+2\|\bC_{2,p}^{-1}\|^2_{\mathrm{op}}\|^2_{\mathrm{F}}M_{2,U}M_{1,V}M_{1,C,U,V}\sum_{k=1}^p\|\bV_{1,k}-\bV_{2,k}\|^2_{\mathrm{F}}.
\end{align*}
Combining \eqref{eq:ineq2} and \eqref{eq:ineq3}, we get an upper bound for \eqref{eq:ineq1}.
\end{proof}

\begin{proof}[Proof of Lemma~\ref{lem:UjVj}]
   We have
$$\bU_j = \bC_{j-1}\bL_j(\bI + \bL^{\mathrm{T}}_j \bC_{j-1}\bL_j)^{-1/2}, \quad  
\bV_j=\bK_j(\bK_j^{\mathrm{T}}\bD_{j-1}^{-1}\bK_j)^{-1/2},$$
and the bounds 
$\|\bU_{\ell,j}\|^2_{\mathrm{op}}\leq \|\bC_{j-1}\|^2_{\mathrm{op}}\|\bL_j\|^2_{\mathrm{op}}$ and $\|\bV_{\ell,j}\|^2_{\mathrm{op}}\leq \|\bD_{j-1}^{-1}\|^2_{\mathrm{op}}$. 
%https://mathoverflow.net/questions/337312/2-norm-distance-between-square-roots-of-matrices

Using the inequality $\|\bA^{1/2}-\bB^{1/2}\|^2_{\mathrm{op}}\leq\|\bA-\bB\|^{1/2}_{\mathrm{op}}$, we can bound 
\begin{align*}
   \|\bV_{j,1}-\bV_{j,2}\|^2_{\mathrm{F}} &\leq r\|\bV_{j,1}-\bV_{j,2}\|^2_{\mathrm{op}}\\
   &\leq \|\bK_{1,1}\|^2_{\mathrm{op}}\|(\bK_{j,1}^{\mathrm{T}}\bD_{j-1,1}^{-1}\bK_{1,1})^{-1}-(\bK_{j,2}^{\mathrm{T}}\bD_{j-2,2}^{-1}\bK_{1,2})^{-1}\|_{\mathrm{op}}\\
   &\quad + r\|(\bK_{j,2}^{\mathrm{T}}\bD_{0,2}^{-1}\bK_{j,2})^{-1/2}\|^2_{\mathrm{op}}\|\bK_{j,1}-\bK_{j,2}\|^2_{\mathrm{op}} .
\end{align*}
Similarly, 
\begin{align*}
    \|\bU_{j,1}-\bU_{j,2}\|^2_{\mathrm{op}}&\leq \|\bC_{j-1,1}\|^2_{\mathrm{op}}\|\bL_{j,1}\|^2_{\mathrm{op}}\|(\bI + \bL^{\mathrm{T}}_{j,1} \bC_{j-1,1}\bL_{1,1})^{-1}\|_{\mathrm{op}}\\
    &\qquad \times \|(\bI + \bL^{\mathrm{T}}_{j,2} \bC_{j-1,2}\bL_{j,2})^{-1}\|_{\mathrm{op}}\|\bL^{\mathrm{T}}_{j,1} \bC_{j-1,1}\bL_{j,1}
    -\bL^{\mathrm{T}}_{2,j} \bC_{j-1,2}\bL_{j,2}\|_{\mathrm{op}}\\
    &\quad +\|(\bI + \bL^{\mathrm{T}}_{j,2} \bC_{j-1,2}bL_{j,2})^{-1/2}\|^2_{\mathrm{op}}\|\bC_{j-1,1}\bL_{j,1}-\bC_{j-1,2}\bL_{j,2}\|^2_{\mathrm{op}},
\end{align*}
establing the assertion.
\end{proof}

%{\bf For VAR(p) case:}
\newpage 
\section{Additional Figures}

We present some additional figures from the data analysis. 

\begin{figure}[htbp]
\centering
\subfigure{\includegraphics[width = 1\textwidth, trim=1cm 1cm 1cm 0cm, clip=true]{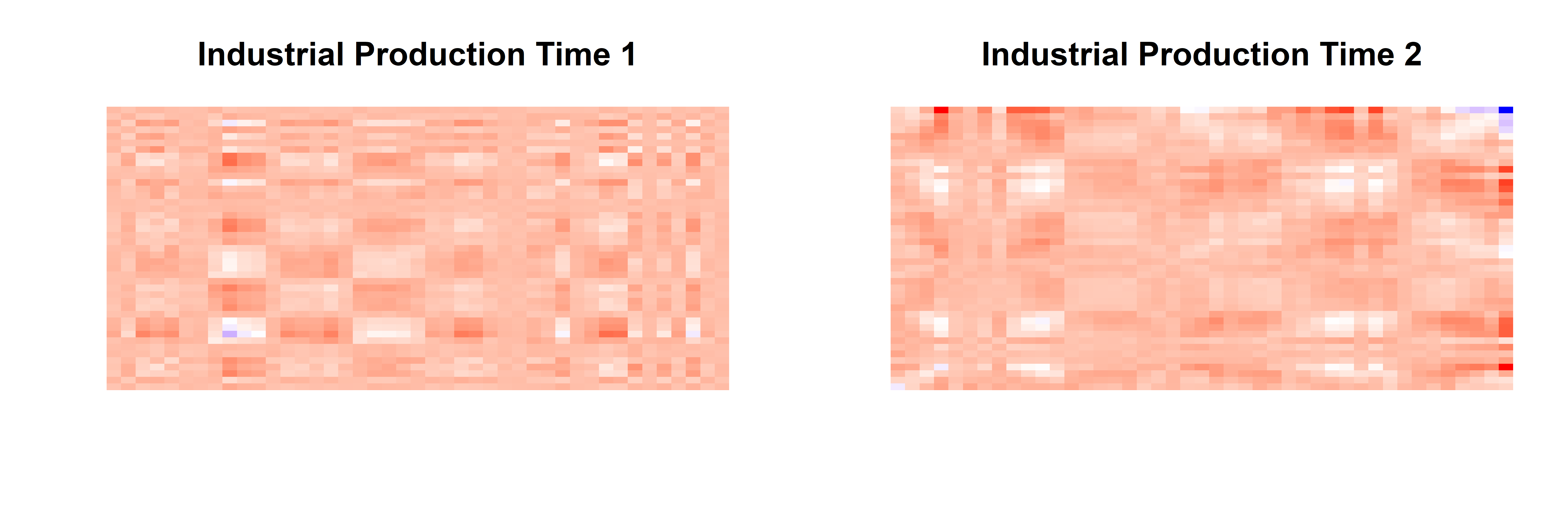}}
\\
\subfigure{\includegraphics[width = 1\textwidth, trim=1cm 1cm 1cm 0cm, clip=true]{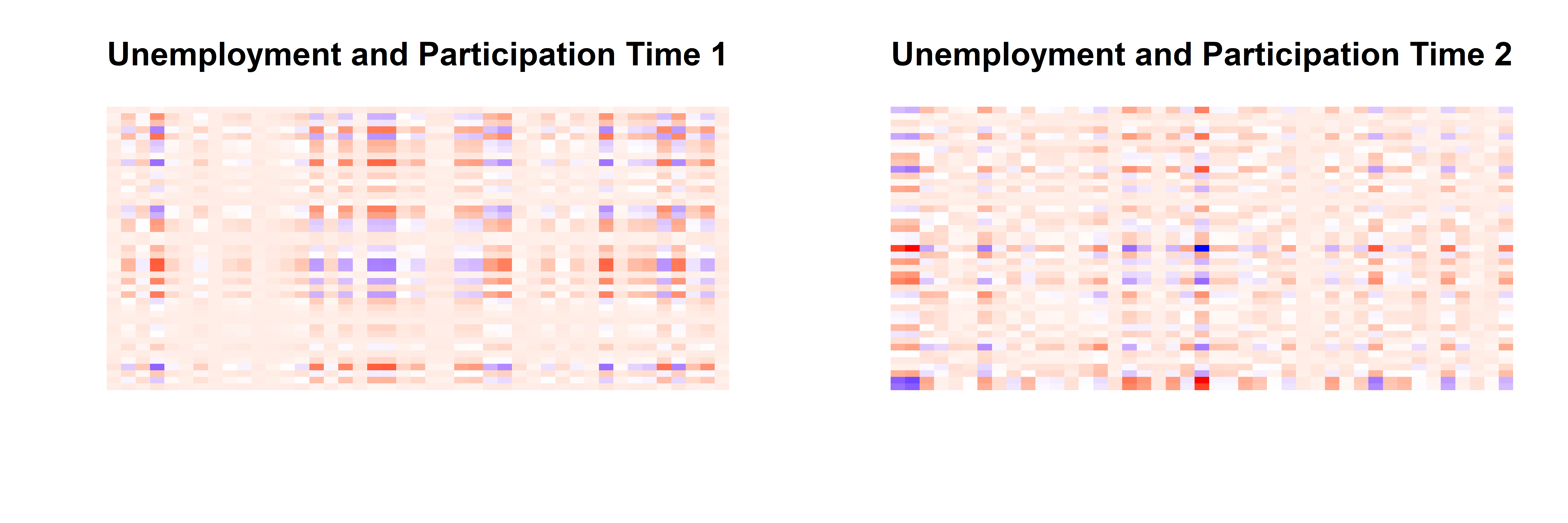}}
\end{figure}

\begin{figure}[htbp]
\subfigure{\includegraphics[width = 1\textwidth, trim=1cm 1cm 1cm 0cm, clip=true]{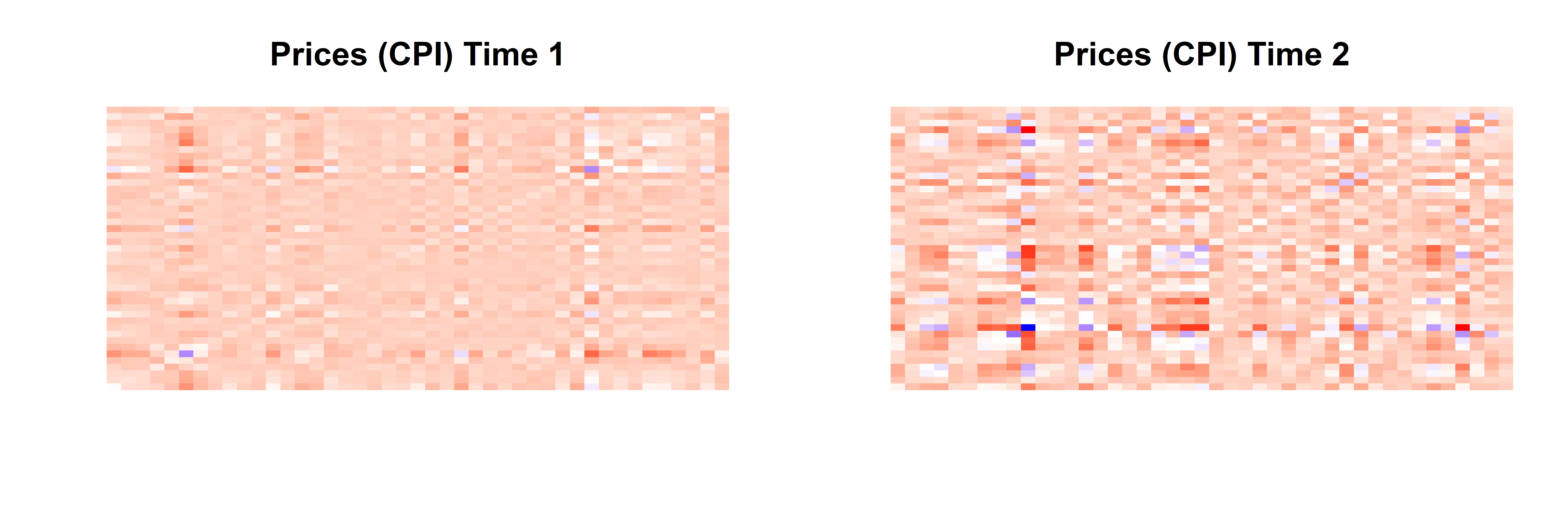}}
\\
\subfigure{\includegraphics[width = 1\textwidth, trim=1cm 1cm 1cm 0cm, clip=true]{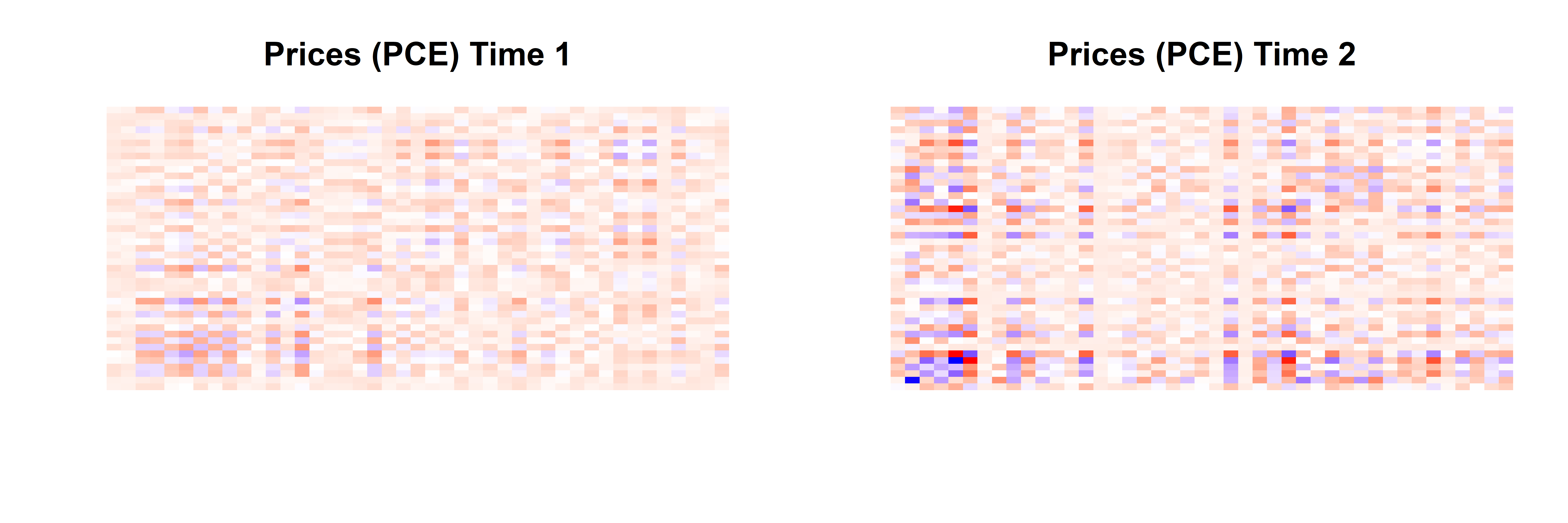}}
\\
\subfigure{\includegraphics[width = 1\textwidth, trim=1cm 1cm 1cm 0cm, clip=true]{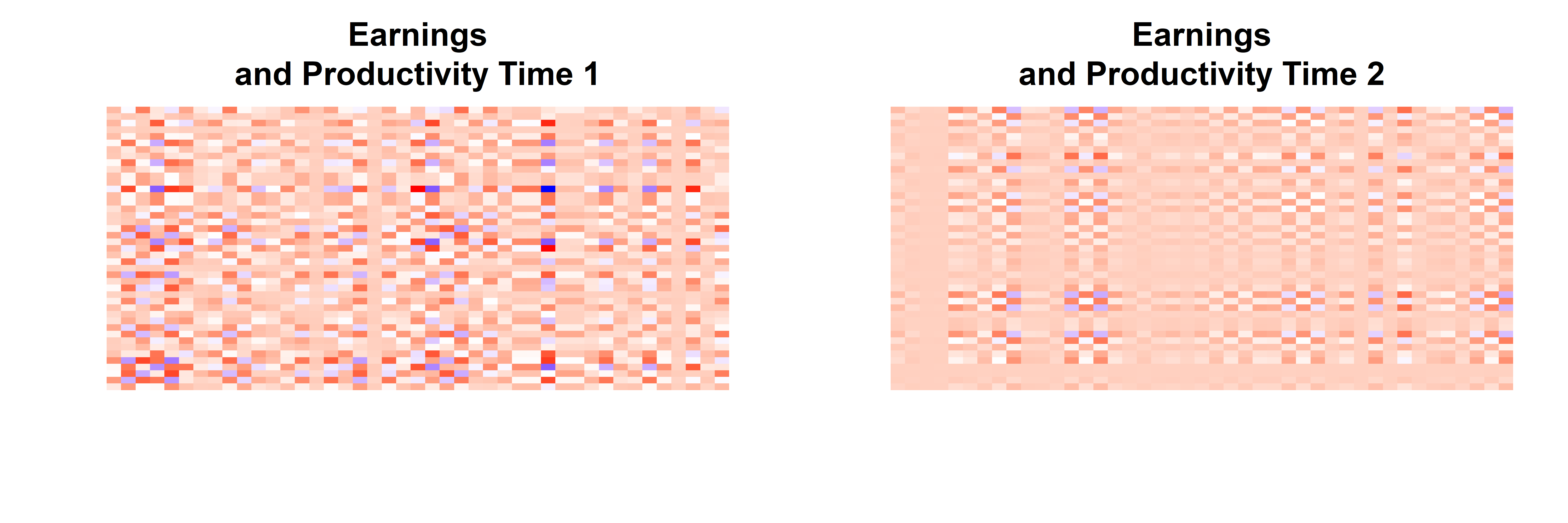}}
\label{fig: precision2}
\end{figure}

\begin{figure}[htbp]
\centering
%\subfigure{\includegraphics[width = 1\textwidth, trim=1cm 1cm 1cm 0cm, clip=true]{Figures/EDAprecisions/8EDAprecision.png}}
%\\
\subfigure{\includegraphics[width = 1\textwidth, trim=1cm 1cm 1cm 0cm, clip=true]{Figures/EDAprecisions/9EDAprecision.png}}
\\
%\subfigure{\includegraphics[width = 1\textwidth, trim=1cm 1cm 1cm 0cm, clip=true]{}}
\subfigure{\includegraphics[width = 1\textwidth, trim=1cm 1cm 1cm 0cm, clip=true]{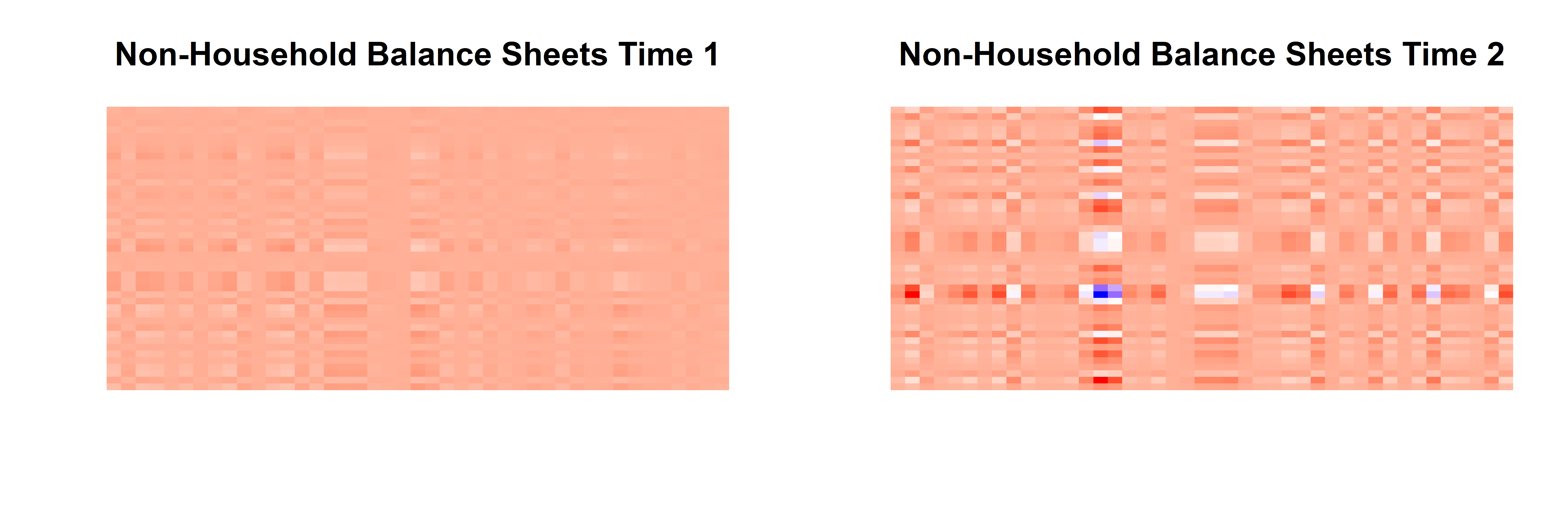}}
%\\
%\subfigure{\includegraphics[width = 1\textwidth, trim=1cm 1cm 1cm 0cm, clip=true]{}}
\label{fig: precision4}
\end{figure}

\begin{figure}[htbp!]
\centering
\subfigure{\includegraphics[width = 0.45\textwidth, trim=1cm 2cm 0cm 0cm, clip=true]{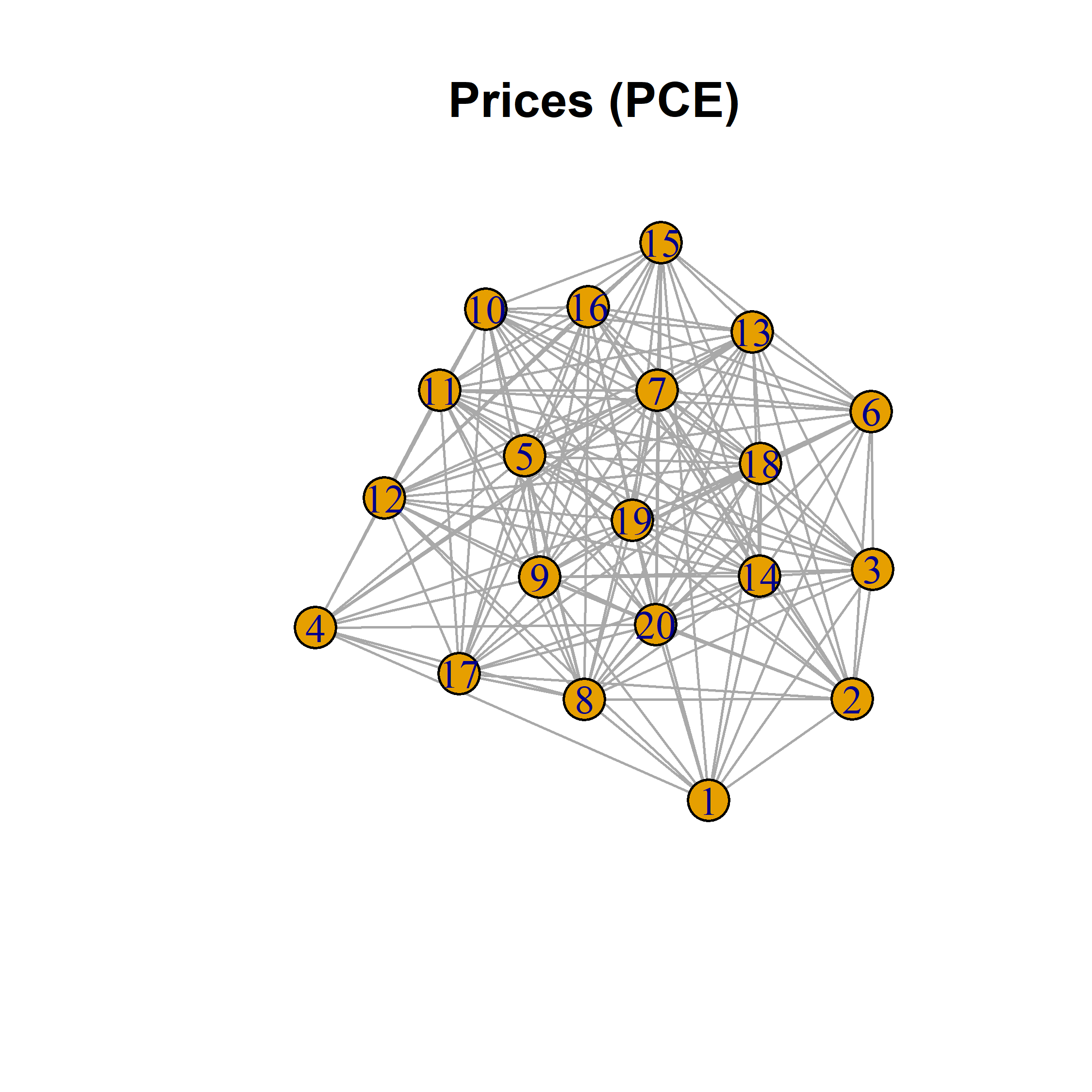}}
\subfigure{\includegraphics[width = 0.45\textwidth, trim=1cm 2cm 0cm 0cm, clip=true]{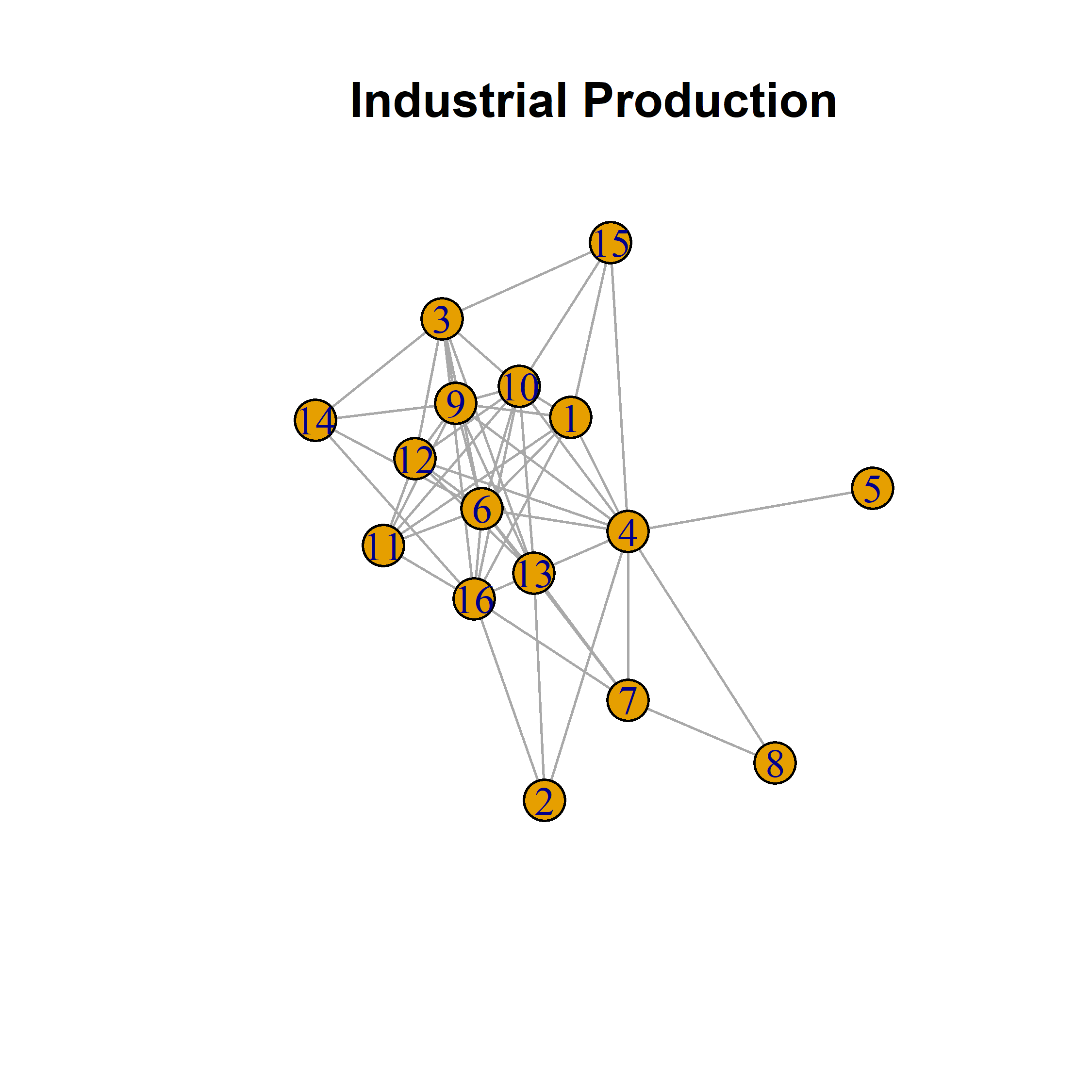}}
\subfigure{\includegraphics[width = 0.45\textwidth, trim=1cm 2cm 0cm 0cm, clip=true]{Figures/changegraphs/7changegraph.png}}
\subfigure{\includegraphics[width = 0.45\textwidth, trim=1cm 2cm 0cm 0cm, clip=true]{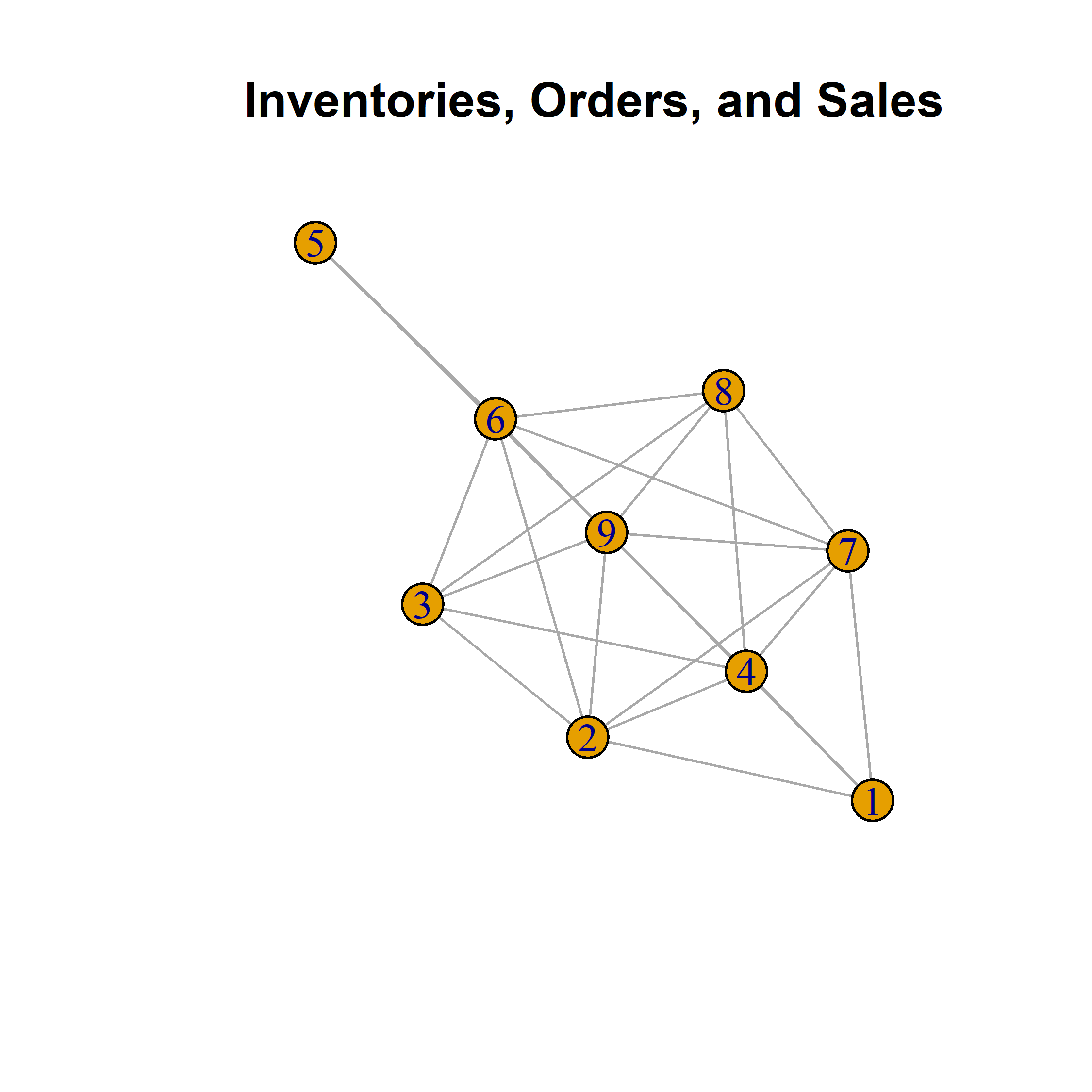}}
\subfigure{\includegraphics[width = 0.45\textwidth, trim=1cm 2cm 0cm 0cm, clip=true]{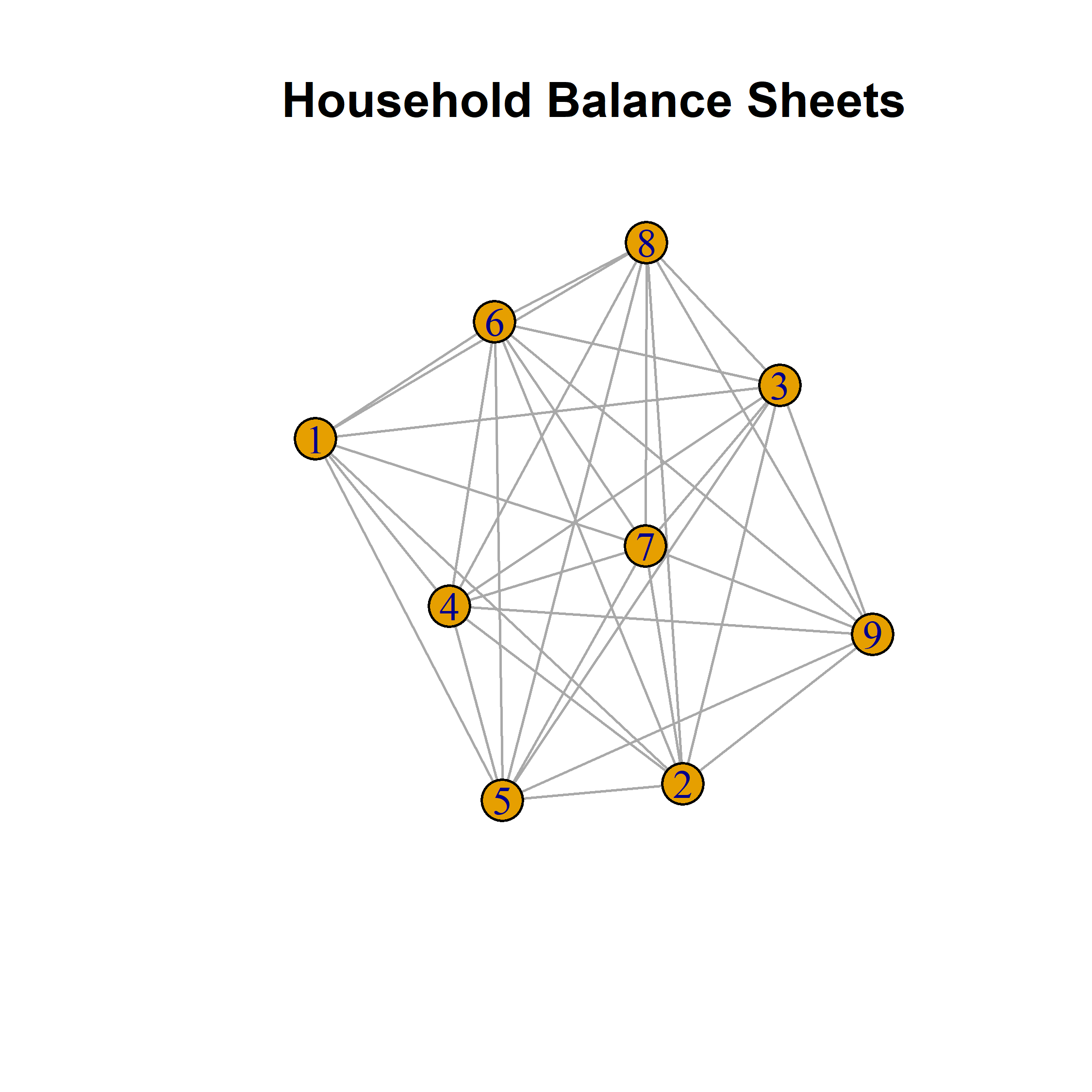}}
\subfigure{\includegraphics[width = 0.45\textwidth, trim=1cm 2cm 0cm 0cm, clip=true]{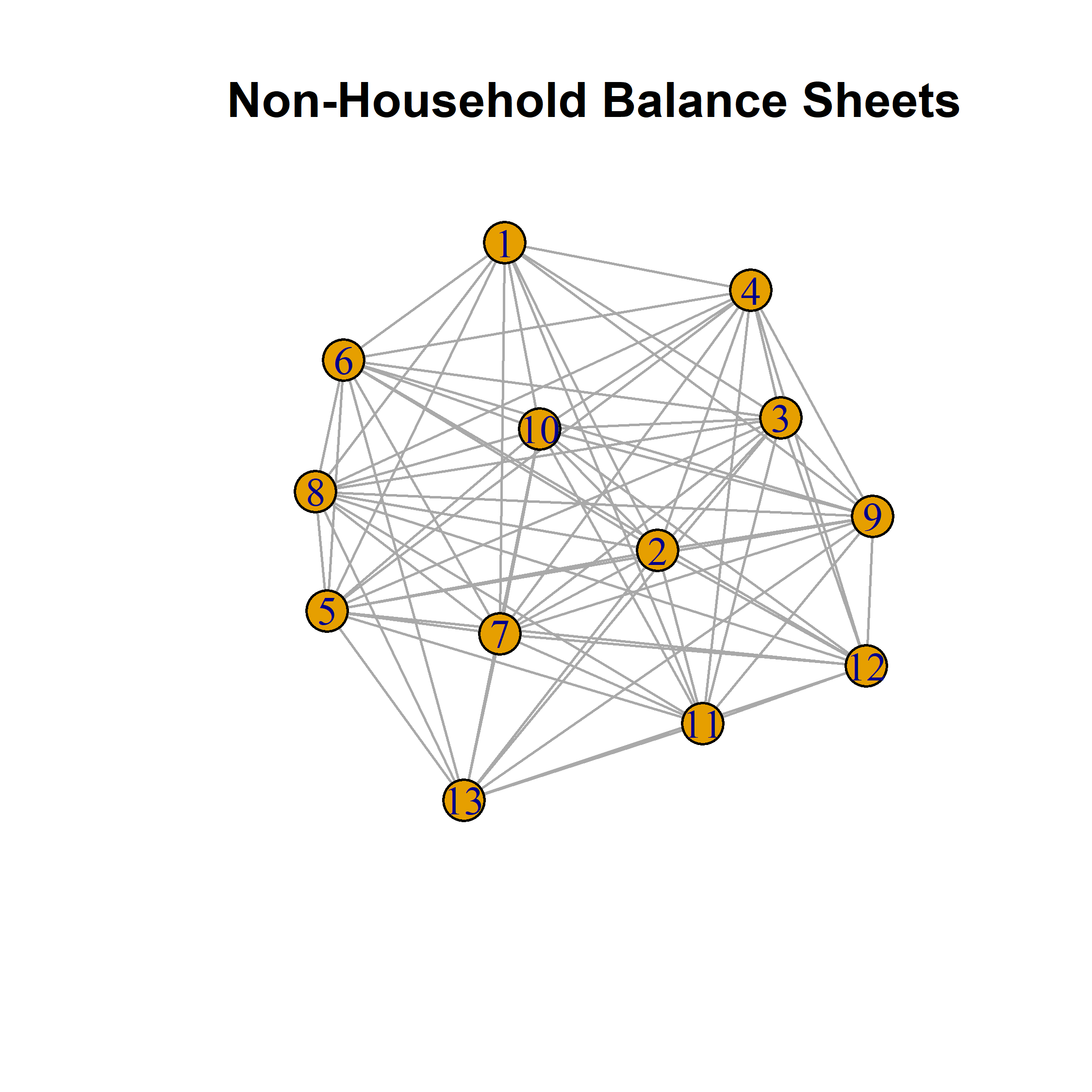}}
\caption{Estimated change networks. The node numbers correspond to the variable names as described in \cite{mccracken2020fred}.}
\label{fig: changegraph}
\end{figure}

\bibliographystyle{plainnat}
	\bibliography{main}

\begin{thebibliography}{41}
\providecommand{\natexlab}[1]{#1}
\providecommand{\url}[1]{\texttt{#1}}
\expandafter\ifx\csname urlstyle\endcsname\relax
  \providecommand{\doi}[1]{doi: #1}\else
  \providecommand{\doi}{doi: \begingroup \urlstyle{rm}\Url}\fi

\bibitem[Angrist and Pischke(2009)]{angrist2009mostly}
Joshua~D Angrist and J{\"o}rn-Steffen Pischke.
\newblock \emph{Mostly Harmless Econometrics: An Empiricist's Companion}.
\newblock Princeton university press, 2009.

\bibitem[Bai et~al.(2023)Bai, Safikhani, and Michailidis]{bai2023multiple}
Peiliang Bai, Abolfazl Safikhani, and George Michailidis.
\newblock Multiple change point detection in reduced rank high dimensional
  vector autoregressive models.
\newblock \emph{Journal of the American Statistical Association}, 118:\penalty0
  2776--2792, 2023.

\bibitem[Ba{\'n}bura et~al.(2010)Ba{\'n}bura, Giannone, and
  Reichlin]{banbura2010large}
Marta Ba{\'n}bura, Domenico Giannone, and Lucrezia Reichlin.
\newblock Large {B}ayesian vector auto regressions.
\newblock \emph{Journal of applied Econometrics}, 25\penalty0 (1):\penalty0
  71--92, 2010.

\bibitem[Barello(2014)]{Barello}
S.~H. Barello.
\newblock Consumer spending and u.s. employment from the 2007–2009 recession
  through 2022.
\newblock \emph{Monthly Labor Review}, October 2014.
\newblock URL \url{https://doi.org/10.21916/mlr.2014.34}.

\bibitem[Basu et~al.(2015)Basu, Shojaie, and Michailidis]{basu2015network}
S.~Basu, A.~Shojaie, and G.~Michailidis.
\newblock Network granger causality with inherent grouping structure.
\newblock \emph{The Journal of Machine Learning Research}, 16\penalty0
  (1):\penalty0 417--453, 2015.

\bibitem[Basu and Rao(2023)]{Basu2023}
Sumanta Basu and Suhasini~Subba Rao.
\newblock {Graphical models for nonstationary time series}.
\newblock \emph{The Annals of Statistics}, 51:\penalty0 1453--1483, 2023.

\bibitem[Bems et~al.(2013)Bems, Johnson, and Yi]{bems2013great}
Rudolfs Bems, Robert~C Johnson, and Kei-Mu Yi.
\newblock The great trade collapse.
\newblock \emph{Annu. Rev. Econ.}, 5:\penalty0 375--400, 2013.

\bibitem[Bhattacharya and Dunson(2011)]{bhattacharya2011sparse}
Anirban Bhattacharya and David~B Dunson.
\newblock Sparse {B}ayesian infinite factor models.
\newblock \emph{Biometrika}, 98\penalty0 (2):\penalty0 291--306, 2011.

\bibitem[Brockwell and Davis(2009)]{brockwell2009time}
Peter~J Brockwell and Richard~A Davis.
\newblock \emph{Time series: Theory and Methods}.
\newblock Springer science \& Business Media, 2009.

\bibitem[Chan(2023)]{chan2023large}
Joshua~CC Chan.
\newblock Large hybrid time-varying parameter vars.
\newblock \emph{Journal of Business \& Economic Statistics}, 41:\penalty0
  890--905, 2023.

\bibitem[Chhikara(1988)]{chhikara1988inverse}
Raj Chhikara.
\newblock \emph{The Inverse {G}aussian Distribution: Theory, Methodology, and
  Applications}, volume~95.
\newblock CRC Press, 1988.

\bibitem[Csárdi et~al.(2024)Csárdi, Nepusz, Traag, Horvát, Zanini, Noom, and
  Müller]{igraphR}
Gábor Csárdi, Tamás Nepusz, Vincent Traag, Szabolcs Horvát, Fabio Zanini,
  Daniel Noom, and Kirill Müller.
\newblock \emph{{igraph}: Network Analysis and Visualization in R}, 2024.
\newblock URL \url{https://CRAN.R-project.org/package=igraph}.
\newblock R package version 2.0.3.

\bibitem[Dahlhaus(2000)]{Dahlhaus2000}
R.~Dahlhaus.
\newblock Graphical interaction models for multivariate time series.
\newblock \emph{Metrika}, 51:\penalty0 157–--172, 2000.

\bibitem[Fenn et~al.(2009)Fenn, Porter, McDonald, Williams, Johnson, and
  Jones]{fenn2009dynamic}
Daniel~J Fenn, Mason~A Porter, Mark McDonald, Stacy Williams, Neil~F Johnson,
  and Nick~S Jones.
\newblock Dynamic communities in multichannel data: An application to the
  foreign exchange market during the 2007--2008 credit crisis.
\newblock \emph{Chaos: An interdisciplinary journal of nonlinear science}, 19,
  2009.

\bibitem[Fiecas et~al.(2019)Fiecas, Leng, Liu, and Yu]{fiecas2019spectral}
M.~Fiecas, C.~Leng, W.~Liu, and Y~Yu.
\newblock Spectral analysis of high-dimensional time series.
\newblock \emph{Electronic Journal of Statistics}, 13\penalty0 (2):\penalty0
  4079--4101, 2019.

\bibitem[Friedman et~al.(2008)Friedman, Hastie, and
  Tibshirani]{friedman2008sparse}
Jerome Friedman, Trevor Hastie, and Robert Tibshirani.
\newblock Sparse inverse covariance estimation with the graphical lasso.
\newblock \emph{Biostatistics}, 9\penalty0 (3):\penalty0 432--441, 2008.

\bibitem[Ghosal and van~der Vaart(2017)]{GhosalBook}
Subhashis Ghosal and Aad van~der Vaart.
\newblock \emph{Fundamentals of {N}onparametric {B}ayesian {I}nference},
  volume~44 of \emph{Cambridge Series in Statistical and Probabilistic
  Mathematics}.
\newblock Cambridge University Press, Cambridge, 2017.
\newblock ISBN 978-0-521-87826-5.
\newblock \doi{10.1017/9781139029834}.
\newblock URL \url{https://doi-org.prox.lib.ncsu.edu/10.1017/9781139029834}.

\bibitem[Ghosh et~al.(2019)Ghosh, Khare, and Michailidis]{ghosh2019high}
Satyajit Ghosh, Kshitij Khare, and George Michailidis.
\newblock High-dimensional posterior consistency in {B}ayesian vector
  autoregressive models.
\newblock \emph{Journal of the American Statistical Association}, 114\penalty0
  (526):\penalty0 735--748, 2019.

\bibitem[Gorton(2009)]{gorton2009subprime}
Gary Gorton.
\newblock The subprime panic.
\newblock \emph{European Financial Management}, 15:\penalty0 10--46, 2009.

\bibitem[Haario et~al.(2001)Haario, Saksman, and Tamminen]{haario2001adaptive}
Heikki Haario, Eero Saksman, and Johanna Tamminen.
\newblock An adaptive {M}etropolis algorithm.
\newblock \emph{Bernoulli}, 7\penalty0 (2):\penalty0 223--242, 2001.

\bibitem[Heaps(2023)]{heaps2023enforcing}
Sarah~E Heaps.
\newblock Enforcing stationarity through the prior in vector autoregressions.
\newblock \emph{Journal of Computational and Graphical Statistics}, 32\penalty0
  (1):\penalty0 74--83, 2023.

\bibitem[Jung et~al.(2015)Jung, Hannak, and Goertz]{jung2015graphical}
A.~Jung, G.~Hannak, and N.~Goertz.
\newblock Graphical lasso based model selection for time series.
\newblock \emph{IEEE Signal Processing Letters}, 22\penalty0 (10):\penalty0
  1781--1785, 2015.

\bibitem[Kitsos et~al.(2023)Kitsos, Grabner, and
  Carrascal-Incera]{kitsos2023industrial}
Tasos Kitsos, Simone~Maria Grabner, and Andre Carrascal-Incera.
\newblock Industrial embeddedness and regional economic resistance in europe.
\newblock \emph{Economic Geography}, 99:\penalty0 227--252, 2023.

\bibitem[Koller and Friedman(2009)]{koller2009probabilistic}
Daphne Koller and Nir Friedman.
\newblock \emph{Probabilistic Graphical Models - Principles and Techniques.}
\newblock MIT Press, 2009.

\bibitem[Lauritzen(1996)]{Lauritzen1996}
Steffen~L. Lauritzen.
\newblock \emph{Graphical Models}.
\newblock Oxford, 1996.

\bibitem[L{\"u}tkepohl(2005)]{lutkepohl2005new}
Helmut L{\"u}tkepohl.
\newblock \emph{New Introduction to Multiple Time Series Analysis}.
\newblock Springer Science \& Business Media, 2005.

\bibitem[Ma and Michailidis(2016)]{ma2016joint}
Jing Ma and George Michailidis.
\newblock Joint structural estimation of multiple graphical models.
\newblock \emph{The Journal of Machine Learning Research}, 17:\penalty0
  5777--5824, 2016.

\bibitem[Maathuis et~al.(2019)Maathuis, Drton, Lauritzen, and
  Wainwright]{Maathuis2019}
Marloes Maathuis, Mathias Drton, {Steffen L.} Lauritzen, and Martin Wainwright,
  editors.
\newblock \emph{Handbook of Graphical Models}.
\newblock Handbooks of Modern Statistical Methods. CRC Press, 2019.

\bibitem[McCracken and Ng(2020)]{mccracken2020fred}
Michael McCracken and Serena Ng.
\newblock {FRED-QD: A} quarterly database for macroeconomic research.
\newblock Technical report, National Bureau of Economic Research, 2020.

\bibitem[Qiu et~al.(2016)Qiu, Han, Liu, and Caffo]{qiu2016joint}
Huitong Qiu, Fang Han, Han Liu, and Brian Caffo.
\newblock Joint estimation of multiple graphical models from high dimensional
  time series.
\newblock \emph{Journal of the Royal Statistical Society Series B: Statistical
  Methodology}, 78\penalty0 (2):\penalty0 487--504, 2016.

\bibitem[Ramey(2019)]{ramey2019ten}
Valerie~A Ramey.
\newblock Ten years after the financial crisis: What have we learned from the
  renaissance in fiscal research?
\newblock \emph{Journal of Economic Perspectives}, 33:\penalty0 89--114, 2019.

\bibitem[Roy et~al.(2019)Roy, Mcelroy, and Linton]{roy2019constrained}
Anindya Roy, Tucker~S Mcelroy, and Peter Linton.
\newblock Constrained estimation of causal invertible {VARMA}.
\newblock \emph{Statistica Sinica}, 29\penalty0 (1):\penalty0 455--478, 2019.

\bibitem[Roy et~al.(2024)Roy, Roy, and Ghosal]{roy2024bayesian}
Arkaprava Roy, Anindya Roy, and Subhashis Ghosal.
\newblock Bayesian inference for high-dimensional time series by latent process
  modeling.
\newblock \emph{arXiv preprint arXiv:2403.04915}, 2024.

\bibitem[Samadi and Herath(2024)]{samadi2024reduced}
S~Yaser Samadi and HM~Wiranthe~B Herath.
\newblock Reduced-rank envelope vector autoregressive model.
\newblock \emph{Journal of Business \& Economic Statistics}, 42:\penalty0
  918--932, 2024.

\bibitem[Shi et~al.(2021)Shi, Ghosal, and Martin]{shi2021bayesian}
Wenli Shi, Subhashis Ghosal, and Ryan Martin.
\newblock Bayesian estimation of sparse precision matrices in the presence of
  {G}aussian measurement error.
\newblock \emph{Electronic Journal of Statistics}, 15\penalty0 (2):\penalty0
  4545--4579, 2021.

\bibitem[Vazzoler(2021)]{sparsevarR}
Simone Vazzoler.
\newblock \emph{sparsevar: Sparse VAR/VECM Models Estimation}, 2021.
\newblock R package version 0.1.0.

\bibitem[Wang et~al.(2022)Wang, Zheng, Lian, and Li]{wang2022high}
Di~Wang, Yao Zheng, Heng Lian, and Guodong Li.
\newblock High-dimensional vector autoregressive time series modeling via
  tensor decomposition.
\newblock \emph{Journal of the American Statistical Association}, 117\penalty0
  (539):\penalty0 1338--1356, 2022.

\bibitem[Wang(2019)]{wang2019housing}
Kyungsoon Wang.
\newblock Housing market resilience: Neighbourhood and metropolitan factors
  explaining resilience before and after the us housing crisis.
\newblock \emph{Urban Studies}, 56:\penalty0 2688--2708, 2019.

\bibitem[Weinberg(2013)]{Fred}
John Weinberg.
\newblock The great recession and its aftermath.
\newblock
  \url{https://www.federalreservehistory.org/essays/great-recession-and-its-aftermath},
  2013.

\bibitem[Zhang and Wu(2017)]{p:zha-17}
D.~Zhang and W.~B. Wu.
\newblock Gaussian approximation for high dimensional time series.
\newblock \emph{The Annals of Statistics}, 45:\penalty0 1895--1919, 2017.

\bibitem[Zheng(2024)]{zheng2024interpretable}
Yao Zheng.
\newblock An interpretable and efficient infinite-order vector autoregressive
  model for high-dimensional time series.
\newblock \emph{Journal of the American Statistical Association}, pages 1--14,
  2024.

\end{thebibliography}

\end{document}